\newtheorem{theorem}{Theorem}
\newtheorem{lemma}{Lemma}
\newtheorem{corollary}{Corollary}
\definecolor{LinkColor}{rgb}{0.256,0.439,0.588}
\newcommand{\RNum}[1]{\uppercase\expandafter{\romannumeral #1\relax}}
\newcommand{\Rom}[1]{ \uppercase\expandafter{\romannumeral#1}}
\newcommand{\kk}{{\bm{\mathrm{k}}}}
\newcommand{\rr}{{\bm{r}}}
\newcommand{\bb}{{\bm{b}}}
\newcommand{\ssigma}{\bm{\sigma}}
\newcommand{\ggamma}{\bm{\gamma}}
\newcommand{\ttau}{\bm{\tau}}
\newcommand{\nn}{\hat{\bm{n}}}
\newcommand{\cob}{}
\newcommand{\cog}{\color{darkblue}}
\newcommand{\zz}{\mathbbm{Z}}
\newcommand{\ee}{\bm{e}}
\newcommand{\dd}{{\rm{d}}}
\definecolor{darkblue}{rgb}{0.15,0.25,0.6}
\definecolor{ZYcolor}{rgb}{0.1,0.5,0.4}
\definecolor{darkred}{rgb}{0.65,0.2,0.15}
\begin{document}
%\title{Classification of intrinsic topological superconductors jointly protected by time-reversal and point-group symmetries in three dimensions}
\title{Topological classification of intrinsic 3D superconductors using anomalous surface construction}
\author{Zhongyi Zhang}
\thanks{These author contributed equally to this study.}
\affiliation{Beijing National Laboratory for Condensed Matter Physics and Institute of Physics, Chinese Academy of Sciences, Beijing 100190, China}
\affiliation{University of Chinese Academy of Sciences, Beijing 100049, China}
\author{Jie Ren}
\thanks{These author contributed equally to this study.}
\affiliation{Beijing National Laboratory for Condensed Matter Physics and Institute of Physics, Chinese Academy of Sciences, Beijing 100190, China}
\affiliation{University of Chinese Academy of Sciences, Beijing 100049, China}
\author{Yang Qi}
\affiliation{Center for Field Theory and Particle Physics, Department of Physics, Fudan University, Shanghai 200433, China}
\affiliation{State Key Laboratory of Surface Physics, Fudan University, Shanghai 200433, China}
\author{Chen Fang}
\email{cfang@iphy.ac.cn}
\affiliation{Beijing National Laboratory for Condensed Matter Physics and Institute of Physics, Chinese Academy of Sciences, Beijing 100190, China}
\affiliation{Songshan Lake Materials Laboratory, Dongguan, Guangdong 523808, China}
\affiliation{Kavli Institute for Theoretical Sciences, Chinese Academy of Sciences, Beijing 100190, China}

\begin{abstract}
Intrinsic topological superconductors have protected gapless Majorana modes, bound and/or propagating, at the natural boundaries of the sample, without requiring field, defect, or heterostructure.
We establish the complete classification/construction of intrinsic topological superconductors jointly protected by point-group and time-reversal symmetries in three dimensions.
This is obtained from enumerating distinct ways for stacking \textit{$n$th-order irreducible building blocks}, minimal anomalous surface states of $n$th-order topological superconductors.
Particularly, our method provides a unified description of possible surface anomalies
away from high-symmetry points/lines in terms of the homotopy group of the surface mass field.
\end{abstract}
\maketitle

%\paragraph{Introduction ---}
\section*{Introduction}
Topological superconductors (TSC) exhibit topological edge modes \cite{qi2011topological,RevModPhys.88.035005,RevModPhys.88.021004,RevModPhys.82.3045,doi:10.1146/annurev-conmatphys-031214-014501,Sato_2017,alicea2012new,PhysRevB.88.155420,PhysRevLett.115.127003,PhysRevLett.117.047001}, also called the Majorana modes \cite{PhysRevB.81.045120,ryu2010topological,Kitaev_2001,PhysRevB.61.10267,RevModPhys.75.657,PhysRevX.9.011033,PhysRevLett.100.096407,PhysRevLett.105.077001,PhysRevLett.105.177002,PhysRevB.93.224505,PhysRevLett.107.097001,PhysRevLett.111.047006,PhysRevResearch.2.043300}, at the physical edges or topological defects.
These Majorana modes are believed to have non-Abelian braiding property, an attribute wanted for fault-tolerant quantum computation \cite{PhysRevB.100.054513,PhysRevB.101.085401,PhysRevResearch.2.032068,RevModPhys.80.1083,PhysRevLett.104.040502,PhysRevLett.105.046803,PhysRevX.5.041038,Lian10938}.
When the topology comes from the nontrivial structure of the pairing amplitude on the Fermi surface \cite{qi2010topological,qi2013axion,PhysRevB.81.220504,PhysRevLett.105.097001}, the TSC is intrinsic \cite{PhysRevB.98.165144,PhysRevLett.119.246401,PhysRevB.88.075142,wu2020pursuit,Kitaev_2001,RevModPhys.75.657,maeno2011evaluation,Kallin_2016,jiao2020chiral,PhysRevB.82.184516,li2020artificial,li2021higher,scammell2021intrinsic}, as opposed to extrinsic TSC where external factors, such as fields or defects (vortices), are necessary \cite{PhysRevB.82.115120,PhysRevLett.104.040502,PhysRevLett.105.077001,PhysRevLett.105.177002,PhysRevB.81.125318}.
Intrinsic TSCs have Majorana modes that appear at the natural physical boundaries of the sample and, as such, are more suitable for devices and applications.
However, their existence is believed to be even rarer than extrinsic TSC, and while there have been theoretical proposals, an experimental discovery is yet to be made \cite{10.1093/nsr/nwab087,hao2019topological,wang2020evidence,Vaitiekenaseaav3392,nadj2014observation,PhysRevLett.114.017001,PhysRevX.8.041056,kong2019half,machida2019zero,PhysRevLett.107.217001,kong2020tunable,das2012zero,chen2019observation,PhysRevLett.121.196801}.
(Below the adjective ``intrinsic'' is suppressed, but always implied, before ``TSC''.)

The scarcity of TSC is partly due to our comparatively less developed theory for their classification.
Let us use the example of the progress of topological normal (non-superconducting) states to illustrate this point.
These states were at first believed to be rare because, at that time, the definition was restricted to topological insulators protected by time-reversal symmetry \cite{RevModPhys.88.035005,RevModPhys.82.3045,RevModPhys.88.021004}.
However, the theories for topological semimetals \cite{PhysRevB.83.205101,PhysRevLett.108.046602,PhysRevLett.108.140405,PhysRevB.84.235126,PhysRevLett.108.266802,huang2015weyl,PhysRevB.85.195320,PhysRevB.92.081201} and topological crystalline insulators (second-order topological insulators) \cite{khalaf2018symmetry,PhysRevLett.119.246402,schindler2018higher,fang2015new,PhysRevLett.106.106802,PhysRevB.78.045426,hsieh2012topological,wang2016hourglass,kruthoff2017topological,po2017symmetry,van2018higher,PhysRevLett.111.056402} have greatly enhanced the family of topological normal states, such that many materials previously considered trivial were ``reinstated'' as topological \cite{zhang2019catalogue,vergniory2019complete,tang2019comprehensive}.
While TSC protected by time-reversal and by particle-hole symmetry have been classified in the "tenfold way" \cite{schnyder2008classification}, a general classification of TSC protected by crystalline symmetries is still incomplete \cite{PhysRevB.99.075105,song2019topological,geier2020symmetry,cornfeld2021tenfold,PhysRevB.102.180505,PhysRevB.90.165114,trifunovic2019higher}.
We expect that, similar to the scenario in topological normal states, such classification will motivate the discovery of new topology in superconductors so far considered trivial.

In this Letter, we establish a new scheme for constructing and classifying all topological superconductors jointly protected by time reversal and crystallographic point groups $G_{c}$ in three dimensions.
%Our classification is based on the observation that any type of anomalous surface modes of TSCs can be constructed by stacking multiple copies of ``$n$th-order irreducible building blocks'' and labeled by the homotopy classes.
Our classification is based on the observation that any type of anomalous surface mode of TSC can be constructed from multiple copies of the Majorana cones, which are the minimal anomalous elements appearing in TSC with time-reversal symmetry (DIII class in Altland-Zirnbauer classification).
Specifically, Majorana cones are used to construct \textit{$n$th-order building blocks}, which can generate all $n$th-order TSC.
%Specifically, Majorana cones are used to construct effective models for $n$th-order TSC, that are, \textit{$n$th-order building blocks}, under certain symmetry constraints.
These surface Majorana cones furnish a projective co-representation of the symmetry group, the factor system of which depends on the pairing symmetry.
%The symmetry constraints act on the surface Hilbert space as a projective co-representation, the factor system of which depends on the pairing symmetry.
In general, the symmetric mass field can be added to gap out Majorana cones on the surface except for some gapless points and lines.
Some point-group symmetries can enforce the gaplessness of the mass field at high-symmetry points and lines, while other symmetries may result in gapless points and lines that can be gapped locally but cannot be removed globally.
The latter case can be detected by the zeroth/first homotopy classes of the mass field.
By the virtue of the bulk-edge correspondence, each distinct configuration of the mass field uniquely corresponds to a TSC in the bulk and thus constitutes an element in the classification group \cite{Fangeaat2374,khalaf2018symmetry,PhysRevB.92.081304,PhysRevB.97.115153,PhysRevB.96.205106,xiong2018minimalist,PhysRevX.8.011040,PhysRevX.7.011020}.
%Our theory provides a unified framework to describe the surface Majorana modes in class DIII of the ten-fold Altland-Zirnbauer (AZ) symmetry classification.
%Furthermore, it predicts exotic mass field of third-order TSCs and we explicitly construct a TSC state protected by inversion symmetry as an illustration [Fig.~\ref{main text exotic mass pattern I}(b)].
This surface-homotopy-class perspective, for the first time, gives a unified and rigorous understanding of the edge states in third-order TSC.
%Our theory also gives the complete picture of third-order TCSCs for the first time.
%As an illustration, we explicitly construct a TCSC model protected by inversion symmetry [Fig.~\ref{main text exotic mass pattern I}(b)].

%\paragraph{Surface Majorana Cone ---}
\section{Surface Majorana Cone}
%The class-DIII TSC with time-reversal, particle-hole, and chiral symmetries (denoted as $T,P,S$) has a $\mathbbm Z$ classification, of which the minimal model with $z=\pm 1$ has a Majorana cone on the surface.
The minimal anomalous surface state in class-DIII is a Majorana cone protected by time-reversal, particle-hole, and chiral symmetries (denoted as $T,P,S$), which has a $\mathbbm Z$ classification.
Since any surface termination of TSC can be regarded as a tangent plane of $S^2$, we can study the surface theory of TSC on the boundary of a fictitious sphere in the limit of infinite, where local Hilbert space of planar momentum and pseudospin degrees of freedom is defined on the tangent plane at every $\rr\in S^2$ \cite{SM}.
The effective theory of the Majorana cone can be expressed as $h_{z=\pm1}(\kk;\rr)=\pm(\kk\times\nn_\rr)\cdot\ssigma$, where $\kk$ is the in-plane momentum, $\nn_\rr$ is the unit normal vector of tangent plane, and $\ssigma$ are Pauli matrices in spin space.
The symmetries $T,P,S$ restricted on the surface are $T=i\sigma_yK,$ $ P_{\rr}^{{z=\pm 1}}=\pm\nn_\rr\cdot\ssigma\sigma_yK,$ $S_{\rr}^{{z=\pm 1}}=\pm\nn_\rr\cdot\ssigma$.
A point-group action $g$ connects the surface Hilbert space at $\rr$ with that at $g\rr$, imposing a $v_gh_\pm(\kk;\rr)v_g^\dagger=h_\pm(g \kk;g\rr)$ restriction to the local Hamiltonian, where $v(g)$ is the fundamental representation of $\mathrm{SU}(2)$ restricted to $G_c$.

%\rule{0.45\textwidth}{0.5pt}
%\paragraph{Symmetry Group and Pairing Symmetry ---}
\section{Symmetry Group and Pairing Symmetry}
The action of the the symmetry group $G=G_c\times Z_2^T\times Z_2^S$ on the surface Hilbert space forms a projective co-representation of $G$.
The commutation relation between two point-group symmetry representations is the same as that of the double-valued representation of $G_c$.
A natural gauge can be chosen such that the time-reversal symmetry commutes with all other symmetries.
The commutation relation between point-group symmetry $U_\rr(g)$ (the representation defined on the Hilbert space at $\rr$) and chiral symmetry $S_\rr$ depends on the pairing symmetry $\chi_g$, i.e., the sign change of the pairing potential $\Delta(\kk)$ under point-group symmetries \cite{SM}:
\begin{equation}\label{main text surface S and g}
U_\rr(g)S_\rr - \chi_g\det R_gS_{g\rr} U_\rr(g)=0,
\end{equation}
where $R_g$ is the fundamental representation matrix for $g$ of $O(3)$.
The term $\det R_g$ arises from the fact that chirality $S_\rr$ is a pseudoscalar, and $\chi_g=\pm1$ if $\Delta(\kk)$ is even/odd under $g$ \cite{footnote}.

%\paragraph{$N$th-order Surface State with Point Group --- }
\section{Nth-order Surface Mass Field}
The first-order surface state of TSC with winding number $z_w$ can be obtained by stacking $|z_w|$ Majorana cones.
However, the point-group symmetries may prohibit any superconductor with non-zero winding number due to the constraint $z_w=\chi_g\det{R_g}\  z_w$.
For example, if $\chi_{g^\prime}=-\det R_{g^\prime}$ for any $g^\prime\in G_c$, the winding number must be zero, indicating (first-order) topologically trivial surface states.

In order to construct the higher-order surface states of TSC, the first-order topology must be trivialized.
We achieve this by stacking multiple copies of Majorana cones with zero net winding number: $H_{2L}(\kk;\rr) = \oplus_Lh_+\oplus_Lh_-$.
%We consider the $L\rightarrow \infty$ limit corresponding to being topological stable against adding trivial bands.
There exists a canonical form of the symmetry representation:
\begin{equation}
\begin{aligned}
T=&\begin{pmatrix}\mathbbm{1}_L& \\ & \mathbbm{1}_L \\\end{pmatrix}\otimes i\sigma_y K,\\
S=&\begin{pmatrix}\mathbbm{1}_L& \\ & -\mathbbm{1}_L \\\end{pmatrix}\otimes \nn\cdot \ssigma,
U_\rr(g)=F(g)\otimes u_\rr(g),
\end{aligned}
\end{equation}
where $\mathbbm{1}$ is $L$-dimensional identity matrix, $g$ the group element in $G_c$, and $u_\rr(g)$ the representations in local surface Hilbert space \cite{SM}.
As the actions on the spin space has been fixed, we will focus on the representations in the $2L$ flavor degrees of freedom, where
\begin{equation}
H_F=\tau_z\otimes\mathbbm{1},\ F(T)=\tau_0\otimes\mathbbm{1}K,\ F(S)=\tau_z\otimes\mathbbm{1},
\end{equation}
and $F(g)$ constitute a linear representation of $G_c$.
Here, $\bm{\tau}$ are the Pauli matrices in the flavor space.
Note that in the flavor space, $[F(T)]^2=+1$ and $F(g)F(S)=\chi_g\det R_g F(S)F(g)$.
The $T$-, $S$-symmetric mass field on the $\nn_\rr$-surface has a general form
\begin{equation}\label{main mass field}
 M_\rr=\begin{pmatrix}
      0&m_{\rr}\\
      m_{\rr}^T &0\\
    \end{pmatrix},\ m_{\rr}=m^*_{\rr}.
\end{equation}
The spectrum is gapped at $\rr$ point if and only if $m_\rr\in \mathrm{GL}(L,\mathbbm{R})$, i.e. invertible.
%A generic point on the sphere can be gapped by any invertible matrix $m_\rr$ since there is no on-site symmetry except time-reversal and particle-hole. 
However, the point-group symmetries may force the mass field to be singular (non-invertible) at certain points or lines.
To see this, consider the constraint given by the invariance of mass field under spatial symmetries \cite{SM}:
\begin{equation}\label{main mass fields constraints}
[F(g)\tau_z^{(\chi_g-1)/2}]  M_\rr [\tau_z^{(1-\chi_g)/2}F^{-1}(g)] =  M_{g\cdot \rr}.
\end{equation}

\begin{table}[t]
\renewcommand{\arraystretch}{1.3}
\LTcapwidth=0.47\textwidth
\caption{The second/third-order topological invariants $v_{2/3}(g)$ protected by single point-group symmetry.
The nontrivial topological invariants marked in black/blue indicate the appearance of surface gapless modes at high-symmetry sub-manifold/generic points.
%The $\bar{\mathbbm Z}_2$ is the invariant of zeroth/first homotopy group.
%The nontrivial elements in blue $\mathbbm Z_2$ are equivalent to two-dimensional "patch works" instead of intrinsic third-order surface states, as shown in Supplementary Materials \cite{SM}.
}
\label{main text invariant table}
\centering
\vspace{0.2cm}
\begin{tabular}{c|c|c|c|c|c|c|c|c|c|c|c|c|c}
    \hline\hline
    $g$&\multicolumn{2}{c|}{$C_{2}$}&$C_3$&\multicolumn{2}{c|}{$C_{4}$}&\multicolumn{2}{c|}{$C_{6}$}&\multicolumn{2}{c|}{$M$}&$I$&$S_3$&$S_4$&$S_6$\\\hline
    $\chi_g$&$+1$&$-1$&$+1$&$+1$&$-1$&$+1$&$-1$&$+1$&$-1$&$\pm1$&$\pm1$&$\pm1$&$\pm1$\\\hline
    $ v_{2}(g)$&\multicolumn{2}{c|}{$\cog{ {\mathbbm Z}_2}$}&$\cog{ {\mathbbm Z}_2}$&\multicolumn{2}{c|}{$\cog{ {\mathbbm Z}_2}$}&\multicolumn{2}{c|}{$\cog{ {\mathbbm Z}_2}$}&${\emptyset}$&$\cob{\mathbbm Z}$&$\cog{ {\mathbbm Z}_2}$&$\cog{ {\mathbbm Z}_2}$&$\cog{ {\mathbbm Z}_2}$&$\cog{ {\mathbbm Z}_2}$\\\hline
    $ v_{3}(g)$&$\cob{\mathbbm Z}$&$\cob{\mathbbm Z_2}$&${\cob{\mathbbm Z}}\otimes \cob{{\mathbbm Z_2}}$&$\cob{\mathbbm Z^2}$&${\emptyset}$&$\cob{\mathbbm Z^3}$&$\cob{\mathbbm Z_2}$&\multicolumn{2}{c|}{$\emptyset$}&$\cog{ {\mathbbm Z}_2}$&$\cog{ {\mathbbm Z}_2}$&$\emptyset$&$\emptyset$\\\hline\hline
\end{tabular}
\renewcommand{\arraystretch}{1}
\end{table}

% \begin{table}[t]
% \renewcommand{\arraystretch}{1.3}
% \LTcapwidth=0.47\textwidth
% \caption{The second/third-order topological invariants $ v_{i}(g)$ of mass field defined by single point-group symmetry.
% }
% \label{main text invariant table}
% \centering
% \vspace{0.2cm}
% \begin{tabular}{c|c|c|c|c|c|c|c||c|c|c}
%     \hline\hline \multicolumn{11}{c}{Type I: High-symmetry Submanifold}\\\hline
%     $g$&\multicolumn{2}{c|}{$C_{2}$}&$C_3$&\multicolumn{2}{c|}{$C_{4}$}&\multicolumn{2}{c||}{$C_{6}$}&$ $&\multicolumn{2}{c}{$M$}\\\hline
%     $\chi_g$&$+1$&$-1$&$+1$&$+1$&$-1$&$+1$&$-1$&$\chi_g$ &$+1$&$-1$\\\hline
%     $ v_{3}(g)$&$\mathbbm Z$&${\mathbbm Z_2}$&$\mathbbm Z\otimes {\mathbbm Z_2}$&$\mathbbm Z^2$&$\emptyset$&$\mathbbm Z^3$&${\mathbbm Z_2}$&$v_2(g)$&$\emptyset$&$\mathbbm Z$\\\hline
% \end{tabular}
% \begin{tabular}{c|c|c|c|c|c|c|c|c|c|c}
%     \hline \multicolumn{11}{c}{Type II: Generic Points}\\\hline
%     $g$&$C_2$&$C_3$&$C_4$&$C_6$&$g$&$I$&$S_3$&$S_4$&$g$&$S_6$\\\hline
%     $\chi_g$&$\pm1$&$\pm1$&$\pm1$&$\pm1$& $\chi_g$&$\pm1$&$\pm1$&$\pm1$&$\chi_g$ &$\pm1$\\\hline
%     $v_2(g)$& $\mathbbm Z_2$& $\mathbbm Z_2$& $\mathbbm Z_2$& $\mathbbm Z_2$& $v_{2/3}(g)$& $\mathbbm Z_2$& $\mathbbm Z_2$& $\mathbbm Z_2$&$v_2(g)$ & $\mathbbm Z_2$\\
%     \hline\hline
% \end{tabular}
% \renewcommand{\arraystretch}{1}
% \end{table}

\noindent
%For each choice of representation, Eq.~(\ref{main mass fields constraints}) gives the symmetry constraints on the mass field.
There are two types of symmetry constraints, leading to two types of gapless regions respectively.

One type enforces the high-symmetry lines/points (sub-manifolds) to be gapless, corresponding to the second/third order TSC.
The high-symmetry line is invariant under the mirror symmetry and the high-symmetry point is invariant under rotational symmetry.
In this case, the point-group symmetry imposes a local constraint on the mass field and the Hamiltonian $H$ can be block-diagonalized into different symmetry sectors:
\begin{equation}
 [F(g)\tau_z^{(\chi_g-1)/2},M_\rr]=0\rightarrow H = H_{e_1}\oplus\cdots\oplus H_{e_n}
\end{equation}
In each sector, when the restricted mass field $M_{\rr}$ is singular, the high-symmetry sub-manifold on the sphere is gapless and can host Majorana zero modes.
The internal symmetries are different in each symmetry sector and we classify the anomalous $H_{e_i}$ according to the remaining symmetries of the sector, and label it by the corresponding invariant $ v_{2/3}(g)$ of the Hamiltonian on the sub-manifold \cite{PhysRevB.90.205136,RevModPhys.88.035005,fang2017topological}.
We take the mirror symmetry $M_z$ and $L=1$ as an example.
When the pairing symmetry for $M_z$ is odd, $F(M_z)=\tau_0$ is a representation that satisfies the commutation relation Eq.~(\ref{main text surface S and g}).
The mirror-invariant line, on which the mass field $m_\rr\tau_x$ vanishes, hosts one pair of Majorana zero modes with mirror eigenvalues $\pm i$.
Each mirror sector, having only particle-hole symmetry, is reduced to one-dimensional anomalous D class and thus has a $\mathbbm Z$ classification, characterized by the mirror Chern number $v_2(M_z)$ \cite{PhysRevLett.111.056403}.
Furthermore, the time-reversal symmetry relates the numbers of zero modes in the two mirror sectors.
So in the $F(M_z)=\tau_0$ case, the superconductor is a $v_2(M_z)=1$ second-order TSC.
Similar analysis can be extended to other representations and high-symmetry points that have rotational symmetries, as shown in Supplementary Materials \cite{SM}.
That is to say, we assign the topological invariants $v_{2/3}(g)$ to $F(g)$ according to the Majorana zero modes appearing at high-symmetry sub-manifold for this type of symmetry constraint.

\begin{figure}[t]
\centering 
\includegraphics[width=0.43\textwidth]{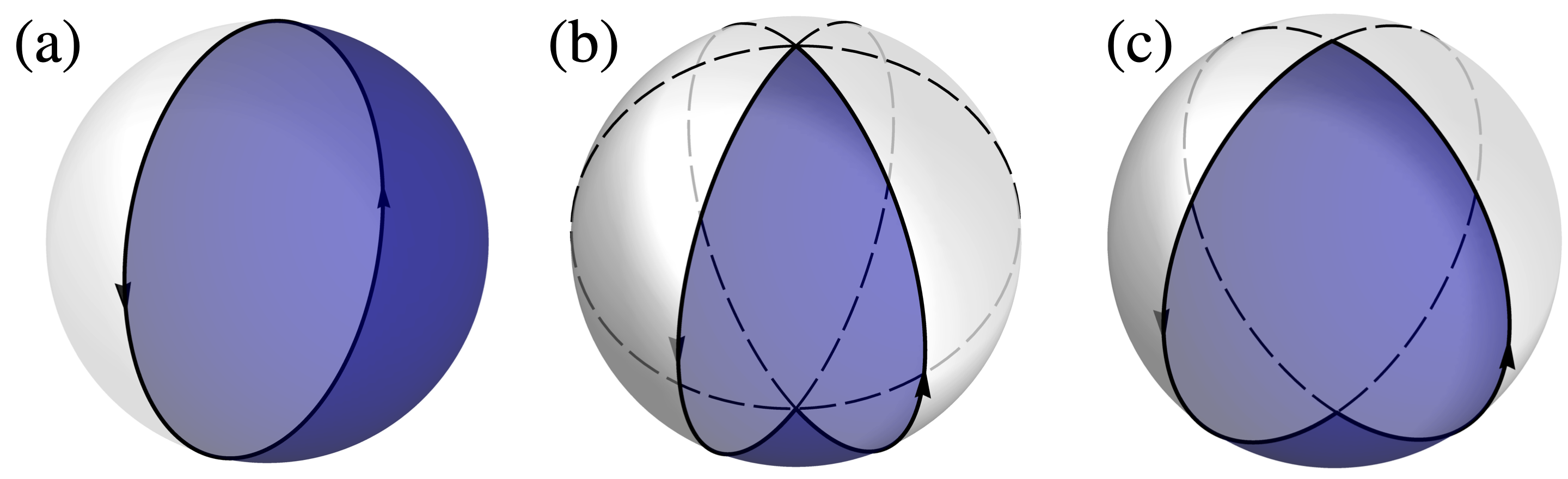}
\caption{The colored regions in (a)-(c) are the 2-cells defined by $I$, $S_3$, $S_4$, respectively.
The $\mathbbm{Z}_2$ invariants of first homotopy groups are defined along the solid black arrowed 1-loops on the sphere.
}
\label{1-loop}
\end{figure}

Besides the local symmetry constraints, there are also symmetries with no fixed point, i.e., $I$ and $S_{n}$.
Such symmetries may enforce singularities at generic point, as the result of the nontrivial zeroth/first homotopy class of the mass field.
%\begin{equation}\label{main mass fields mapping}
%m \colon S^2 \setminus \{ {\rm{gapless\ regions}}\} \mapsto {\rm{GL}}(L,\mathbbm{R}).
%\end{equation}
First, consider a generic point $\rr$ and its partner $g\rr$, then if $m_\rr$ and $m_{g\rr}$ belong to different elements of $\pi_0(\mathrm{GL}(L,\mathbbm R))$, there must be a gapless nodal line separating the two points.
This implies a second-order TSC protected by $g$, the topological invariants of which can be assigned by the transformation property of mass field under $g$, i.e., the invariant of zeroth homotopy group of mass field:
\begin{equation}\label{2nd invariant def}
   v_{2}(F(g)) = \mathrm{sign} (\det m_\rr \times \det m_{g\rr}) \in \mathbbm Z_2.
\end{equation}
We prove that this $\mathbbm Z_2$ invariant is fixed by $F(g)$ and does not depend on the specific choice of mass field \cite{SM}.
Next, consider the loop $\iota$ shown in Fig.~\ref{1-loop}, which is the boundary of a \textit{2-cell} defined by $g$ \cite{song2020real}, along which the first homotopy group invariant is defined as
\begin{equation}\label{3rd invariant def}
   v_3(F(g)) = e^{i\pi W_{\iota}[m_\rr]}\in \mathbbm Z_2,
\end{equation}
where $W_\iota[m_\rr]$ is the "winding number" on of $m_\rr$ along some symmetric loops $\iota$ \cite{SM}.
The winding number is well-defined up to an even integer.
Along the loop $\iota$, $v_3(g)$ can be fixed by $F(g)$.
This is can be seen more clearly by dividing $\iota$ into two connected paths $\iota_1$ and $\iota_2=g\iota_1$, along which the mass fields are related by $F(g)$ according to Eq.~(\ref{main mass fields constraints}).
If $F(g)$ restricts the winding number on the path $\iota_{1,2}$ to $n+\frac{1}{2}$, $v_3(g)$ is $-1$.
Such a nontrivial loop promises us singularities inside, corresponding to a third-order TSC protected by $g$ labelled by $v_3(g)$.
The boundaries of 2-cells, along which $v_3$ can be fixed by $F(g)$, are shown in solid black lines in Fig.~\ref{1-loop} for $I,S_{3,4}$.
Note that the boundary of the 2-cell defined by $S_6$ cannot be fixed by $F(S_6)$, so $S_6$ does not define a third-order topological invariant.
Also, the mass field with a well-defined $v_3(g)$ requires its zeroth homotopy to be trivial.
This physically corresponds to the fact that $1$st,$\cdots$,$(d-1)$th-order topologies should be trivial when considering $d$th-order topology.
We summarize the invariants for each point-group symmetry in Table.~\ref{main text invariant table}.

\section{Gapless modes at zeros of Mass Field}
In the above section, we obtained the classification of the point group symmetry enforced singularities (zeros) of the surface mass field.
However, it is not always true that each symmetry-enforced singularity can support the Majorana zero modes.
After careful and case-by-case inspection, we found that only the singularities with $v_3(F(S_4))=1$ cannot support the Majorana zero modes.
The reason is that the four mass singularities can be moved to the north and south poles while maintaining $S_4$ symmetry, as shown in Fig.~\ref{main text exotic mass pattern I} (c).
Then, the singularities locate at the high-symmetry points with $C_2$ symmetry and host two Majorana Kramers pairs with $S_4$ eigenvalues $e^{\pm i\pi/4}$ and $e^{\pm 3i\pi/4}$ \cite{SM}.
The zero modes with eigenvalues $e^{ i\pi/4}$ and $e^{ -i3\pi/4}$ can hybrid with each other and open the gap while maintaining $C_2$ symmetry.
The same is true for the other two zero modes.
Finally, the $S_4$-enforced surface singularities do not host any Majorana zero modes, and hence the topological invariants $v_{3}(F(S_4))$ classifying the gapless modes reduces to $\emptyset$, as shown in Table.~\ref{main text invariant table}.
Moreover, this $S_4$ protected surface singularity can be viewed as a fragile state, which can be gapped out by placing two 1D TSC on the sphere, of which explicit constructions are shown in Supplementary Materials \cite{SM}.

%\paragraph{Algebraic Framework of Classification ---}
\section{Algebraic Framework of Classification}
As discussed previously, we assign the set of invariants $v_i(g)$ to each projective co-representation (PCR) satisfying the commutation relation Eq.~(\ref{main text surface S and g}):
\begin{equation}
\bm{F}=\{F(T),F(P),F(S),F(g_1),\cdots \},\ g_i\in G_c.
\end{equation}
It means that two mass fields constrained by the same $\bm{F}$ have the same invariants and possess surface anomalies of the same type.
Enumerating all distinct surface anomalies is then transformed into enumerating all possible projective co-representations.
All PCRs form a \textit{semimodule} $V$, with direct sum as the addition operation.
\begin{table}[t]
\caption{Classification Table of TSCs protected by point groups with a single generator.
%The results of all 32 point groups are shown in Supplementary Materials \cite{SM}.
}
\label{main text table group extension}
\LTcapwidth=0.5\textwidth
\begin{tabular}{c|c|c|cccc}
    \hline \hline
    PG &Generator& Pairing Symmetry & ${\mathcal{C}}_{\rm{1st}}$    & ${\mathcal{C}}_{\rm{2nd}}$    & ${\mathcal{C}}_{\rm{3rd}}$   & ${\mathcal{C}}_{\rm{Full}}$         \\ \hline 
    \multirow{2}{*}{$C_i$}  &\multirow{2}{*}{$I$}  & $A_u$    & $\mathbbm{Z}$ & ${\mathbbm{Z}}_2$ & ${\mathbbm{Z}}_2$ & $\mathbbm{Z} \otimes \mathbbm{Z}_4$ \\
    &  & $A_g$    & $\emptyset$ & $\emptyset$  & $\emptyset$ & $\emptyset$ \\\hline
    \multirow{2}{*}{$C_2$}&\multirow{2}{*}{$C_{2z}$}  & $A$    & $\mathbbm{Z}$ & $\mathbbm{Z}_2$ & $\mathbbm{Z} $ & $\mathbbm{Z}^2$ \\
          &                 & $B$    & $\emptyset$ & $\mathbbm{Z}_2$  & $\emptyset$ & $\mathbbm{Z}_2$ \\\hline
    $C_3$&$C_{3z}$  & $A$    & $\mathbbm{Z}$ & $\emptyset$ & $\mathbbm{Z} $ & $\mathbbm{Z}^2$ \\\hline
    \multirow{2}{*}{$C_4$}&\multirow{2}{*}{$C_{4z}$}  & $A$    & $\mathbbm{Z}$ & $\mathbbm{Z}_2$ & $\mathbbm{Z}^2 $ & $\mathbbm{Z}^3$ \\
          &                 & $B$    & $\emptyset$ & $\mathbbm{Z}_2$  & $\emptyset$ & $\mathbbm{Z}_2$ \\\hline
    \multirow{2}{*}{$C_6$}&\multirow{2}{*}{$C_{6z}$}  & $A$    & $\mathbbm{Z}$ & $\mathbbm{Z}_2$ & $\mathbbm{Z}^3 $ & $\mathbbm{Z}^4$ \\
          &                 & $B$    & $\emptyset$ & $\mathbbm{Z}_2$  & $\emptyset$ & $\mathbbm{Z}_2$ \\\hline
    \multirow{2}{*}{$C_s$}&\multirow{2}{*}{$M_z$}  & $A^\prime$    & $\emptyset$ & $\emptyset$ & $\emptyset $ & $\emptyset$ \\
          &                 & $A^{\prime\prime}$    & $\mathbbm{Z}$  & $\mathbbm{Z}$& $\emptyset$  & $\mathbbm{Z}^2$ \\\hline
        \multirow{2}{*}{$C_{3i}$}&\multirow{2}{*}{$S_{3z}$}  & $A_u$    & $\mathbbm{Z}$ & $\mathbbm{Z}_2$ & $\mathbbm{Z}\otimes\mathbbm{Z}_2 $ & $\mathbbm{Z}^2\otimes\mathbbm{Z}_4$ \\
          &                 & $A_g$    &$\emptyset$ & $\emptyset$  & $\emptyset$ & $\emptyset$ \\\hline
    \multirow{2}{*}{$S_4$}&\multirow{2}{*}{$S_{4z}$}  & $A$    & $\emptyset$ & $\mathbbm{Z}_2$ & $\emptyset $ & $\mathbbm{Z}_2$ \\
          &                 & $B$    & $\mathbbm{Z}$ & $\mathbbm{Z}_2$  & $\mathbbm{Z}$ & $\mathbbm{Z}^2\otimes\mathbbm{Z}_2$ \\\hline
    \multirow{2}{*}{$C_{3h}$}&\multirow{2}{*}{$S_{6z}$}  & $A^\prime$    & $\emptyset$ & $\emptyset$ & $\emptyset $ & $\emptyset$ \\
          &                 & $A^{\prime\prime}$    & $\mathbbm{Z}$ & $\mathbbm{Z}$  & $\mathbbm{Z}$ & $\mathbbm{Z}^3$ \\\hline\hline
\end{tabular}
\end{table}
\noindent For each order, the set of invariants $ v_n(g)$ collected from all point-group actions defines a $n$-th order homomorphism from a sub-semimodule of $V$ to an abelian group (which is not necessarily surjective) \cite{SM}:
\begin{equation}
   v_n : V_n \mapsto (\overbrace {\mathbbm Z,\cdots,\mathbbm Z}^a ,\overbrace {\mathbbm Z_2,\cdots,\mathbbm Z_2}^b),\ 
  V_n = \begin{cases}
    V & n=1 \\
    \mathrm{Ker}\  v_{n-1} & n>1
  \end{cases}.
\end{equation}
%The abelian group $P=\mathbbm Z^a\otimes \mathbbm Z_2^b$ is the collection of $n$th-order invariants defined by different group actions.
where $a$/$b$ is the number of $\mathbbm Z$/$\mathbbm Z_2$ invariants from different group actions.
The $n$-th order classification group is defined as the image of $ v_n$: ${\mathcal C_{n{\mathrm{th}}}}={\rm Im}\  v_{n}.$
%\begin{equation}
%{\mathcal C_{n{\mathrm{th}}}}={\rm Im}\  v_{n}.
%\end{equation}
Specifically, the first-order invariant is the winding number
\begin{equation}
  z_w\equiv  v_1[{\bm F}] ={\rm Tr}\ [F(S)].
\end{equation}
%where $\mathrm{Tr}$ is the trace of a matrix.
For each element in $V$, one can calculate its first-order invariants.
All irreducible PCRs $\{{\bm F}_i\}$ form a set of basis of $V$.
The first-order classification group $\mathcal C_{\rm 1st}$ is defined as the image of $ v_1$, and the first-order invariants of irreducible PCRs $\{ v_1[{\bm F}_i]\}$ can be chosen as a set of generators:
\begin{equation}
\mathcal C_{\rm 1st}\equiv\mathrm{Im}\  v_1=\mathrm{Span}\{ v_1[{\bm F}_1], v_1[{\bm F}_2],\cdots\}.
\end{equation}
The kernel of $ v_1$ defines a sub-semimodule $V_2$ where the second-order invariants $v_2$ are well-defined.
The sub-semimodule $V_2$ has a set of basis $\{\bb_{2,i}\}$, termed as \textit{second-order irreducible building block}, which can be obtained by adding irreducible PCRs while keeping the first-order invariant zero.
That is to say, a PCR is a \textit{second-order irreducible building block} if and only if

  (i) it has vanishing first-order invariant $z_w$,

 (ii) and cannot be decomposed into several smaller PCRs satisfying (i).

\noindent The second-order classification group $\mathcal C_{\rm 2nd}$ is defined as the image of $ v_2$ and the invariants of second-order irreducible building blocks $\{ v_2[\bb_{2,i}]\}$ can be chosen as a set of generators:
\begin{equation}
\mathcal C_{\rm 2nd}\equiv\mathrm{Im}\  v_2=\mathrm{Span}\{ v_2[\bb_{2,1}], v_2[\bb_{2,2}],\cdots\}.
\end{equation}
Similar procedure extends to the third order, where \textit{third-order irreducible building block} is obtained by stacking second-order blocks with zero second-order invariant, and the classification group $\mathcal C_{\rm 3rd}$ is the image of $v_3$, which can be generated by a set of basis $\{\bb_{3,j}\}$:
\begin{equation}
\mathcal C_{\rm 3rd}\equiv\mathrm{Im}\  v_3=\mathrm{Span}\{ v_3[\bb_{3,1}], v_3[\bb_{3,2}],\cdots\}.
\end{equation}
%Here, the definition of a \textit{third-order irreducible building block} is similar to that of $\bb_{2,i}$, except that (i) is replaced with that it has vanishing first-order and second-order topological invariants.

With the classifications for each order in hand, we are only one step away from getting the full classification group $\mathcal{C}_{\rm{full}}$.
We remark that $\mathcal{C}_{\rm{full}}$ may not be the tensor product of first, second, and third-order classifications as there may be linear dependency, i.e., generated by same root states, in which case the classification group undergoes a nontrivial extension.
Since a single first-order root state cannot generate higher-order state, the nontrivial entension happens only between second- and third- order classification groups.
For example, assuming that the classification $\mathcal{C}_{\text{2nd}}$ of second-order root state is $\mathbbm{Z}_2$ and $\mathcal{C}_{\text{3rd}}$ of third-order root state is $\mathbbm{Z}_2$, if two copies of $\mathbbm{Z}_2$ second-order root states are equivalent to the third-order root state, the full classification $\mathcal{C}_{\rm{full}}$ is $\mathcal{C}_{\rm{1st}}\otimes\mathbbm{Z}_{4}$, otherwise is $\mathcal{C}_{\rm{1st}}\otimes\mathbbm{Z}_{2}\otimes\mathbbm{Z}_{2}$.
We summarize the group extension for point groups with single generator in Table.~\ref{main text table group extension} and present the results of 32 point groups in Supplementary Materials \cite{SM}.

%\paragraph{Exotic Mass Patterns ---}
\section{Exotic Mass Patterns}
The third-order TSC protected by nonlocal symmetry are diagnosed by the first homotopy group.
However, the nontrivial first homotopy class may result in not only the "point-like defects", but also "loop-like defects", which are also the manifestations of the surface states of third-order TSC.
A representative model is four copies of Majorana-cones with trivial first-order and second-order topology, and odd-parity pairing symmetry.
The representation $F(I)$ of inversion symmetry can be chosen as $\tau_z\rho_0$, where $\bm{\tau},\bm{\rho}$ are Pauli matrices in flavor space.
The symmetric mass terms $M_i$ are chosen to be $M_1=\tau_x\rho_0$, $M_2=\tau_y\rho_y$, $M_3=\tau_x\rho_x$ and $M_4=\tau_x\rho_z$.
The constraint Eq.~(\ref{main mass fields constraints}) imposed by inversion symmetry gives $m^i_{\rr}=-m^i_{-\rr}$.
The third-order invariant $v_3(I)$ is computed as $1$.
%The spectrum in surface $\Gamma$ point is
%\begin{equation}
%E_s(\bm{0},\rr)=\pm\left(\sqrt{(m^1_\rr)^2+(m^2_\rr)^2}\pm\sqrt{(m^3_\rr)^2+(m^4_\rr)^2} \right)
%\end{equation}
%which is twofold degeneracy.
We now consider an adiabatic evolution of the mass field, which is given by the family of Hamiltonians
\begin{equation}
\mathcal M(\rr;t)=\sum_{i=1}^{2}\cos(t)m^i_\rr M_i\otimes\sigma_0+\sum_{i=3}^{4}\sin(t) m^i_\rr M_i\otimes\sigma_0,
\end{equation}
where $t$ is adiabatic parameter.
For $t=n\pi/2$, there are two anticommute mass terms $m^i_\rr M_i$.
The gapless region satisfying $(m^1_\rr)^2+(m^2_\rr)^2=0$ (or $(m^3_\rr)^2+(m^4_\rr)^2=0$) appears at the intersection of the zeros of two mass-terms, which are two zero-dimensional points on the sphere, as shown in Fig.~\ref{main text exotic mass pattern I}(a).
While for $t\ne n\pi/2$, four mass-terms do not anticommute.
The gapless region satisfying $(m^1_\rr)^2+(m^2_\rr)^2-(m^3_\rr)^2-(m^4_\rr)^2=0$ always has a one-dimensional solution $\iota_\rr$ on the sphere, as shown in Fig.~\ref{main text exotic mass pattern I} (b).
Therefore, the state with two surface gapless points is topologically equivalent to that with two gapless loops.
%Similar "loop-like" surface states also appear in other third-order TSC protected by nonlocal symmetries, e.g., $S_4$, as shown in Fig.~\ref{main text exotic mass pattern I}(c).

\begin{figure}[t]
\centering 
\includegraphics[width=0.43\textwidth]{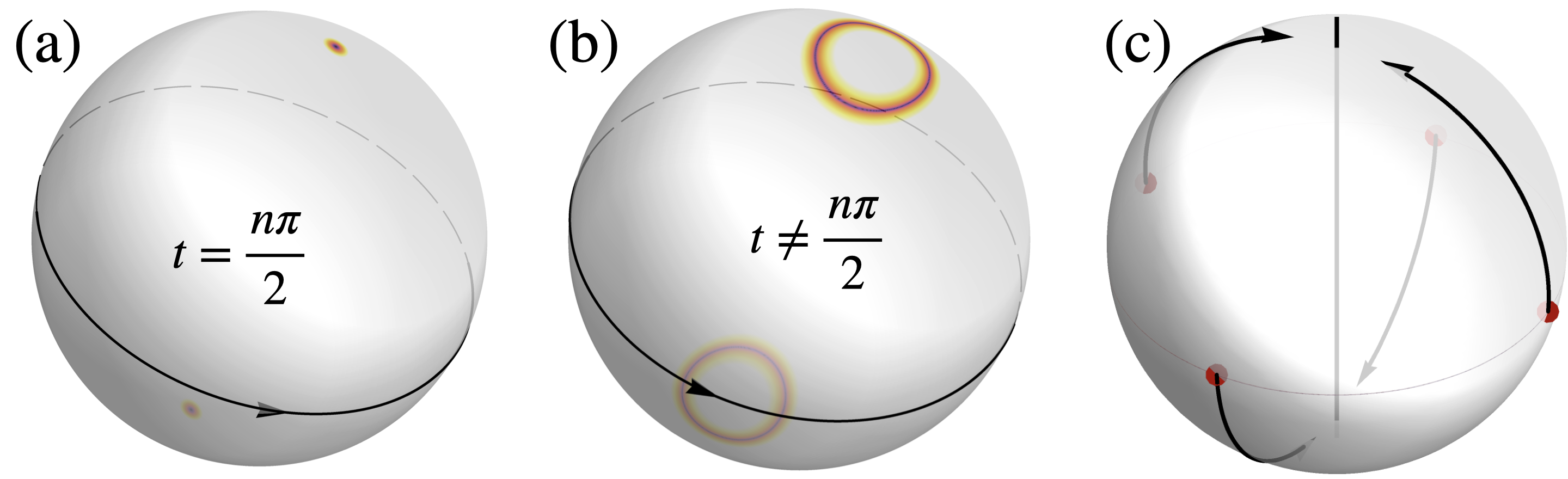}
\caption{(a) and (b) are the mass field introduced in Eq.~\ref{main mass field} at $t$ equal to $n\pi/2$ and not equal to $n\pi/2$.
The dark purple sub-manifolds represents the singularities of mass fields.
The loops with arrows are the 2-cells' boundaries, along which the "winding numbers" $v_3$ are odd.
(c) shows how to move the zeros to the north and south poles while maintaining the $S_4$ symmetry.
The red points indicate the singularities of the surface mass field.
}
\label{main text exotic mass pattern I}
\end{figure}

%Besides, we can always adjust the parameters of $  M_\rr$ artificially such that a trivial state has similar gapless regions on the sphere as a third-order nontrivial state.
%But the winding number $w_{\iota}$ on a loop surrounding gapless points/regions is always odd for a gapless region in the third-order nontrivial state.
%Nevertheless, for a gapless line in the trivial state, the winding number is always even, as shown in Fig.~\ref{main text exotic mass pattern I} (c) and (d).

\section{Conclusion and discussion}
In conclusion, we investigate the classification of intrinsic TSC by a surface approach that uses Majorana cones as the minimal surface anomalous elements to construct topologically distinct surface anomalies.
To achieve this, we first classify the surface mass field restricted by point group symmetry and obtain all the symmetry-enforced singularities.
By using the homotopy group of the mass field, we provide a unified description of all possible surface singularities which are not fixed at the high-symmetry points.
Then we obtained the classification of gapless modes supported by those surface mass singularities.
We find that the classification of the gapless modes is the same as that of the mass singularities, except for those protected by the $S_4$ symmetry.
Finally, we use the n-th order irreducible building blocks to obtain the $n$-th order and full classification groups for TSC protected by all point groups.

There are two other classification approaches: the real-space construction (RSC) \cite{song2019topological,song2020real,PhysRevB.96.205106} and the equivariant $K$-theory \cite{PhysRevB.95.235425,stehouwer2018classification,cornfeld2021tenfold}, based on real-space and momentum-space perspectives of topological crystalline states, respectively.
On the one hand, the RSC method has two main difficulties in solving the classification of TSC: the cross-dimensional bubbles and the group-extension problems \cite{song2020real}.
Our surface construction method can circumvent these difficulties.
On the other hand, the $K$-theory cannot preclude states not having any surface states, considered trivial from an experimental point of view, to enter the classification whereas our method not only shows the anomalous surface states explicitly but also gives the classification results of TSC order-by-order.

\begin{acknowledgments}
Z.Z. is grateful to Shengshan Qin and Jiacheng Gao for fruitful discussions.
C.F. acknowledges funding support by the Chinese Academy of Sciences under grant number XDB33000000.
Y.Q. is supported by the National Natural Science Foundation of China (Grand Nos. 12174068 and 11874115).
\end{acknowledgments}

\bibliography{cite}
%%%%%%%%%%%%%%%%%%%%%%%%%%%%%%%%%%%%%%%%%%%%%%%%%%%%%%%%%%%%%%%%%%%%%%%%%%%%%%%%%%%%%%%%%%%%%%%%%%%%%%%%%%%%%%%%%%%%%%%%%%%%%%%%%%%%%%%%%%%%%%%%%%%%%%%%%%%%%%%%%%%%%%%%%%%%%%%%%%%%%%%%%%%%%%%%%%%%%%%%%%%%%%%%%%%%%%%%%%%%%%%%%%%%%%%%%%%%%%%%%%%%%%%%%%%%%%%%%%%%%%%%%%%%%%%%%%%%%%%%%%%%%%%%%%%%%%%%%%%%%%%%%%%%%%%%%%%%%%%%%%%%%%%%%%%%%%%%%%%%%%%%%%%%%%%%%%%%%%%%%%%%%%%%%%%%%%%%%%%%

% \onecolumngrid
% \newpage
% \setcounter{equation}{0}
% \setcounter{figure}{0}
% \setcounter{table}{0}
% \setcounter{section}{0}
% \begin{center}
% \textbf{\large Appendix for ``Topological Classification of Superconductors with Point-group Symmetries''}
% \end{center}
% 
% \tableofcontents

\onecolumngrid
\newpage
%\vspace{30pt}

\begin{center}
\textbf{\large Supplemental Material for ``Topological classification of intrinsic 3D superconductors using anomalous surface construction''}
\end{center}
\appendix
%\tableofcontents

\section{Time-reversal-invariant Superconductor}
In this Appendix, we briefly summarize some basis properties of time-reversal invariant superconductors and pristine-TSC mentioned in main text from the mean-field perspective.

\subsection{Bulk Theory of Generic Time-reversal-invariant Superconductor with Point-group Symmetries}
We start with a generic Hamiltonian of superconductor in the momentum space:
\begin{equation}
\hat{ H}^{\rm{SC}}=\sum_{\kk,\alpha,\beta} H_{0,\alpha\beta}(\kk)c^\dagger_{\kk,\alpha}c_{\kk,\beta}+\sum_{\kk,\kk^\prime,\alpha,\beta,\rho,\tau}V_{\alpha\beta\rho\tau}(\kk,\kk^\prime)c^\dagger_{-\kk,\alpha}c^\dagger_{\kk,\beta}c_{\kk^\prime,\rho}c_{-\kk^\prime,\tau},
\end{equation}
where $c_{\kk,\alpha}$ is the annihilation operator of the electron near the Fermi surface with momentum $\kk$ and orbital/spin $\alpha\equiv\{o,s\}$.
For a superconducting state, the formation of Cooper pairs means $c_{\kk,\alpha}c_{-\kk,\beta}$ has nonzero expectation $\Delta_{\kk}$.
So the mean-field Hamiltonian is expressed as:
\begin{equation}\label{SM mean-field Hamiltonian}
\hat{ H}^{\rm{MF}}=\sum_{\kk,\alpha,\beta} H_{0,\alpha\beta}(\kk)c^\dagger_{\kk,\alpha}c_{\kk,\beta}+\sum_{\kk,\alpha,\beta}\Delta_{\alpha\beta}(\kk)c_{\kk,\alpha}^\dagger c_{-\kk,\beta}^\dagger+\sum_{\kk,\alpha,\beta}\Delta^*_{\alpha\beta}(\kk)c_{-\kk,\beta} c_{\kk,\alpha}.
\end{equation}
The anticommutation relation of fermions gives constraint on $\Delta(\kk)$:
\begin{equation}
\Delta_{\alpha\beta}(\kk)=-\Delta_{\beta\alpha}(-\kk)
\end{equation}

Because of the pairing potential $\Delta_\kk$, the mean-field superconductor Hamiltonian is similar to the insulator from the band-theory perspective, except that the former has particle-hole symmetry.
To fully accommodate the particle-hole symmetry $P$, we rewrite Eq.~(\ref{SM mean-field Hamiltonian}) into matrix form:
\begin{equation}
\hat{ H}^{\rm{MF}}=\sum_{\mathbf{k}}(c_{\kk,\alpha},c^\dagger_{-\kk,\alpha})
\begin{pmatrix} H_{0,\alpha\beta}(\kk)&\Delta_{\alpha\beta}(\kk)\\\Delta_{\alpha\beta}^\dagger(\kk)&- H^*_{0,\alpha\beta}(-\kk)\\ \end{pmatrix}
\left(
\begin{matrix}
c_{\kk,\beta}\\
c_{-\kk,\beta}^\dagger
\end{matrix}
\right)=\sum_\kk \Psi_{\kk}  H^{\rm{BdG}}(\kk)\Psi_{\kk}^\dagger,
\end{equation}
where basis $\Psi_\kk$ is called Nambu-basis, and $ H^{\rm{BdG}}(\kk)$ is Bogoliubov-de Gennes (BdG) Hamiltonian of superconductor.
The BdG Hamiltonian $H^{\rm{BdG}}(\kk)$ has particle-hole symmetry:
\begin{equation}
P=\begin{pmatrix}0&\mathbbm{1}\\\mathbbm{1}&0 \end{pmatrix}\otimes\sigma_0K,\ P H^{\rm{BdG}}(\kk)P^\dagger=- H^{\rm{BdG}}(-\kk),
\end{equation}
$\ssigma$ are Pauli matrices in spin space and $K$ is the conjugate operator.

Assuming that the time-reversal symmetry is present and the square is $-1$, the symmetry class becomes class DIII of the 10-fold Altland-Zirnbauer (AZ) symmetry classification.
The time-reversal symmetry $T$ can be expressed as:
\begin{equation}
T=\begin{pmatrix}\mathbbm{1}&0\\0& \mathbbm{1}\end{pmatrix}\otimes(i\sigma_y)K,\ T H^{\rm{BdG}}(\kk)T^\dagger= H^{\rm{BdG}}(-\kk)
\end{equation}
The state in class DIII is labeled by the three-dimensional winding number $z_w$.
The general form for the integer-valued topological number is
\begin{equation}\label{SM 3d winding number}
z_w =  \int\frac{{{\rm{d}}k}^3}{48\pi^2} \epsilon^{\alpha\beta\gamma}\mathrm{Tr}[S H^{\rm{BdG}}(\kk)^{-1}\partial_{k_\alpha} H^{\rm{BdG}}(\kk)S H^{\rm{BdG}}(\kk)^{-1}\partial_{k_\beta} H^{\rm{BdG}}(\kk)S H^{\rm{BdG}}(\kk)^{-1}\partial_{k_\gamma} H^{\rm{BdG}}(\kk)]
\end{equation}
Qi $et\,al$ \cite{qi2010topological} show that $z_w$ of a time-reversal-invariant superconductor can be completely determined by the Fermi-surface properties when the \textit{weaking pairing limit} is satisfied.
Weak pairing limit means that pairing potential only near the Fermi surface is not negligible.
In three dimensions, $z_w$ is determined by the sign of the pairing potential and the first Chern number $C_{1s}$ of the Berry phase gauge field on the Fermi surfaces:
\begin{equation}\label{SM QI calculate zw}
  z_w=\frac{1}{2}\sum_s \text{sgn}(\delta_s)C_{1s},\ C_{1s}=\frac{1}{2\pi}\int_{FS_s}d\Omega^{ij}[\partial_i a_{sj}(\kk)-\partial_j a_{si}(\kk)],\  \delta_s= T(\langle s,\kk|)\Delta(\kk)|s,\kk\rangle
\end{equation}
with $a_{si}=-i\langle s\kk|\partial_i|s\kk\rangle$ the Berry connection defined for the $s$-band crossing Fermi-surface and $d\Omega^{ij}$ is 2-form of the Fermi surface.
More details are shown in the Ref.~\cite{qi2010topological}.
Note that the definition of 3D winding number has an ambiguity concerning the form of time-reversal symmetry $T$.
In addition, if we redefine $T$ with $-T$, which is equivalent to a gauge transformation, $z_w$ transforms to $-z_w$.
So only the difference of 3D winding number between two TSCs is meaningful \cite{qi2013axion}.
Hereafter, we only guarantee that the relative winding numbers are correct and change their sign if necessary.

Next, we assume that the spatial symmetry $\hat{g}$ is present.
If a spatial operation acts $n$ times equivalent to identity operation $\hat{E}$, we define it as an operation of order $n$.
For operation $\hat{g}$ of order $n$, the fermion annihilation (creation) operator $\hat{c}_{\alpha}^\dagger(\hat{c}_{\alpha})$ of the electron has a $U(1)$ gauge degree of freedom:
\begin{equation}
\hat{ U}=e^{i\frac{m\pi}{n}\hat{Q}},\,\,\hat{ U}\hat{c}_{\alpha}^\dagger\hat{ U}^{-1}=e^{i\frac{m\pi}{n}}\hat{c}_{\alpha}^\dagger,m=0,1,\cdots,n-1,
\end{equation}
where $\hat{Q}$ is the charge operator.
The superconducting system has a spatial symmetry $\hat{g}$, up to a phase transformation $\hat{ U}$, can be expressed as 
\begin{equation}
\hat{g}\hat{ U} \hat{ H}^{\rm{MF}}(\hat{g}\hat{ U})^{-1}=\hat{ H}^{\rm{MF}}.
\end{equation}
Under Nambu-basis, the representation of spatial symmetry can be expressed correspondingly as
\begin{equation}\label{SM U(g) def}
U^{\rm{BdG}}(g)=\left(\begin{matrix}U(g)&0\\0&\chi_gU^*(g) \end{matrix} \right),\ U^{\rm{BdG}}(g) H^{\rm{BdG}}(\kk)(U^{\rm{BdG}}(g))^\dagger= H^{\rm{BdG}}(g\kk)
\end{equation}
The transformation of normal state Hamiltonian and pairing potential under spatial operation $g$ are:
\begin{equation}
U(g) H_0(\kk)U^\dagger(g)= H_0(g\kk),\ U(g)\Delta(\kk)U^T(g)=\chi_g\Delta(g\kk)
\end{equation}
Physically, $\chi_g$ indicates the symmetry of pairing potential and thus is referred to \textit{pairing symmetry}.
For example, if spatial symmetry is $n$-fold rotational symmetry $C_n$, pairing symmetry represents the angular momentum of Cooper pairs.
Pairing symmetry reflects the commutation relation between $P$ and $U^{\rm{BdG}}(g)$:
\begin{equation}\label{SM commutation g P}
P[U^{\rm{BdG}}(g)]P^\dagger=\chi_g^*U^{\rm{BdG}*}(g).
\end{equation}
The time-reversal symmetry always commutes with spatial symmetry:
\begin{equation}\label{SM commutation g T}
T[U^{\rm{BdG}}(g)]T^\dagger=U^{\rm{BdG}*}(g),
\end{equation}
which gives $\chi_g=\chi_g^*=\pm1$ when time-reversal symmetry is present.
A corollary of Eq.~(\ref{SM commutation g P}-\ref{SM commutation g T}) is
\begin{equation}\label{SM commutation g S}
SU^{\rm{BdG}}(g)S^\dagger=\chi_gU^{\rm{BdG}}(g),\ S=T*P
\end{equation}
Moreover, the spatial symmetry restricts the three-dimensional winding number $z_w$ defined in Eq.~(\ref{SM 3d winding number}):
\begin{equation}\label{SM chiarl constraint}
z_w=\chi_g\cdot \det R_g z_w
\end{equation}
where $\det R_g=\pm1$ for proper/improper spatial symmetry.
This constraint can be directly derived by using Eq.~(\ref{SM commutation g S}) and coordinate transformation:
\begin{equation}
  \begin{aligned}
  z_w =&\int\frac{{{\rm{d}}k}^3}{48\pi^2} \epsilon^{\alpha\beta\gamma}\mathrm{Tr}[SH^{-1}(\kk)\partial_{k_\alpha}H(\kk)SH^{-1}(\kk)\partial_{k_\beta}H(\kk)SH(\kk)^{-1}\partial_{k_\gamma}H(\kk)]\\
  =&\int\frac{{{\rm{d}}k}^3}{48\pi^2} \epsilon^{\alpha\beta\gamma}\mathrm{Tr}[SU_g^{-1}H^{-1}(g\kk)U_g\partial_{k_\alpha}U_g^{-1}H(g\kk)U_gSU_g^{-1}H^{-1}(\kk)U_g\partial_{k_\beta}U_g^{-1}H(g\kk)U_gSU_g^{-1}H(g\kk)^{-1}U_g\partial_{k_\gamma}U_g^{-1}H(g\kk)U_g]\\
  =& \int\frac{{{\rm{d}}k}^3}{48\pi^2} \epsilon^{\alpha\beta\gamma}\mathrm{Tr}[(\chi_gS)H^{-1}(g\kk)U_g\partial_{k_\alpha}U_g^{-1}H(g\kk)(\chi_gS)H^{-1}(\kk)U_g\partial_{k_\beta}U_g^{-1}H(g\kk)(\chi_gS)H(g\kk)^{-1}U_g\partial_{k_\gamma}U_g^{-1}H(g\kk)]\\
  =& \int\frac{{{\rm{d}}(g k)}^3}{48\pi^2} \epsilon^{\alpha\beta\gamma}\mathrm{det}R_g \chi_g \mathrm{Tr}[SH^{-1}(g\kk)\partial_{gk_\alpha} H(g\kk)SH^{-1}(\kk) \partial_{gk_\beta} H(g\kk) SH(g\kk)^{-1} \partial_{g k_\gamma} H(g\kk)]\\
  =&\,\,\,\chi_g\cdot\mathrm{det}R_gz_w.
  \end{aligned}
\end{equation}
For simplicity, we replace the symbol $ H^{\rm{BdG}}(\kk)$ with $H(\kk)$ here.

Specifically, we use the single-band spin-triplet pairing TSC with time-reversal symmetry as ``pristine-TSC'', of which surface state is Majorana cones:
\begin{equation}\label{SM pristine-TSC Hamiltonian}
h_\pm(\kk)=
    \left( \begin{matrix}
      \kk^2/2m-\mu+\alpha \kk\cdot\ssigma & \pm\Delta_0\ssigma\cdot \kk\\
       \pm\Delta_0\ssigma\cdot \kk& -\kk^2/2m+\mu-\alpha \kk\cdot\ssigma\\
    \end{matrix} \right )_{\alpha\rightarrow0},\\
\end{equation}
whose basis is chosen to Nambu-basis $(c_{\kk,\uparrow},c_{\kk,\downarrow},-c_{-\kk,\downarrow},c_{-\kk,\uparrow})$ and $\Delta_0>0$.
The representations of time-reversal, particle-hole, chiral, spatial symmetries (denoted as $T,P,S,U_g$) are:
\begin{equation}\label{SM pristine-TSC symmetries}
T=\begin{pmatrix} i\sigma_y & 0\\ 0 & i\sigma_y \\ \end{pmatrix}K,\ P=\begin{pmatrix} 0 & i\sigma_y\\ -i\sigma_y & 0 \\ \end{pmatrix}K,\ S=\begin{pmatrix} 0 & -\sigma_0\\  \sigma_0 & 0 \\ \end{pmatrix},\ U_g=\begin{pmatrix} e^{-i\theta_g\nn_g\ssigma/2} & 0\\ 0 & \chi_ge^{-i\theta_g\nn_g\ssigma/2} \\ \end{pmatrix}
\end{equation}

The normal state Hamiltonian $h_0(\kk)=\kk^2/2m-\mu+\alpha \kk\cdot\ssigma$ has two Fermi-surfaces, and the two Fermi surfaces have opposite Chern number $C_{1s}=\pm1$.
At the same time, the two Fermi surfaces have opposite signs $\delta_s$ of pairing potential such that $z_w=\pm1$ due to Eq.~(\ref{SM QI calculate zw}).

Here, we emphasize that the pristine-TSC is the “minimal” topological superconductor in the class-DIII.
The word “minimal” we use means that they can construct any superconductor with the first-order invariant of any integer.
Since any integer z can be obtained by adding $\pm1$, the “minimal” TSC is chosen as the superconductor, which has winding number 1 or -1.

\subsection{Compatible Spatial Symmetry V.S. Incompatible Spatial Symmetry}

Because of the restriction Eq.(\ref{SM chiarl constraint}) on $z_w$ by spatial symmetry, we artificially define \textit{compatible} and \textit{incompatible}.
In the presence of point group symmetry $G$, $\chi_g$ constitute a one-dimensional real representation of $G$:
\begin{equation}
\chi_{g_i}\chi_{g_j}=\chi_{g_ig_j},\ \ \chi_g=\pm1,\ \ g_i,g_j\in G.
\end{equation}

We say a pairing symmetry is \textit{compatible} with nontrivial first-order TSCs if
\begin{equation}
\chi_g=\det R_g,\ \ \ \ \forall g\in G,
\end{equation}
or \textit{incompatible} if 
\begin{equation}
\chi_{g_0}=-\det R_{g_0},\ \ \ \ \exists g_0 \in G.
\end{equation}
An incompatible pairing symmetry has trivial first-order classification due to Eq.~(\ref{SM chiarl constraint}).
That is to say, the point-group symmetry imposes the constraint:
\begin{equation}
(\chi_g\det R_g-1)z_w=0,
\end{equation}
which holds if and only if $(\chi_g\det R_g-1)=0$ or $z_w=0$.
$(\chi_g\det R_g-1)=0$ means that $z_w$ can take any integer number and thus the point-group symmetry is compatible with any TSC.
$z_w=0$ means that the first-order invariant of TSC must be $0$ such that the first-order topology must be trivial.
It reflects that if there is a nontrivial TSC with first-order invariant $z$, then there must be a TSC with first-order invariant $-z$ such that the net first-order invariant is zero.
It also implies the representation of point-group symmetry is off-diagonal in the flavor space, which will be introduced in Appendix.~\ref{The Representation of Incompatible Point-group Symmetry}.

We also define a group action $g$ as \textit{compatible} symmetry when $\chi_g=\det R_g$ and \textit{incompatible} symmetry when $\chi_g=-\det R_g$.
For example, $C_n$ is compatible when the pairing potential is even under $n$-fold rotation, whereas it is incompatible when the pairing potential is odd under $n$-fold rotation.
Inversion is compatible when the pairing potential is odd under space-inversion, whereas it is incompatible when the pairing potential is even under space-inversion.

\section{Surface Theory of Pristine-TSC}
In this Appendix, we elaborate on the derivation of the surface theory of TCSC starting from the pristine-TSC described in Eqs.~(\ref{SM pristine-TSC Hamiltonian},\ref{SM pristine-TSC symmetries}) in bulk Hilbert space.
The pristine-TSC has the Majorana cone surface state.
We will establish the form of Hamiltonian $h_{\rr,\kk}$ in the surface Hilbert subspace at a certain point $\rr$ and the transformation matrices of the point-group symmetries $g$ that connects surface Hamiltonian in the subspaces defined at two symmetry-related points $\rr$ and $\rr^\prime=g\rr$.

\subsection{The Effective Surface Hamiltonian of Pristine-TSC}\label{SM projection procedure}
We place the pristine-TSC on a three-dimensional manifold which breaks translational symmetry in the $\nn_\rr$ direction.
The chemical potential $\mu$ appearing in normal state Hamiltonian becomes spatially dependent:
\begin{equation}
\mu(r)=\left\{
\begin{aligned}
-\mu_0&\ \ \ r>0\\
0     &\ \ \ r=0\\
\mu_0 &\ \ \ r<0\\
\end{aligned}
\right.
\end{equation}
The bulk momentum $\kk$ can be decomposed into $\kk_\parallel$ and $k_{\perp}\nn_\rr$.
Due to the breaking of translational symmetry in the $\nn_\rr$ direction, $\kk_\perp$ is not a good quantum number and replaced with $-i\partial_r$.
We next linearize the pristine-TSC Hamiltonian with winding number $z_w=\pm1$ by expanding in small momenta around the $\Gamma$ point and consider it close to the surface:
\begin{table}[H]
\renewcommand{\arraystretch}{1.2}
\centering
  \begin{tabular}{c|c}
  \hline\hline
  $z_w=+1$&$z_w=-1$\\
  \hline
  $ h(r;\kk_\parallel)=-\mu(r)t_z+\Delta_0(\kk_\parallel\cdot\ssigma t_x-i\nn_\rr\cdot\ssigma t_x\partial_r)$&$h(r;\kk_\parallel  )=-\mu(r)t_z-\Delta_0(\kk_\parallel\cdot\ssigma t_x-i\nn_\rr\cdot\ssigma t_x\partial_r)$\\
  \hline\hline
  \end{tabular}
  \renewcommand{\arraystretch}{1}
\end{table}
\noindent
where $\bm{t}$ are Pauli matrices in particle-hole space and $\Delta_0>0$ hereafter.
The relation between the surface Hamiltonian and bulk Hamiltonian is
\begin{equation}\label{SM surface DOF and bulk DOF}
 H^{{\rm{bulk}}}= H^{{\rm{surf}}}\otimes {{\rm{Cl}_{1,1}}},
\end{equation}
and thus the dimension of the surface Hamiltonian is half of the bulk Hamiltonian.
Here, $Cl_{1,1}$ is real Clifford algebra.
The surface Hilbert space is defined by the eigenspaces of $\nn_\rr\cdot\ssigma t_y$ with the eigenvalues $\pm1$, of which eigenstates are exponentially localized to the surface region.
The wavefunctions $\psi_\pm(r;\kk)$ of surface modes are
\begin{table}[H]
\renewcommand{\arraystretch}{1.3}
\centering
  \begin{tabular}{c|c}% 通过添加 | 来表示是否需要绘制竖线
%  \hline\hline
  $\psi_+(r;\kk_\parallel )=\exp[\int_0^r dr^\prime [\mu(r^\prime)/\Delta_0]]\psi(\kk_\parallel)$&$\psi_-(r;\kk_\parallel )=\exp[\int_0^r dr^\prime [\mu(r^\prime)/\Delta_0]]\psi(\kk_\parallel)$\\
%  \hline\hline
  \end{tabular}
\renewcommand{\arraystretch}{1}
\end{table}
\noindent which gives

\begin{table}[H]
\renewcommand{\arraystretch}{1.3}
\centering
  \begin{tabular}{c|c}% 通过添加 | 来表示是否需要绘制竖线
%  \hline\hline
  $h (r;\kk)\psi(r;\kk_\parallel)=[-\mu(r)t_z(1-\nn_\rr\cdot\ssigma t_y)+\Delta_0 \kk_\parallel\cdot\ssigma t_x]\psi(r;\kk_\parallel)$&$h(r;\kk_\parallel)\psi(r;\kk_\parallel )=[-\mu(r)t_z(1+\nn_\rr\cdot\ssigma t_y)-\Delta_0 \kk_\parallel\cdot\ssigma t_x]\psi(r;\kk_\parallel)$\\
%  \hline\hline
  \end{tabular}
\renewcommand{\arraystretch}{1}
\end{table}
\noindent and can be solved by
\begin{table}[H]
\renewcommand{\arraystretch}{1.3}
\centering
  \begin{tabular}{c|c}
%  \hline\hline
  $\left \{ \begin{array}{l} (1-\nn_\rr\cdot\ssigma t_y)\psi_+(r;\kk_\parallel)=0 \\(\Delta_0\kk_\parallel\cdot\ssigma t_x)\psi_+(\kk_\parallel)=E_{\kk_\parallel}\psi_+(\kk_\parallel) \end{array} \right. $&$\left \{ \begin{array}{l} (1+\nn_\rr\cdot\ssigma t_y)\psi_-(r;\kk_\parallel)=0 \\(-\Delta_0\kk_\parallel\cdot\ssigma t_x)\psi_-(\kk_\parallel)=E_{\kk_\parallel}\psi_-(\kk_\parallel) \end{array} \right. .$\\
%  \hline\hline
  \end{tabular}
  \renewcommand{\arraystretch}{1}
\end{table}
\noindent The first equation implies $P_\pm(\rr)\psi_\pm(\kk_\parallel)=\psi_\pm(\kk_\parallel)$, where $P_\pm(\rr)$ is the projection operator $P_\pm(\rr)\equiv\frac{1}{2}(1\pm\nn_\rr\cdot\ssigma t_y)$ satisfying $[P_\pm(\rr),\kk_\parallel\cdot\ssigma t_x]=0$.
The effective surface Hamiltonian can be obtained by projecting $(\pm\Delta_0\kk_\parallel\cdot\ssigma t_x)$ to the Hilbert subspace defined by the eigenspace of $P_\mp(\rr)$.
This can be easily achieved by introducing a basis rotation $V_\rr$ such that the projection operator is diagonal:
\begin{equation}
V_\rr=e^{i(\pi/4)\hat{\bm{n}}_\rr\cdot\ssigma t_x}\rightarrow V_\rr P_\pm(\rr) V_\rr^\dagger=\frac{1}{2}(1\mp t_z).
\end{equation}
We use the $4\times2$ matrix $p=(0,\sigma_0)^T$($(\sigma_0,0)^T$), which acts on a $4\times4$ matrix in the spin and particle-hole space to pick up the degrees of freedom corresponding to the non-zero eigenvalue of the projector $P_+(\rr)(P_-(\rr))$.
The Hamiltonian $h_s(\rr,\kk)$, $T$, $P$, and $S$ in the surface degrees of freedom can be expressed as:
\begin{table}[H]
\renewcommand{\arraystretch}{1.3}
\centering
  \begin{tabular}{c|c} 
  \hline\hline
  $p=(0,\sigma_0)^T$&$p=(\sigma_0,0)^T$\\
  \hline
  $p^TV_\rr (\Delta_0\kk_\parallel\cdot\ssigma t_x)V_\rr^\dagger p=\Delta_0 (\kk\times\nn_\rr)\cdot\ssigma$
 &$p^TV_\rr(-\Delta_0\kk_\parallel\cdot\ssigma t_x)V_\rr^\dagger p=-\Delta_0 (\kk\times\nn_\rr)\cdot\ssigma$\\
  \hline
  $p^TV_\rr TV_\rr^\dagger p=i\sigma_y$&$p^TV_\rr TV_\rr^\dagger p=i\sigma_y$\\
  %\hline
   $p^TV_\rr PV_\rr^\dagger p=\nn_\rr\cdot\ssigma\sigma_yt_z$&$p^TV_\rr PV_\rr^\dagger p=-\nn_\rr\cdot\ssigma\sigma_y$\\
  %\hline
   $p^TV_\rr S V_\rr^\dagger p=i\nn_\rr\cdot\ssigma$ &$p^TV_\rr S V_\rr^\dagger p=-i\nn_\rr\cdot\ssigma$\\
  \hline\hline
  \end{tabular}
\renewcommand{\arraystretch}{1}
\end{table}
\noindent This gives the surface Hamiltonian
\begin{equation}\label{surfaceH}
h_s^{z_w=\pm1}({\rr,\kk})=\pm\Delta_0(\kk\times\nn_\rr)\cdot\ssigma
\end{equation}
with opposite surface chiral operations $\mp i\nn_\rr\cdot\ssigma$.
Note that the surface Hamiltonian $h_s({\rr,\kk})$ projected from pristine-TSCs with opposite winding numbers differ by a relative sign that does not affect surface modes' \textit{helicity} if they have the same chiral operations.
%The definition of helicity of a Hamiltonian is the number of times the spin-direction of low-energy wavefunction goes counterclockwise when someone rotate the parameter $\kk$ in the low-energy wavefunction counterclockwise around the origin.
The definition of helicity of a Hamiltonian is how many times the spin-polarization makes full rotations as momentum $\kk$ makes a full counterclockwise rotation enclosing the origin.

\subsection{Point-group Symmetries in Surface Hilbert Space}\label{SM Point-group Symmetries in Surface Hilbert Space}
As shown in Eq.~(\ref{SM chiarl constraint}), a pristine-TSC is not compatible with \textit{incompatible} spatial symmetry whose $\chi_g=-\det R_g$.
In other words, when there is \textit{incompatible} symmetry $g$, it is impossible to write the representation of $g$ on the bulk degrees of freedom of single pristine-TSC.
Therefore, we have to put together the bulk degrees of freedom of the pristine-TSCs with $z_w=1$ and $z_w=-1$, and then write down the representation of \textit{incompatible} symmetry $g$.
We will discuss the surface representation of \textit{compatible} and \textit{incompatible} point-group operations separately.

\subsubsection{The Representation of Compatible Point-group Symmetry}
A natural choice of boundary is the compact manifold $S^2$ because it is invariant under all point group symmetry operations $g$, i.e., for any $\rr$ on the surface, $g\cdot\rr$ is also on the surface, and $\hat{\bm{n}}_{g\cdot\rr}=R_g\hat{\bm{n}}_{\rr}$ relates their surface normals.
The bulk representations of point-group symmetries are described in Eq.~(\ref{SM pristine-TSC symmetries}).
Similar to the process shown above, we can obtain the surface representation at $\rr$ point from the bulk representation by using projection operator $P_\pm(\rr)$:
\begin{equation}
U_s(g;\rr) = p^TV_{g\cdot\rr}  U_g V_\rr^\dagger p
\end{equation}
The subtle thing is that $V_\rr$ after $U_g$ becomes defined at $g\rr$, reflecting that spatial symmetry connects two distinct local Hilbert subspaces defined in two symmetry-related points $\rr$ and $g\rr$.
The detailed derivation process is as follows:
\begin{equation}~\label{SM surfacesymm}
\begin{aligned}
    h_s({\rr,\kk})&=p^T V_\rr h(\kk) V_\rr^\dagger p=p^T V_\rr U^\dagger_g h(g\kk) U_g V_\rr^\dagger p\\
    &=p^T V_\rr U^\dagger_g(V_{g\cdot\rr}^\dagger pp^T V_{g\cdot\rr})h(g\kk)(V^\dagger_{g\cdot\rr} pp^T V_{g\cdot\rr}) U_g V_\rr^\dagger p\\
    &=(p^T V_\rr U^\dagger_g V_{g\cdot\rr}^\dagger p)\, h_s(g\rr,g\kk) \,(p^TV_{g\cdot\rr}  U_g V_\rr^\dagger p)\\
    &\equiv U_s^{\dagger}(g;\rr)h_s({g\rr,g\kk}) U_s(g;\rr)\\
\end{aligned}
\end{equation}
In the above derivation, we used $[V_{g\cdot\rr}  U_g V_\rr^\dagger, pp^T] =0$.
So the explicit form of surface representation of compatible symmetry $g$ can be obtained:
\begin{equation}
\renewcommand{\arraystretch}{1.4}
  \begin{array}{rl}
  &V_{g\cdot\rr}=\exp({i\frac{\pi}{4}\hat{\bm{n}}_{g\cdot\rr}\cdot\ssigma t_x})=\exp({i\frac{\pi}{4}(\det R_g)\hat{\bm{n}}_{\rr}\cdot R_g^{-1}\ssigma t_x})\Rightarrow U_s(g;\rr)=p^TV_{g\cdot\rr}  U_g V_\rr^\dagger p=\begin{pmatrix} \chi_gu_g & 0\\ 0 &  u_g\\ \end{pmatrix},
  \end{array}
\renewcommand{\arraystretch}{1}
\end{equation}
where $u_g=e^{-i\theta_g\nn_g\ssigma/2}$.
Note that the surface representation $U^{\pm}_s(g)$ of compatible symmetry is different for pristine-TSC with $z_w=\pm1$:
\begin{equation}\label{SM compatible symmetry representations}
  U^{+}_s(g;\rr)=\chi_g u_g,\,\,\,\,\,\,U^{-}_s(g;\rr)=u_g.
\end{equation}

\subsubsection{The Representation of Incompatible Point-group Symmetry}\label{The Representation of Incompatible Point-group Symmetry}
In the presence of incompatible symmetry $g$, the minimal bulk Hamiltonian of TSC and corresponding representation are
\begin{equation}
 H(\kk)=h_+(\kk)\oplus h_-(\kk),\ U_g^\dagger  H(\kk) U_g= H(g\kk),\ U_g=\begin{pmatrix}
      &&\eta_1u_g&\\&&&\eta_1\chi_gu_g\\\eta_2u_g&&&\\&\eta_2\chi_gu_g&&\\
    \end{pmatrix},
\end{equation}
where $\eta_i=\pm1$ are the relative phase factor between the representations of different pristine-TSCs.
Following the similar procedure in compatible symmetry case, we can get the surface representation $U_s(g)$:
\begin{equation}\label{SM incompatible symmetry representations}
U_s(g;\rr)=p^TV_{g\cdot\rr}  U_g V_\rr^\dagger p=\begin{pmatrix}
      &\det R_g u_g(i\hat{\bm{n}}_{\rr}\cdot\ssigma)\\-u_g(i\hat{\bm{n}}_{\rr}\cdot\ssigma)&\\
    \end{pmatrix}.
\end{equation}

For convenience, we summarize the results of this Appendix in Table.~\ref{SM Summary of Surface Theory of Pristine-TSC}.

  \begin{table}[t]
  \LTcapwidth=0.7\textwidth
  \renewcommand{\arraystretch}{1.3}
  \caption{Summary of bulk/surface theory of pristine-TSC }
  \label{SM Summary of Surface Theory of Pristine-TSC}
  \centering
  \begin{tabular}{c|c|c}
  \hline\hline
%  \multicolumn{3}{|c|}{\bf{Summary of Surface Theory of Pristine-TSC}}\\
  First-order topological invariant& $z_w=+1$ & $z_w=-1$\\
  \hline
  Effective bulk Hamiltonian& $h_+(\kk)=-\mu t_z+\Delta_0\kk\cdot\ssigma t_x$&$h_-=-\mu t_z-\Delta_0\kk\cdot\ssigma t_x$\\
  %\hline
  Bulk Hilbert-space Basis&\multicolumn{2}{c}{$(c_{\kk\uparrow},c_{\kk\downarrow},-c_{-\kk\downarrow}^{\dagger},c_{-\kk\uparrow}^{\dagger})^T= \begin{pmatrix} \mathbbm{1}_2 & 0\\ 0 & -i\sigma_y \\ \end{pmatrix} (c_{\kk\uparrow},c_{\kk\downarrow},c_{-\kk\uparrow}^{\dagger},c_{-\kk\downarrow}^{\dagger})^T$}\\
  %\hline
  Bulk time-reversal representation &\multicolumn{2}{c}{$T=i\sigma_y\otimes t_0K$} \\
  %\hline
  Bulk particle-Hole representation&\multicolumn{2}{c}{$P=-\sigma_y\otimes t_yK$ }\\
  %\hline
  Bulk chiral representation &\multicolumn{2}{c}{$S=T\cdot P=-i\sigma_0\otimes t_y$ }\\
  %\hline
  Bulk spatial-symmetry representation& \multicolumn{2}{c}{$ \begin{pmatrix}u_g& 0\\ 0 &\chi_g\sigma_yu_g^* \sigma_y\\ \end{pmatrix}=\begin{pmatrix} u_g & 0\\ 0 &\chi_gu_g \\ \end{pmatrix}$}\\
  \hline
  Surface projector operator & $P_+(\rr)=\frac{1}{2}(1-\hat{\bm{n}}_\rr\cdot\ssigma)t_y$&$P_-(\rr)=\frac{1}{2}(1+\hat{\bm{n}}_\rr\cdot\ssigma)t_y$\\
  %\hline
  Basis rotation& \multicolumn{2}{c}{$V_\rr=\exp [i(\pi/4)\hat{\bm{n}}\cdot\ssigma t_x]$}\\
  %\hline
  Projector after basis rotation&$V_\rr P_+(\rr)V_\rr^\dagger=\frac{1}{2}(1-t_z)$ &$V_\rr P_-(\rr)V_\rr^\dagger=\frac{1}{2}(1+t_z)$\\
  \hline
  TRS after basis rotation& \multicolumn{2}{c}{$V_\rr TV_\rr^\dagger=i\sigma_yt_0K$}\\
  %\hline
  PHS after basis rotation& \multicolumn{2}{c}{$V_\rr PV_\rr^\dagger=\hat{\bm{n}}_\rr\cdot\ssigma\sigma_y t_zK$}\\
  %\hline
  Chiral symmetry after basis rotation& \multicolumn{2}{c}{$V_\rr SV_\rr^\dagger=it_z\hat{\bm{n}}_\rr\cdot\ssigma$}\\
  \hline
  Surface Hamiltonian $h_{\rr,\kk}$& $  \Delta_0(\kk\times\hat{\bm{n}}_\rr)\cdot\ssigma$&$  -\Delta_0(\kk\times\hat{\bm{n}}_\rr)\cdot\ssigma$\\
  %\hline
  Surface Time-reversal representation& $ {T}_s= i\sigma_yK$ &$ {T}_s= i\sigma_yK$\\
  %\hline
  Surface chiral representation $ {T}^s {C}_{s}$& $ S_s= -i\hat{\bm{n}}_\rr\cdot\ssigma$ &$ S_s=i\hat{\bm{n}}_\rr\cdot\ssigma$\\
  %\hline
  Redefined surface chiral* representation $i {T}_s {P}_{s}$& $ {S}_s= \hat{\bm{n}}_\rr\cdot\ssigma$ &${S}_s=-\hat{\bm{n}}_\rr\cdot\ssigma$\\
  %\hline
  Surface compatible symmetry representation&$U_s(g;\rr)=\chi_gu_g $&$U_s(g;\rr)=u_g$\\
  %\hline
  Surface incompatible symmetry representation &\multicolumn{2}{c}{$U_s(g;\rr)=\begin{pmatrix}
      &\eta_1\det R_g u_g(i\hat{\bm{n}}_{\rr}\cdot\ssigma)\\-\eta_2u_g(i\hat{\bm{n}}_{\rr}\cdot\ssigma)&\\
    \end{pmatrix}$}\\
  \hline
  \multicolumn{3}{l}{1. Here, $\ssigma$ and $\bm{t}$ are pauli matrices describing spin and particle-hole degrees of freedom, respectively.}\\
  %\hline
  \multicolumn{3}{l}{2.*Redefine chiral operation satisfying $(S_s)^2=1$.}\\
  \hline
  \hline
  \end{tabular}
  \renewcommand{\arraystretch}{1}
  \end{table}

\section{Justification for the Classification Assumption}\label{SM Justification for the Classification Assumption}
In this Appendix, we justify the assumption of establishing the full classification results of TCSCs protected by 32 point groups.
Namely, we stack several copies of Majorana cones and only enumerate all possible linear representations of point group $G$ in flavor space instead of that of double point group $G^D$. The representation of a single Majorana cone is restricted to the faithful representation of ${\rm{SU}}(2)$: $u_g=\exp({-i\theta_g\nn_g\ssigma/2})$, up to a phase factor.

The validity of the assumption can be established in three steps:
\textit{First}, we can classify the distinct bulk phases by their mass fields $ M_\rr$ in surface Hamiltonians.
\textit{Second}, the minimal surface states that occur in DIII-class surface is a single Majorana cone, and all allowed symmetry representations acting on several copies of Majorana cones give rise to all possible distinct mass fields.
%Third, the phase factor $\omega(g_1,g_2)$ of projective representations of $G_{TS}=G\times Z_2^T\times Z_2^S$ can be decomposed into $\omega_1(g_1,g_2)\times\omega_2(g_1,g_2)$, which are phase factors defined in flavor and spin space, respectively.
\textit{Third}, enumerating all irreducible representation of double point groups $G^D$ is equivalent to enumerating all the tensor product representation $F\otimes u(g)$ of a linear representation $F(g)$ of point group and the fundamental representation $u(g)$ of ${\rm{SU}}(2)$: $\exp({-i\theta_g\nn_g\ssigma/2})$,
as we detail below.

From the perspective of dimensional reduction, a higher-order topological state can be 
adiabatically and symmetrically deformed into a product of several spatially decoupled lower-dimensional topological states[\onlinecite{PhysRevX.7.011020,PhysRevB.92.081304}].
Specifically, the 3D TCSC using one/two-dimensional topological states as ``skeletons'' can be identified as zero/one-dimensional gapless modes on a certain boundary.
On the other hand, the occurrence of edge modes is generally the result of a spatially non-uniform mass field in the surface Hamiltonian.
So it means that two mass fields that cannot be connected by symmetry-allowed perturbations
 corresponds to two topological inequivalent TCSCs.

The minimal surface state in class DIII is a single Majorana cone in the absence of any spatial symmetry.
The class DIII describes the spinful superconductors with time-reversal symmetry, of which bulk Hamiltonian is at least four dimensions.
Because particle-hole and spin degrees of freedom are at least two.
In general, the relation between the surface Hamiltonian and bulk Hamiltonian is
\begin{equation}
 H^{{\rm{bulk}}}= H^{{\rm{surf}}}\otimes {{\rm{Cl}_{1,1}}},
\end{equation}
and thus the dimension of the surface Hamiltonian is half of the bulk Hamiltonian.
So the effective Hamiltonian for gapless edge modes in DIII class is $\kk\cdot \ssigma$ (aka ``Majorana cone''), up to a basis transformation.
We use a concrete BdG Hamiltonian of ``pristine-TSC'' whose surface Hamiltonian is $(\kk\times\hat{\bm{n}}_\rr)\cdot\ssigma$ as our minimal effective surface Hamiltonian.
In order to classify all distinct topological TCSCs in class DIII, especially higher-order states, we need to enumerate all possible mass fields in surface Hamiltonian.
Any surface Hamiltonian is topologically equivalent to several copies of surface Majorana cone of pristine-TSCs, due to the fact any integer winding number $z_w$ can be obtained by adding $\pm1$.
The mass field on the whole surface is restricted by spatial symmetry constraints.
Thus, all topologically inequivalent TCSCs can be constructed by stacking pristine-TSCs with all allowed symmetry constraints.
In surface Hilbert space, this assumption can be translated into that any surface state in class DIII is topological equivalent to several copies of Majorana cones with symmetry-allowed mass terms.

Actually, the symmetry representations of spinful fermions in the crystal are the linear representations of double space groups $G^D$.
But enumerating all TCSCs with the irreducible representation of double point groups is equivalent to enumerating all TCSCs with tensor product representation $F\otimes u(g)$ of an irreducible representation of point groups and faithful representation of ${{\rm SU}(2)}$,
in the sense of constructing the classification group $\mathcal C$ of which elements are topological states.
%  space, mathematically known as finitely generated \textit{free module}.
The reason is that any \textit{double-valued} irreducible representation can be decomposed from $F\otimes u(g)$.
This can be understood by examining the continuum limit of $C_\infty$ symmetry:
the double point groups are subgroups of ${{\rm SU}(2)}$, while the point groups are subgroups of ${{\rm SO}(3)}$.
The irreducible representations $D_j$ of ${{\rm SU}(2)}$ are labeled by half-integer $j$, while they are irreducible representations of ${{\rm SO}(3)}$ when $j$ is restricted to an integer.
Then the tensor product representation $D_i\otimes D_j$ decomposes as follows:
\begin{equation}
D_i(g)\otimes D_j(g)\cong D_{i+j}(g)\oplus D_{i+j-1}(g)\oplus\cdots\oplus D_{i-j+1}(g)\oplus D_{i-j+1}(g),\ i\geqslant j ,\ i,j\in \mathbbm{N}^0/2.
\end{equation}
So all the irreducible representations $D_{k\in \mathbbm{Z}^+/2}$ of ${{\rm SU}(2)}$ can be decomposed from $D_{l\in \mathbbm N^0}\otimes D_{1/2}$.

%But it's not enough to say we can only enumerate representation in flavor space and restrict spin space representation to $u_g$.
%We also need to prove that states with all the tensor product representation $F\otimes u(g)$ as basis vectors are complete basis for constituting the \textit{free module} composed of topological states.
%The detailed definition of \textit{free module} is shown in Appendix.???.
Further, we argue that states with all the tensor product representation $F\otimes u(g)$ form a complete basis for topological states.
This is equivalent to a well-defined mathematical question:
Any \textit{double-valued} reducible representation can be decomposed to a direct sum of tensor product representations
\begin{equation}
\sum_i c_i F_i\otimes u(g),c_i\in \mathbbm Z^+.
\end{equation}
The definition of \textit{double-valued} representation is the representation whose $\chi(E)=-\chi(\bar{E})$.
$E$ and $\bar{E}$ are identity operation and $2\pi$ rotation operation, respectively.
%It is not obvious because a tensor product representation is generally reducible.
We observe that it is always true for 32 double point groups.
Let us use the $O(432)$ point group for an explicit example.
The character table of double point group $O^D$ is:

  \begin{center}
  \LTcapwidth=0.5\textwidth
\renewcommand{\arraystretch}{1.3}
  \begin{longtable}{p{1.5cm}<{\centering}p{1.5cm}<{\centering}p{1.5cm}<{\centering}p{1.5cm}<{\centering}p{1.5cm}<{\centering}p{1.5cm}<{\centering}p{1.5cm}<{\centering}p{1.5cm}<{\centering}p{1.5cm}<{\centering}}
  \caption{The character table of double point group $O^D$}
  \label{SM character table of double O}\\
  \hline\hline
   &$E$&$\bar{E}$ & $8C_3$&$8\bar{C_3}$ &$3C_2,3\bar{C_2}$ &$6C_4$ &$6\bar{C_4}$ & $6C_2^\prime,6\bar{C_2}^\prime$\\
     \hline
    $\Gamma_1$& 1& 1&1 &1 &1 &1 &1 &1 \\
      \hline
    $\Gamma_2$& 1& 1&1 &1 &1 &$-1$ &$-1$ &$-1$\\
      \hline
    $\Gamma_3$&2& 2&$-1$ &$-1$ &2 &0 &0 &0 \\
       \hline
    $\Gamma_4$&3 &3 &0 &0 &$-1$ &1 &1 &$-1$\\
       \hline
    $\Gamma_5$&3 &3 &0 &0 &$-1$ &$-1$ &$-1$ &1\\
       \hline
    $\Gamma_6$&2 &$-2$ &1 &$-1$ &0 &$\sqrt{2}$ &$-\sqrt{2}$ &0\\
       \hline
    $\Gamma_7$&2 &$-2$ &1 &$-1$ &0 &$-\sqrt{2}$ &$\sqrt{2}$ &0\\
       \hline
    $\Gamma_8$&4 &$-4$ &1 &0 &0 &0 &0 &0\\
  \hline\hline
  \end{longtable}
  \renewcommand{\arraystretch}{1}
  \end{center}
\noindent The restricted representations of $\Gamma_{i\leqslant 5}$ are linear representations of point group $O$.
The $\Gamma_6$ representation is the restricted representation of faithful representation $D_{1/2}$ of ${{\rm{SU}}(2)}$.
By using orthogonality relationship for characters
\begin{equation}
  \sum_{t=1}^{|{\textup{G}}|}\chi_i{*}(g_t)\chi_{j}(g_t)=\frac{1}{|{\textup{G}}|}\delta_{ij}
\end{equation}
and ${\rm{Tr}}(A\otimes B)={\rm{Tr}}(A)\ {\rm{Tr}}(B)$, we can reduce the tensor product representation $\Gamma_{1-5}\otimes \Gamma_6$ to irreducible double-valued representations $\Gamma_{6-8}$ of $O^D$:
\begin{equation}
\begin{aligned}
&\Gamma_1\otimes\Gamma_6\cong\Gamma_6\\
&\Gamma_2\otimes\Gamma_6\cong\Gamma_7\\
&\Gamma_3\otimes\Gamma_6\cong\Gamma_8\\
&\Gamma_4\otimes\Gamma_6\cong\Gamma_6\oplus\Gamma_8\\
&\Gamma_5\otimes\Gamma_6\cong\Gamma_7\oplus\Gamma_8\\
\end{aligned}
\end{equation}
It means we can use the TCSCs with representations $\Gamma_{i\leqslant 5}\otimes u_g$ as $n$th-order irreducible building blocks to form classification group $\mathcal C$, of which generators determine the classification results.
All allowed TCSCs with double-valued representations $\Gamma_{6-8}$ can be found in the classification group, in the sense that the direct sum of representations corresponds to the addition of their invariant vectors.

%The time-reversal symmetry acts on spin degrees of freedom in the form of $i\sigma_y K$ and acts on particle-hole and flavor degrees of freedom in the form of $\mathbbm{1}$.

\section{General Theory in the Surface Flavor Degrees of Freedom}
In this Appendix, we establish the general theory of TCSCs in the surface flavor Hilbert space.
In order to construct $n$th-order TCSCs, especially higher-order TCSCs, one has to consider the surface theory of $2L$-copies of minimal surface theory and add mass field that respects the protecting point-group symmetries:
\begin{equation}\label{SM general surface theory}
 H_F(\kk,\rr)=\begin{pmatrix}\mathbbm{1}_L&0\\0&-\mathbbm{1}_L\\\end{pmatrix}\otimes (\kk\times\hat{\bm{n}}_\rr)\cdot\ssigma,\ \  T_s=\begin{pmatrix}\mathbbm{1}_L&0\\0&\mathbbm{1}_L\\\end{pmatrix}\otimes i\sigma_y K,\  S_{s}=\begin{pmatrix}\mathbbm{1}_L&0\\0&-\mathbbm{1}_L\\\end{pmatrix}\otimes (\nn_\rr\cdot\ssigma),
\end{equation}
where $\mathbbm{1}_L$ is a $L$-dimensional identity matrix that represents $L$ identity copies.
Recall that, in Appendix~\ref{SM Justification for the Classification Assumption}, we showed that we could just consider the surface spatial symmetry representations to tensor product representations $F\otimes u(g)$ and restrict $u(g)$ to $u_g=\exp({-i\theta_g\nn_g\ssigma/2})$.
This means that we can consider the symmetry constraints only in flavor space with Hamiltonian and symmetry representations:
\begin{equation}\label{SM Hamiltonian in surface flavor space}
 H_{s}(\kk,\rr)=\begin{pmatrix}\mathbbm{1}_L&0\\0&-\mathbbm{1}_L\\\end{pmatrix},\ \ F_s(T)=\begin{pmatrix}\mathbbm{1}_L&0\\0&\mathbbm{1}_L\\\end{pmatrix}K,\ F_s(S)=\begin{pmatrix}\mathbbm{1}_L&0\\0&-\mathbbm{1}_L\\\end{pmatrix}.
\end{equation}
Time-reversal and chiral symmetries impose constraints on possible mass-terms $ M_\rr$ in surface Hamiltonian:
\begin{equation}
 M_\rr= M_\rr^*, \ \ \ F_s(  S) M_\rr \ F^{-1}_s(  S)= M_\rr,\ \  M_\rr= M_\rr^\dagger.
\end{equation}
The general form of $ T_s$-, $ S_s$-symmetrical mass field in the flavor space can be expressed as
\begin{equation}\label{SM form mass field}
 M_\rr=\begin{pmatrix}
      0&m_\rr\\
      m_\rr^T &0\\
    \end{pmatrix},m_\rr = m_\rr^*.
\end{equation}

\subsection{Group Structure in Flavor Space}

The TCSCs has symmetry group $G=G_c\times Z_2^T\times Z_2^S$ containing antiunitary time-reversal symmetry $T$ and unitary chiral symmetry $S$ in flavor space.
%As mentioned above, the copy (hereafter refer to as \textit{flavor}) degrees of freedom is important when focusing on classification problem.
In the bulk flavor space, the representation $F(g)$ of $G$ forms a projective representation, and the factor system is 
\begin{equation}\label{SM factor system}
\begin{split}
%F(g)F(S)=\det{R_g}\chi_gF(S)F(g),\ [F(g),F(T)K]=1,\forall g \in G
\omega(S,T)=\omega(g_1,g_2)=\omega(T,g)=\omega(g,T)=1,\\
\omega(S,g)=\chi_g,\omega(g,S)=1, \forall g,g_i\in G_c,
\end{split}
\end{equation}
which is the equivalent statement of Eqs.(\ref{SM U(g) def}-\ref{SM commutation g S}).
Here the definition of factor system $\omega$ for a projective representation is 
\begin{equation}
F(g_1)F(g_2)=\omega(g_1,g_2)F(g_1g_2),\ \omega(g_1,g_2)\in \mathrm{U(1)}.
\end{equation}

Due to the fact that $\nn_\rr\cdot\ssigma$ in the chiral surface representation (Eq.~\ref{SM general surface theory}) is a pseudoscalar, whose transformation property under faithful representation of ${{\rm{SU}}}(2)$ is
\begin{equation}
e^{i\theta_g\nn_g\ssigma/2} (\nn_\rr\cdot\ssigma)e^{-i\theta_g\nn_g\ssigma/2}=\det R_g((R_g\nn_\rr)\cdot\ssigma)=\det R_g(\nn_{g\rr}\cdot\ssigma),
\end{equation}
the commutation relation between \textit{surface} flavor representation of $S$ and point group symmetry $g$ becomes:
\begin{equation}\label{SM factor system surface S and g}
F_s(g) F_s(S)-\omega_s(S,g) F_s(S)F_s(g)=0,\omega_s(S,g)=\chi_g\det R_g 
\end{equation}
The other projective factors $\omega_s(g_1,g_2)$ of projective \textit{surface} flavor space representation are the same as in Eq.(\ref{SM factor system}), because they are all inherited from the bulk Hilbert space representation.
For simplicity, we omit the subscript $s$ hereafter.
Next, we will discuss only the surface theory unless Specifically stated.

\subsection{The Surface Hilbert Space $\mathcal V^s$}
In Eq.~(\ref{SM general surface theory}), the Hamiltonian has two parameters $\kk$ and $\rr$, which means we treat each point on the sphere $S^2$ as a microscopically large but macroscopically small system such that $[\kk,\rr]=0$.
Any open boundary condition in the $\nn_\rr$ direction which may appear in a crystal can be described as a tangent plane of $S^2$ whose normal vector is $\nn_\rr$, and in-plane momentum is $\kk=(k_1,k_2)$, as shown in Fig.~\ref{SM_sur_momentu}.
\begin{figure}[H]
\centering 
\includegraphics[width=0.18\textwidth]{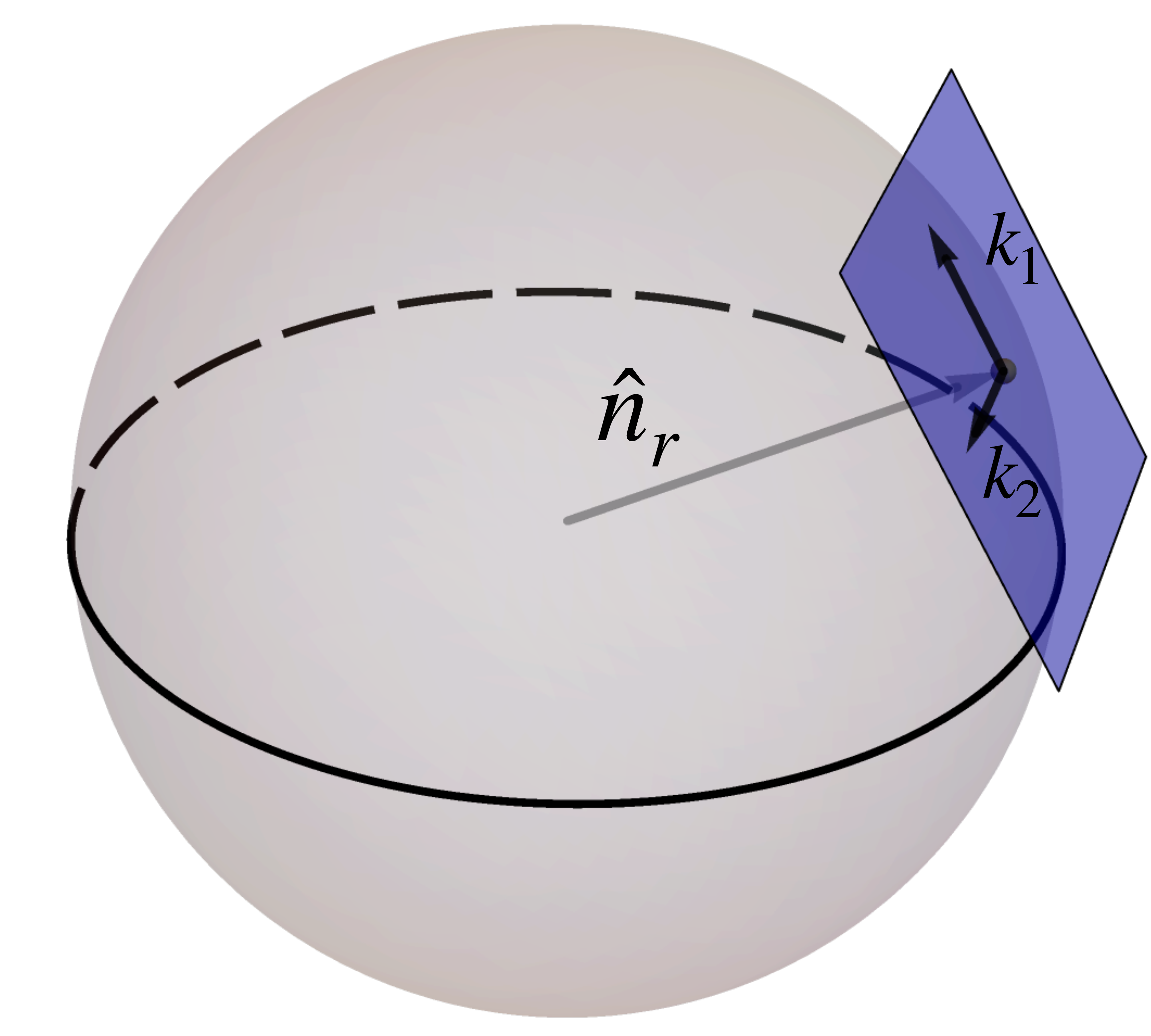}
\caption{Any point $\rr$ on the sphere describes a crystal with open boundary condition in $\nn_\rr$ direction.
$(k_1,k_2)$ are in-plane momentums which correspond to surface momentum.}
\label{SM_sur_momentu}
\end{figure}
\noindent The surface Hamiltonians at different $\rr$ points are independent of each other, except that the spatial symmetry imposes some connections on them.
Next, we will derive the symmetry constraints on surface mass field $ M_\rr$.

Recall that in the surface degrees of freedom, the transformation of surface Hamiltonian (Eq.~\ref{SM surfacesymm}) is
\begin{equation}
h_s(\kk,\rr)=U_s^{\dagger}(g;\rr)h_s({g\rr,g\kk}) U_s(g;\rr),\ U_s(g;\rr) = p^TV_{g\cdot\rr}  U_g V_\rr^\dagger p,
\end{equation}
where $U_s(g;\rr)$ is the representation in the local surface Hilbert subspace at $\rr$ point.
The transformation of point-group symmetry satisfies group associative law when it acts on the whole surface Hilbert space
\begin{equation}
\mathcal V^s= \bigoplus_{\rr\in S^2} \mathcal V_\rr^s
\end{equation}
 instead of Hilbert surface $\mathcal V_\rr^s$ at a single point $\rr$.
The point-group symmetry representations $U_s(g)$ in $\mathcal V^s$ are 
\begin{equation}
U_s(g) = \hat{U}_{\rr\rightarrow g\rr}*U_s(g;\rr),\ \hat{U}_{\rr\rightarrow g \rr}\equiv
\begin{array}{lc}
\mbox{}&
\begin{array}{cc} \ &\mathcal V_\rr^s \end{array}\\
\begin{array}{c} \\ \\\mathcal V_{g\rr}^s\end{array}&
\begin{bmatrix}
0      & \cdots &\cdots & 0      \\
\vdots & \ddots &\cdots & \cdots \\
\vdots & \cdots &U_s(g;\rr) & \vdots \\
0      & \cdots &\cdots & 0
\end{bmatrix}
\end{array},
\end{equation}
where $\hat{U}_{\rr\rightarrow g\rr}$ is real space coordinates transformation from $\rr$ to $g\rr$.
So strictly speaking, the transformation of surface Hamiltonian $ H_s(\kk,\rr)$ should be
\begin{equation}
g\cdot  H_s(\kk,\rr)\equiv (\hat{U}_{\rr\rightarrow g\cdot \rr}\otimes U_s(g;\rr))  H_s(\kk,\rr)(U_s(g;\rr)^\dagger \otimes \hat{U}^\dagger_{\rr\rightarrow g\cdot \rr}) =   H_s(g\kk,g\rr).
\end{equation}
It means point group symmetry transformation in local Hilbert subspace $ V^s_\rr$ cannot constitute a representation satisfying group associative law:
\begin{equation}
U_s(g_2,\rr)\ U_s(g_1,\rr)\neq U_s(g_1g_2,\rr),
\end{equation}
while point group symmetry transformation in whole Hilbert space $ V^s$ constitutes a representation satisfying group associative law:
\begin{equation}
[\hat{U}_{g_1\rr\rightarrow g_1g_2\rr}*U_s(g_2,g_1\rr)]\ [\hat{U}_{\rr\rightarrow g_1\rr} U_s(g_1,\rr)]= U_s(g_1g_2,\rr),
\end{equation}
For simplicity, we omit the real-space transformation $\hat{U}_{\rr\rightarrow g\rr}$ operation hereafter.

\subsection{The Point-group Symmetry Constraints $F_s^\prime(g)$ on Surface Mass-field ${{ M}}_{r}$}
Recall arguments in Appendix~\ref{SM Justification for the Classification Assumption}, the surface symmetry representations $U_s(g)$ of point-group operations can be expressed as tensor product representation of $F(g)$ and $u(g)$.
But the symmetry transformation $F^\prime(g)$ in flavor space of mass field $ M_\rr$ can be obtained by multiplying representation $F(g)$ defined in local flavor space by a matrix resulting from real space coordinates transformation $U_{\rr\rightarrow g\rr}$:
\begin{equation}\label{SM mass fields constraints}
g\cdot M_\rr\equiv F^\prime(g)  M_\rr F^{\prime-1}(g) =  M_{g\cdot\rr}, F^\prime(g)=F(g)\times \mathrm{Matrix}.
\end{equation}

To derive the concrete form of the Matrix, we consider the case of two copies of pristine-TSCs with winding number $z_w=\pm1$, and the case of more copies is just a simple generalization of two copies.
As shown in Appendix~\ref{SM Point-group Symmetries in Surface Hilbert Space}, the \textit{compatible} and \textit{incompatible} symmetry representations at point $\nn_\rr$ have forms:
\begin{equation}
U(g,\rr)=\begin{pmatrix} \eta_1\chi_gu_g & 0\\ 0 &\eta_2u_g \\ \end{pmatrix}
,\ \ \begin{pmatrix}
      &\eta_1\mathrm{det}R_g u_g(i\hat{\bm{n}}_{\rr}\cdot\ssigma)\\-\eta_2u_g(i\hat{\bm{n}}_{\rr}\cdot\ssigma)&\\
    \end{pmatrix},
\end{equation}
respectively.
$\eta_{i}$ defines the flavor space representations of compatible/incompatible symmetries:
\begin{equation}
F(g)=\begin{pmatrix} \eta_1  & 0\\ 0 &\eta_2  \\ \end{pmatrix}
,\ \ \begin{pmatrix}
      &\eta_1 \\\eta_2&\\
    \end{pmatrix},
\end{equation}
which form a projective representation of $G$ with factor system described in Eq.~(\ref{SM factor system surface S and g}).
But the symmetry constraints acting on mass field in surface flavor space is
\begin{equation}\label{SM the transformation matrix of point-group symmetry}
F^\prime(g)=\begin{pmatrix} \eta_1\chi_g  & 0\\ 0 &\eta_2  \\ \end{pmatrix}
,\ \ \begin{pmatrix}
      &\eta_1 \det R_g\\-\eta_2&\\
    \end{pmatrix},
\end{equation}
when we fix the form of representations in single pristine-TSC degrees of freedom:
\begin{equation}
u(g)=e^{-i\theta_g\nn_g\ssigma/2},\ \ e^{-i\theta_g\nn_g\ssigma/2}(i\nn_\rr\cdot\ssigma).
\end{equation}
More specifically,
\begin{equation}
\begin{aligned}
&\text{compatible symmetry}&
U_g&\equiv (F(g)\times \mathrm{Matrix})\otimes  u_g =\begin{pmatrix} \eta_1\chi_g  & 0\\ 0 &\eta_2 \\ \end{pmatrix}\otimes u_g\\
&&&=\left[\begin{pmatrix} \eta_1  & 0\\ 0 &\eta_2 \\ \end{pmatrix}\times\begin{pmatrix} \chi_g& 0\\ 0 &1 \\ \end{pmatrix}\right]\otimes u_g\\
&\text{incompatible symmetry}&
U_g&\equiv (F(g)\times \mathrm{Matrix})\otimes  u_g =\begin{pmatrix}
      &\eta_1\mathrm{det}R_g \\-\eta_2&\\
    \end{pmatrix}\otimes[u_g(i\hat{\bm{n}}_{\rr}\cdot\ssigma)]\\
&&&=\left[\begin{pmatrix}
      &\eta_1 \\\eta_2&\\
    \end{pmatrix}\times\begin{pmatrix} -1& 0\\ 0 &\det R_g \\ \end{pmatrix}\right]\otimes [u_g(i\hat{\bm{n}}_{\rr}\cdot\ssigma)]\\
\end{aligned}
\end{equation}
\begin{equation}\label{bulk to surface transformation}
\Longrightarrow\ \ \ \ 
\begin{aligned}
&\text{compatible proper symmetry}&      F^\prime(g)=F(g)\times \ \ \tau_0  \\
&\text{compatible improper symmetry}&    F^\prime(g)=F(g)\times -\tau_z  \\
&\text{incompatible proper symmetry}&    F^\prime(g)=F(g)\times -\tau_z \\
&\text{incompatible improper symmetry}&  F^\prime(g)=F(g)\times -\tau_0 \\
\end{aligned}
\end{equation}
Thus, the symmetry constraint on the mass field written in compact form is:
\begin{equation}
[F(g)\tau_z^{(\chi_g-1)/2}]  M_\rr [\tau_z^{(1-\chi_g)/2}F^{-1}(g)] =  M_{g\cdot \rr}.
\end{equation}

\section{Classification of Gapless Modes on the Sphere and Homotopy Group Theory}\label{Classification of Gapless Modes on the Sphere and Homotopy Group Theory}
In this Appendix, we classify the distinct mass field on the sphere under symmetry constraints Eq.~(\ref{SM mass fields constraints}) by homotopy group theory and give the method to calculate the invariants.
We also give some concrete examples which calculate topological invariants of TCSCs under certain surface symmetry representations.

For $2L$-copies pristine-TSCs with Hamiltonian Eq.~(\ref{SM Hamiltonian in surface flavor space}), the mass field $ M_\rr$ can be determined by a $\rr$-dependent $L$-dimensional invertible matrix $m_\rr$ satisfying Eq.(\ref{SM mass fields constraints}) except for some gapless points.
The invertible matrix defines a set of continuous maps
\begin{equation}
\mathcal F: S^2\mapsto{\rm{GL}}(L,\mathbbm{R}).
\end{equation}
The gapless regions (lines/points) can be regarded as singularities on this map because the gapless region means it has at least one zero eigenvalues such that the determinant
\begin{equation}
\det m_\rr =\prod_{i=1}^{2L}\lambda_i=0,
\end{equation}
where $\lambda_i$ is $i$-th eigenvalue of $m_\rr$.
A matrix with zero determinant has not full rank, so it is not element in ${\rm{GL}}(L,\mathbbm{R})$.
So the classification of distinct mass field on the sphere can be transformed to homotopy group $\pi_n$ of ${\rm{GL}}(L,\mathbbm{R})$.

\subsection{Zeroth Homotopy Group of Mass-field}

The mass field $ M_\rr$ has gapless points on the path connecting a gapped general point $\rr$ and its image under $g$ if $\det (m_\rr m_{g\rr})=-1$, while there is a gapped line connecting the two points if $\det (m_\rr m_{g\rr})=1$.
So any mass field $ M_\rr$ having negative determinant has a globally irremovable gapless line, where $\det  M_\rr=0$, hosting a helical Majorana mode.
The state with one-dimensional gapless lines corresponds to a second-order state labeled by zeroth homotopy group $\pi_0({\rm{GL}}(L,\mathbbm{R}))\cong\mathbbm{Z}_2$.
The two points are chosen as an arbitrary gapped point on the sphere and its image under $g$.
The point-group symmetry representations can enforce the negative determinant.
Consider a concrete two-copies of the pristine-TSCs model, of which surface Hamiltonian is:
\begin{equation}
 H_s(\rr,\kk) = \tau_z\otimes (\kk\times\nn_\rr)\cdot\ssigma,
\end{equation}
where $\ttau$ and $\ssigma$ are the Pauli matrices on flavor and single pristine-TSC surface degrees of freedom, respectively.
The time-reversal and chiral symmetries are 
\begin{equation}
 T_s=\tau_0\otimes i\sigma_y K, \ \ {S}_s =\tau_z\otimes\nn_\rr\cdot\ssigma.
\end{equation}
The only possible $ T_s$-, $ S_s$-symmetric mass term is:
\begin{equation}
 M_\rr=\begin{pmatrix} 0&m_\rr\\m_\rr&0  \end{pmatrix}\otimes \sigma_0=m_\rr\tau_x\otimes\sigma_0,\ \ m_\rr\in\mathbbm R
\end{equation}
So the TCSCs is a $\mathbbm{Z}_2=1$ second-order state when Eq.(\ref{SM mass fields constraints}) imposes the constraint
\begin{equation}\label{SM 2nd symmetry constraints}
F^\prime(g)\begin{pmatrix} 0&m_\rr\\m_\rr&0  \end{pmatrix}\ \ F^\prime(g)=\begin{pmatrix} 0&m_{g\cdot\rr}\\m_{g\cdot\rr}&0  \end{pmatrix} \Rightarrow m_\rr=-m_{g\cdot\rr}.
\end{equation}
We translate Eq.~(\ref{SM 2nd symmetry constraints}) into a more physical language:
The mass-term can change slowly with respect to the scale of correlation length because each point $\rr$ on the sphere is a microscopically large but macroscopically small system.
The effective surface Hamiltonian becomes
\begin{equation}
 H_s= \iint_{S^2}\dd\rr\ \tau_z\otimes (\kk\times\nn_\rr)\cdot\ssigma+\iint_{S^2}\dd\rr\ \hat{\Psi}_\rr^\dagger M_\rr\hat{\Psi}_\rr
\end{equation}
In the whole surface Hilbert space $\mathcal V^s$, there is point group symmetry $g$ due to
\begin{equation}
[ H_s,U_s(g)]=0.
\end{equation}
It requires $m_\rr=-m_{g\cdot\rr}$, which necessitates the presence of domain walls on the sphere.
\begin{figure}[H]
\centering 
\includegraphics[width=0.17\textwidth]{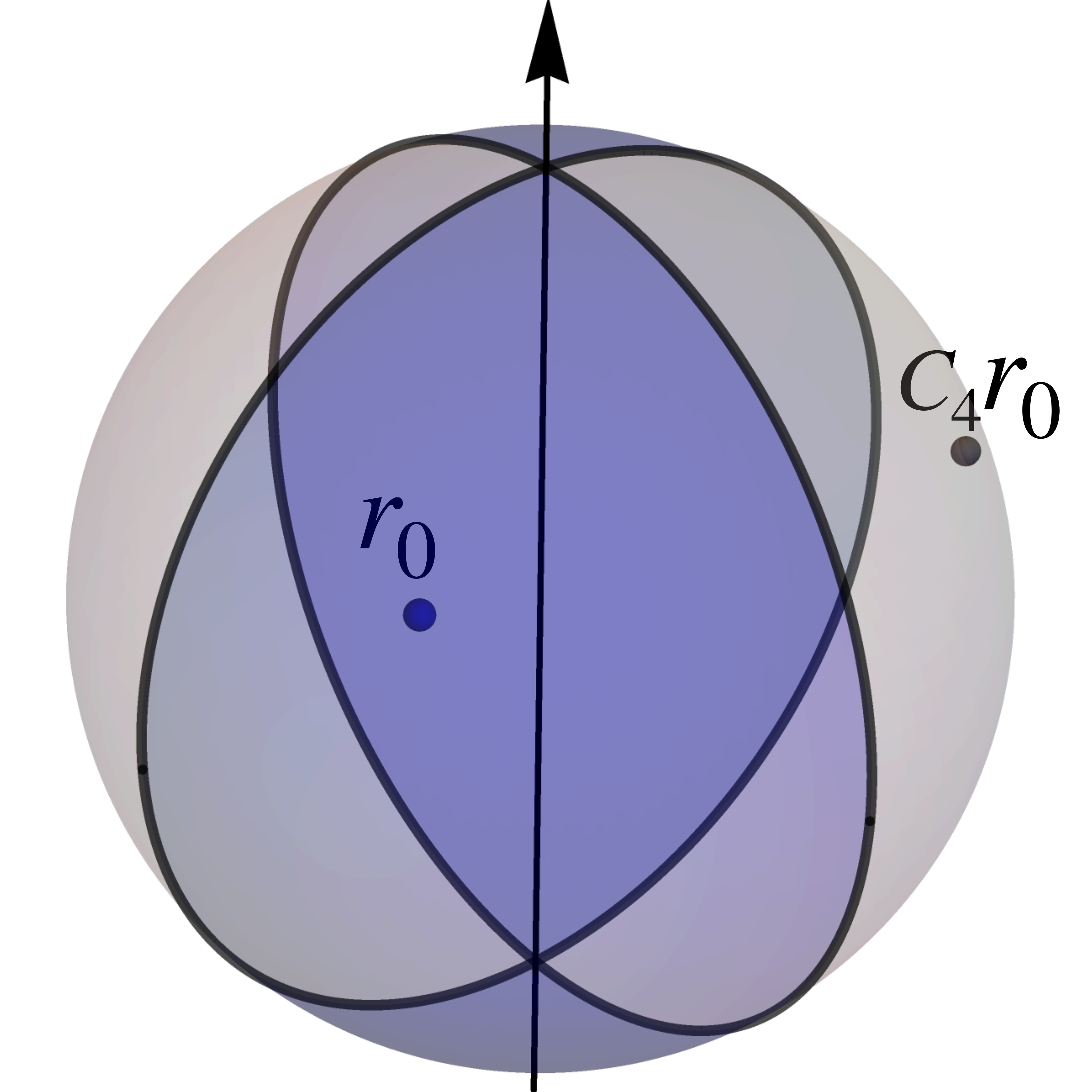}
\caption{Schematics of second-order surface states protected by incompatible $C_4$ symmetry, which is labeled by $\mathbbm{Z}_2=1$.
Two blue points are arbitrary points $\rr_0$ and its image under $C_4$, at which mass field $m_\rr$ have opposite sign.
Red lines represent gapless regions on the sphere due to spatial symmetry constraints.
}
\label{SM_2nd_example}
\end{figure}
Let us use a TCSCs with $C_4$-rotational odd pairing potential for an explicit example.
The only nontrivial factor of projective representation is
\begin{equation}
\omega(S,C_4)=-1.
\end{equation}
A reasonable choice of flavor space representation $F(C_4)$ and corresponding transformation matrix $F^\prime(C_4)$ is
\begin{equation}
F(C_4) = \begin{pmatrix}0&1\\1&0 \end{pmatrix},\ \ F^\prime(C_4) = \begin{pmatrix}0&1\\1&0\end{pmatrix}\begin{pmatrix}-1&0\\0&\det C_4 \end{pmatrix}=\begin{pmatrix}0&1\\-1&0 \end{pmatrix}.
\end{equation}
The mass field $m_\rr$ changes sign under $C_4$ operation due to
\begin{equation}
\{F^\prime(C_4),\tau_x\}=0\Rightarrow m_\rr=-m_{C_4\rr}.
\end{equation}
It requires that there is at least one gapless point on the path connecting $\rr$ and $C_4\rr$ as shown in Fig.\ref{SM_2nd_example} and corresponds to a second-order nontrivial TCSC with $\mathbbm Z_2=1$.
Another reasonable choice of flavor space representation $F(C_4)$ and corresponding transformation matrix $F^\prime(C_4)$ is
\begin{equation}
F_s(C_4) = \begin{pmatrix}0&-1\\1&0 \end{pmatrix},\ \ F_s^\prime(C_4) = \begin{pmatrix}0&-1\\1&0\end{pmatrix}\begin{pmatrix}-1&0\\0&\det C_4 \end{pmatrix}=\begin{pmatrix}0&-1\\-1&0 \end{pmatrix}.
\end{equation}
The mass field $m_\rr$ does not changes sign under $C_4$ operation due to
\begin{equation}
[F^\prime(C_4),\tau_x]=0\Rightarrow m_\rr=m_{C_4\rr}.
\end{equation}
It means that there is a uniform mass field on the sphere to fully gap the surface Hamiltoniann, which corresponds to a second-order trivial TCSC with $\mathbbm Z_2=0$.

Let us take a TCSC with even-parity pairing potential as another example. The only nontrivial factor of projective representation is
\begin{equation}
\omega(S,I)=-1.
\end{equation}
There is only one reasonable choice of flavor space representation $F(I)$ and corresponding transformation matrix $F^\prime(I)$:
\begin{equation}
F(I) = \begin{pmatrix}0&1\\1&0 \end{pmatrix},\ \ F^\prime(I) = \begin{pmatrix}0&1\\1&0\end{pmatrix}\begin{pmatrix}-1&0\\0&\det I \end{pmatrix}=\begin{pmatrix}0&-1\\-1&0 \end{pmatrix}.
\end{equation}
It means that there is a uniform mass field on the sphere to fully gap the surface Hamiltonian, which corresponds to a second-order trivial TCSC with $\mathbbm Z_2=0$.
So the classification of second-order with even-parity pairing potential reduces to $\emptyset$.

\subsection{Fundamental Group of Mass-field}
If the mass field $m_\rr$ fall into the trivial zeroth homotopy class, we can find a loop, $\iota$ on which the mass field belongs to one of two connected components ${{\rm{GL}}^\pm(L,\mathbbm R)}$, to calculate the invariant of the fundamental group (aka first homotopy group), which indicates the existence of stable defects in the area surrounded by $\iota$.
The group ${{\rm{GL}}^+(L,\mathbbm R)}$ has a fundamental group isomorphic to $\mathbbm Z_2$ for $L>2$.

This is consistent with the classification of topological defects: the time-reversal and chiral symmetry can stabilize $\mathbbm Z_2$ zero-dimensional defects.
But in general, the stability of zero-dimensional defects is a necessary but not sufficient condition for the occurrence of the surface state of third-order TCSC.
The reason is that a pair of zero-dimensional defects can be moved to the same point and annihilated together.
So some spatial symmetry is needed to keep them in different positions on the surface.

The spatial symmetries in 32 point groups can be divided into two types, one is the symmetry with fixed points, such as $C_n,M$, and the other is the symmetry without any fixed point, such as $S_{3,4,6},I$.
For mass field protected by the symmetry with fixed points will changes on-site symmetries at fixed points such that the mass field may not be an element in ${{\rm{GL}}(L,\mathbbm R)}$ on the sub-manifold, which will be discussed in Appendix \ref{SM Gapless Modes at Fixed Points on the Sphere}.

We now consider fundamental groups of mass field protected by spatial symmetry without fixed points.
For a third-order TCSCs whose mass field fall into trivial zeroth homotopy class, there are some locally stable but globally unfixed gapless points/circles on the sphere.
It can be moved on the sphere by adding symmetry-preserving terms.
But in each \textit{2-cell}, gapless region must exist if it is a third-order nontrivial state.
\begin{figure}[H]
\centering 
\includegraphics[width=0.5\textwidth]{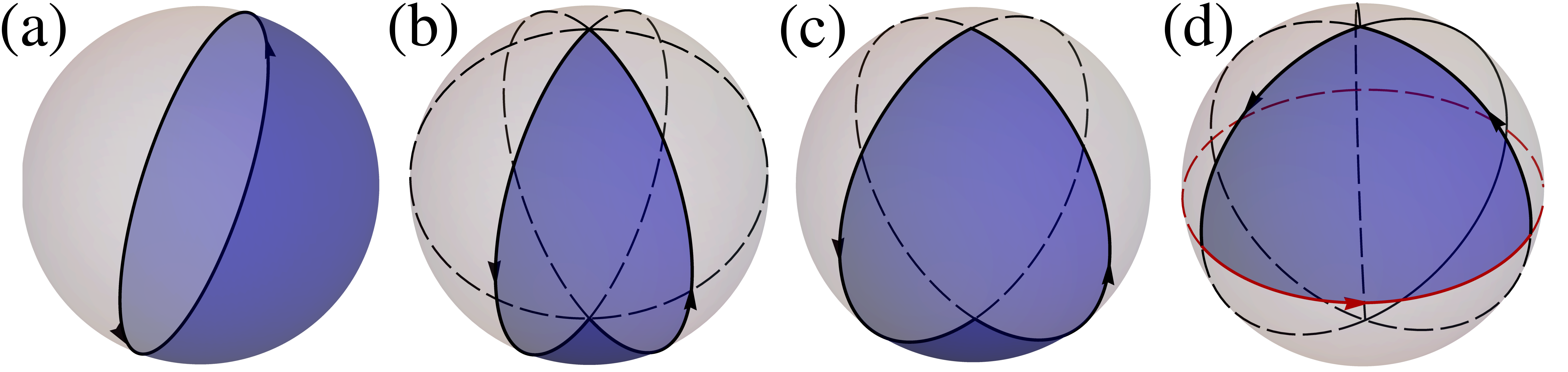}
\caption{The colored regions in (a)-(d) are possible choices of 2-cells when $I$, $S_3$, $S_4$, $S_6$ symmetries exist, respectively.
The $\mathbbm{Z}_2$ invariants of the first homotopy groups are defined on the solid black 1-loops on the sphere.
The solid red line in (d) is a path that is not related to another path on the 1-loop by symmetries.
}
\label{SM_1-loop}
\end{figure}
\noindent Here the definition of \textit{2-cell} is two-dimensional region on sphere in which no two distinct points in the same $2$-cell are related under point-group symmetry \cite{song2019topological}, as shown in Fig.~\ref{SM_1-loop}.
The existence of the gapless region in 2-cell can be labeled by the first homotopy group
\begin{equation}
\pi_1({\rm{GL}}(L,\mathbbm{R}))\cong \mathbbm{Z}_2.
\end{equation}
The $\mathbbm{Z}_2$ invariant can be calculated by parity of the "winding number" along the boundary of the $2$-cell.

The point-group symmetry representation enforces the parity of this winding number when we choose some certain loops:
If $\iota$ can be divided into several connected paths
\begin{equation}
\iota=\iota_1*\iota_2*\cdots*\iota_n,
\end{equation}
where the definition of the loop is
\begin{equation}
\iota_i :[0,1]\rightarrow {\rm{GL}}(L,\mathbbm{R}),\ \ \iota_i=g^i\iota_1,\ \  \iota_{i-1}(1)=\iota_i(0),\ \ \iota_n(1)=\iota_1(0),
\end{equation}
and $*$ represents the product of two paths:
\begin{equation}
\iota_i *\iota_j(s)=\left\{ \begin{array}{l}\iota_i(2s)\ \ \ \ \ \ \ \ \ \ \ 0\leqslant s\leqslant\frac{1}{2}\\\iota_j(2s-1)\ \ \ \ \frac{1}{2}\leqslant s\leqslant 1\end{array}\right. .
\end{equation}
The mass field $ M(\iota_1)$ is related to $ M(\iota_i)$ by relationship
\begin{equation}\label{SM loop constraints}
 M(\iota_i) = F^\prime(g^i) M(\iota_1)F^{\prime\dagger}(g^i).
\end{equation}
Thus the invariant of $\pi_1({\rm{GL}}(L,\mathbbm{R}))$ can be fixed by $F^\prime(g)$ so that it can be used to label TCSCs.
The boundaries of 2-cell satisfying the condition Eq.~(\ref{SM loop constraints}) are shown in solid black lines in Fig.~\ref{SM_1-loop}(a)-(c), while the boundary of 2-cell defined by $S_6$ does not satisfy the condition.

We now prove an important theorem that simplifies the enumerating procedure.
\begin{theorem}\label{thm1}
  The fundamental group of $\mathrm{GL}(L,\mathbb R)$ is isomorphic to the fundamental group of $\mathrm{SO}(L)$.
\end{theorem}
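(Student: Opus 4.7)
The plan is to establish a homotopy equivalence between $\mathrm{GL}(L,\mathbb{R})$ and $\mathrm{O}(L)$ via polar decomposition, and then reduce to $\mathrm{SO}(L)$ by noting that $\pi_1$ only sees the path-component of the basepoint. In more detail, I would begin by recalling that every invertible real matrix $A\in\mathrm{GL}(L,\mathbb{R})$ admits a unique decomposition $A=QP$, where $Q\in\mathrm{O}(L)$ and $P$ is symmetric positive-definite (take $P=\sqrt{A^T A}$ and $Q=AP^{-1}$). The map $A\mapsto(Q,P)$ is a homeomorphism
\begin{equation}
\mathrm{GL}(L,\mathbb{R})\;\xrightarrow{\sim}\;\mathrm{O}(L)\times\mathrm{Sym}^{+}(L),
\end{equation}
where $\mathrm{Sym}^{+}(L)$ denotes the cone of symmetric positive-definite matrices.

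The second step is to observe that $\mathrm{Sym}^{+}(L)$ is convex, hence contractible: the linear homotopy $P_t=(1-t)P+t\mathbbm{1}$ stays inside $\mathrm{Sym}^{+}(L)$ because a convex combination of positive-definite matrices is positive-definite. Combining with the polar decomposition gives an explicit strong deformation retraction
\begin{equation}
H(A,t) \;=\; Q\bigl((1-t)P+t\mathbbm{1}\bigr),\qquad H(A,0)=A,\quad H(A,1)=Q,
\end{equation}
so that $\mathrm{GL}(L,\mathbb{R})$ is homotopy equivalent to $\mathrm{O}(L)$, and in particular $\pi_1(\mathrm{GL}(L,\mathbb{R}))\cong\pi_1(\mathrm{O}(L))$ (basepoints in the same component).

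Finally, I would close by using the decomposition $\mathrm{O}(L)=\mathrm{SO}(L)\sqcup \mathrm{O}^{-}(L)$ into two path-components, each diffeomorphic to $\mathrm{SO}(L)$ (multiplication by any fixed element of determinant $-1$ is a homeomorphism exchanging them). Since $\pi_1$ is defined relative to a basepoint and therefore only records the fundamental group of the ambient path-component, $\pi_1(\mathrm{O}(L))\cong\pi_1(\mathrm{SO}(L))$, which together with the previous step yields the claim. The only point requiring a little care is the final sentence, since $\mathrm{GL}(L,\mathbb{R})$ is disconnected; one should state explicitly that the identification is understood componentwise, so that choosing a basepoint in $\mathrm{GL}^{+}(L,\mathbb{R})$ (the relevant component for the mass-field classification in Appendix~\ref{Classification of Gapless Modes on the Sphere and Homotopy Group Theory}) produces the isomorphism with $\pi_1(\mathrm{SO}(L))$. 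No serious obstacle arises; the argument is a standard application of polar decomposition, and the delicate bookkeeping is just the basepoint/connected-component convention.
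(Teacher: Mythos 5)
Your proof is correct, but it takes a genuinely different route from the paper. The paper proves the statement loop-by-loop using the QR (Gram--Schmidt) decomposition: for a loop $M(t)=Q(t)R(t)$ with $Q(t)\in\mathrm{SO}(L)$ and $R(t)$ upper-triangular, it linearly switches off the off-diagonal entries of $R$ to deform the loop into $\mathrm{SO}(L)$. You instead use the polar decomposition $A=QP$ with $P=\sqrt{A^TA}\in\mathrm{Sym}^{+}(L)$, exploit the convexity (hence contractibility) of $\mathrm{Sym}^{+}(L)$ to exhibit a strong deformation retraction of $\mathrm{GL}(L,\mathbb{R})$ onto $\mathrm{O}(L)$, and then pass to $\mathrm{SO}(L)$ via the basepoint-component argument. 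Your version buys a bit more: it is a global homotopy equivalence (so it identifies all homotopy groups componentwise, not just $\pi_1$), and the contraction of the positive-definite factor is manifestly well-defined, whereas the paper's deformation $R_{mn}(t,s)$ leaves the diagonal of $R$ untouched, so its endpoint is $Q(t)\,\mathrm{diag}(R_{11}(t),\dots,R_{LL}(t))$ rather than $Q(t)$ on the nose and implicitly needs one more (easy) retraction of the diagonal factor. The paper's approach, on the other hand, lands directly in $\mathrm{SO}(L)$ in the form used by the subsequent block-diagonalization of loops into $\mathrm{SO}(2)$ factors. Your explicit remark that the identification is componentwise (and that the relevant component for the mass field is $\mathrm{GL}^{+}(L,\mathbb{R})$) is a point the paper glosses over, and it is the right way to make the theorem statement precise.
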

\begin{proof}
For a given loop $f: t\in [0,1] \rightarrow M\in \mathrm{GL}(L,\mathbbm R)$, for each $t \in [0,1]$, the columns of $M(t)$ can be regarded as $L$ linearly independent vectors. 
Using the standard QR decomposition (or Gram-Schmidt orthogonalization algorithm), we get
\begin{equation}
  M(t) = Q(t) R(t),
\end{equation}
where $Q(t) \in \mathrm{SO}(L)$ and $R(t)$ is upper-triangle matrix. 
We can then define a deformation:
\begin{equation}
  g(t,s) \equiv Q(t) R(t,s), \quad g(t,0) = M(t), \quad g(t,1) = Q(t).
\end{equation}
where $R(t,s)$ is
\begin{equation}
  R_{mn}(t,s) = \begin{cases}
    R_{mm}(t) & m=n \\
    (1-s)R_{mn}(t) & m < n
  \end{cases}.
\end{equation}
In this way we proved that for any loop in $\mathrm{GL}(L,\mathbb R)$, there is a loop in $\mathrm{SO}(L)$ that is homotopic to it. Since $\mathrm{SO}(L) \subset \mathrm{GL}(L,\mathbb R)$, the homotopy class of loops in $\mathrm{GL}(L,\mathbb R)$ and $\mathrm{SO}(L)$ is one-to-one correspondent,
\begin{equation}
  \pi_1(\mathrm{GL}(L,\mathbb R)) \cong \pi_1(\mathrm{SO}(L)) \cong \begin{cases}
    \mathbbm Z & L = 2 \\
    \mathbbm Z_2 & L>2
  \end{cases}.
\end{equation}
\end{proof}
We remark here that Theorem \ref{thm1} essentially tells us that for the purpose of the third-order classification of the mass field, it is sufficient to consider only the mass term in $\mathrm{SO}(L)$.

In the following, we will discuss how the invariant is computed, and why it is a well-defined topological invariant.
We first prove a crucial property of the fundamental groups of Lie groups.
\begin{lemma}\label{lemma1}
  For any Lie group $G$, the homotopy class of the loop $fg(t) \equiv f(t)g(t)$ is $f * g$. 
\end{lemma}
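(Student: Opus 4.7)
My plan is to apply the Eckmann--Hilton argument. Without loss of generality, take both loops to be based at the identity $e\in G$ (otherwise left-translate by $f(0)^{-1}$ and $g(0)^{-1}$, which is a continuous automorphism of $G$ and preserves homotopy classes). The space of based loops then carries two natural composition laws: the usual concatenation $*$, and the pointwise group product $\cdot$ defined by $(f\cdot g)(t)=f(t)g(t)$. Both have the constant loop $c_e$ as a two-sided identity, and the continuity of multiplication in $G$ ensures both operations pass to $\pi_1(G,e)$ and are compatible with homotopies.

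The key technical input is the \emph{interchange law}
\begin{equation}
(f_1 * f_2)\cdot(g_1 * g_2)=(f_1\cdot g_1)*(f_2\cdot g_2),
\end{equation}
which I would verify directly by evaluating both sides on $t\in[0,\tfrac12]$ and on $t\in[\tfrac12,1]$; both sides agree pointwise because concatenation and pointwise product act independently on the two halves of the interval.

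Applied with $(f_1,f_2,g_1,g_2)=(f,c_e,c_e,g)$, the interchange law gives the \emph{exact} identity
\begin{equation}
(f * c_e)\cdot(c_e * g) = f * g,
\end{equation}
since $f\cdot c_e=f$ and $c_e\cdot g=g$ pointwise. On the other hand, $f * c_e$ is homotopic (rel basepoint) to $f$ by the standard reparametrisation homotopy on $[0,1]$, and likewise $c_e * g\simeq g$. Because pointwise multiplication is continuous, these homotopies can be multiplied pointwise to produce a homotopy
\begin{equation}
(f * c_e)\cdot(c_e * g)\;\simeq\; f\cdot g.
\end{equation}
Combining the two displays yields $f\cdot g\simeq f * g$, which is the claim.

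If one prefers an explicit homotopy instead of the Eckmann--Hilton shuffle, I would set
\begin{equation}
H(t,s)=f\bigl(\min(t(1+s),1)\bigr)\,g\bigl(\max(0,t(1+s)-s)\bigr),
\end{equation}
which linearly interpolates between $H(t,0)=f(t)g(t)$ and $H(t,1)=(f * g)(t)$, while keeping $H(0,s)=H(1,s)=e$. I do not anticipate a real obstacle: the only subtleties are bookkeeping around basepoints (handled by left-translation) and the observation that multiplication in $G$ sends pairs of homotopies to a homotopy, which is automatic from joint continuity of the group operation.
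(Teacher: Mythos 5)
Your proof is correct. The paper proves the lemma by writing down a single explicit piecewise homotopy $F(t,s)$ that slides $g$ off to the second half of the interval, which is the same underlying mechanism as your argument; your Eckmann--Hilton presentation just packages it more conceptually via the interchange law $(f_1 * f_2)\cdot(g_1 * g_2)=(f_1\cdot g_1)*(f_2\cdot g_2)$ together with $f * c_e\simeq f$ and $c_e * g\simeq g$. What your route buys is transparency (each step is either a pointwise identity or a standard reparametrisation, with continuity of the group product doing all the work) and it also yields commutativity of $\pi_1(G)$ for free; the paper's route is shorter but slightly less careful -- its homotopy actually ends at a reparametrisation $(fg)(2t)$ concatenated with the constant loop rather than at $fg(t)$ itself, whereas your explicit $H(t,s)=f\bigl(\min(t(1+s),1)\bigr)\,g\bigl(\max(0,t(1+s)-s)\bigr)$ interpolates exactly between $fg$ and $f*g$ with the correct basepoint behaviour. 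Either version suffices for the use made of the lemma in the classification of the surface mass field.
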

\begin{proof}
Without loss of generality, we set the base point of any loop on $G$ to be the identity element, i.e. $f(0)=f(1)=e$.
We then write down the explicit homotopy between $fg(t)$ and $(f*g)(t)$:
\begin{equation}
  F(t,s) = \begin{cases}
    f(2t) g(2ts) & 0\le t\le \frac{1}{2} \\
    g(2(1-s)t+2s-1) & \frac{1}{2} \le t \le 1
  \end{cases}.
\end{equation}
Note that $F(t,0)=(f*g)(t)$ and $F(t,1)=fg(t)$.
\end{proof}

\begin{corollary}\label{cor1}
For two loops $f, g$ defined on $\mathrm{SO}(2 N)$ and $\mathrm{SO}(2 M)$ respectively, the direct sum
\begin{equation}
  (f \oplus g)(t) \equiv f(t) \oplus g(t) \in \mathrm{SO}(2N+2M)
\end{equation}
defines a loop on $\mathrm{SO}(2N+2M)$.
The homotopy class of $f \oplus g$ has the $\mathbbm Z_2$ invariant $\xi=\xi_{f}+\xi_{g}$.
\end{corollary}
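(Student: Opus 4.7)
The plan is to reduce the direct sum to a product of two commuting loops and then invoke Lemma \ref{lemma1} together with the additivity of $\pi_1$ under the usual concatenation.

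First, I would define the two ``stabilized'' loops in $\mathrm{SO}(2N+2M)$:
\begin{equation}
\tilde f(t) \equiv f(t) \oplus \mathbbm{1}_{2M}, \qquad \tilde g(t) \equiv \mathbbm{1}_{2N} \oplus g(t),
\end{equation}
and observe the obvious factorization $(f\oplus g)(t) = \tilde f(t)\, \tilde g(t)$ as matrix product inside $\mathrm{SO}(2N+2M)$. Since $\tilde f$ and $\tilde g$ are both based at the identity, Lemma \ref{lemma1} applies verbatim and yields a homotopy
\begin{equation}
(f\oplus g) \;\simeq\; \tilde f * \tilde g
\end{equation}
as loops in $\mathrm{SO}(2N+2M)$. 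Because the $\mathbbm Z_2$ invariant is a group homomorphism from $\pi_1$ to $\mathbbm Z_2$ and concatenation is the group product, I immediately get $\xi_{f\oplus g} = \xi_{\tilde f} + \xi_{\tilde g}$.

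It then remains to show $\xi_{\tilde f} = \xi_f$ and $\xi_{\tilde g} = \xi_g$, i.e.\ that the block-diagonal inclusion $\mathrm{SO}(2N)\hookrightarrow \mathrm{SO}(2N+2M)$ preserves the $\pi_1$ invariant. The cleanest way is to appeal to the standard stability of $\pi_1$ of the orthogonal group: the fibration $\mathrm{SO}(n)\to \mathrm{SO}(n+1)\to S^n$ combined with $\pi_1(S^n)=\pi_2(S^n)=0$ for $n\ge 3$ forces the inclusion $\mathrm{SO}(n)\hookrightarrow \mathrm{SO}(n+1)$ to induce an isomorphism on $\pi_1$, and iterating gives the claim for our block embedding whenever $2N\ge 3$. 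Concretely, the nontrivial generator of $\pi_1(\mathrm{SO}(2N))$ can be represented by a $2\pi$ rotation in a fixed $2$-plane, and this generator maps to the analogous $2\pi$ rotation in $\mathrm{SO}(2N+2M)$, which is again the nontrivial generator; equivalently, lifting to $\mathrm{Spin}$ and checking whether the lifted path returns to $+\mathbbm{1}$ or $-\mathbbm{1}$ is unaffected by padding with a trivial $\mathrm{Spin}(2M)$ factor.

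The main obstacle, and the only nontrivial step, is this last invariance of the $\pi_1$ invariant under stabilization: one must rule out the possibility that a loop nontrivial in $\mathrm{SO}(2N)$ becomes contractible after embedding into a larger orthogonal group. I would handle this either by citing the fibration argument above (quick but non-self-contained) or, for an explicit proof, by constructing the universal covering $\mathrm{Spin}$ map directly and noting that the $\mathbbm Z_2$ invariant of $f$ is detected by the endpoint of any lift $\hat f$ to $\mathrm{Spin}(2N)$, which is compatible with the canonical inclusion $\mathrm{Spin}(2N)\hookrightarrow \mathrm{Spin}(2N+2M)$ covering the block-diagonal embedding. Low-dimensional cases such as $2N=2$ require separate attention since $\pi_1(\mathrm{SO}(2))=\mathbbm Z$, but the corollary is stated with $2N,2M$ and is used in the paper in the stable regime $L>2$, so this does not cause trouble.
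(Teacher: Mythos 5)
Your proposal is correct and follows essentially the same route as the paper's proof: the same stabilized loops $\tilde f, \tilde g$, the same factorization $(f\oplus g)=\tilde f\,\tilde g$, the appeal to Lemma~\ref{lemma1}, and the stability of $\pi_1$ under the block-diagonal inclusion. The only difference is that you justify the stabilization step explicitly (via the fibration or the $\mathrm{Spin}$ lift) where the paper simply asserts it, and you correctly flag the $2N=2$ caveat.
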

\begin{proof}
We assume the loop $f$ and $g$ is labeled by the $\mathbbm Z_2$ invariants $\xi_f$ and $\xi_g$ respectively.
Define two embeddings from $\mathrm{SO}(2N)$ and $\mathrm{SO}(2M)$ to $\mathrm{SO}(2N+2M)$:
\begin{equation}
\begin{aligned}
  \tilde{f}(t) &\equiv f(t)\oplus \mathbbm I_{2M \times 2M}, \\
  \tilde{g}(t) &\equiv \mathbbm I_{2N \times 2N} \oplus g(t).
\end{aligned}
\end{equation}
Since the first homotopy class of $\mathrm{GL}(L>2,\mathbbm R)$ is stable: $\pi_1(\mathrm{GL}(L>2,\mathbbm R)) = \mathbbm Z_2$.
The homotopy classes of the embedded loops are labeled by the same invariants, i.e., $\xi_{f'} = \xi_f$, $\xi_{g'} = \xi_g$.
Using Lemma \ref{lemma1}, 
\begin{equation}
  \xi_{\tilde{f}\tilde{g}} 
  = \xi_{\tilde{f}*\tilde{g}} 
  = \xi_{\tilde{f}} + \xi_{\tilde{g}}
  = \xi_{f} + \xi_{g}.
\end{equation}
Note that $f\oplus g = \tilde{f}\tilde{g}$, we thus proved the corollary.
\end{proof}

\begin{corollary}
Any loop $f(t)$ on the orthogonal group $\mathrm{SO}(2N)$ is homotopic to a path with block-diagonal form:
\begin{equation}
  f(t) \sim f_{1}(t) \oplus f_{2}(t) \oplus \cdots \oplus f_{N}(t),
\end{equation}
where each $f_{i}(t)$ is a loop on $\mathrm{SO}(2)$, and the homotopy class of $f$ (labelled by a $\mathbb{Z}_{2}$ invariant $\xi$ ) is
\begin{equation}
  \xi = \exp \left(i\pi\sum_{i=1}^{N} W[f_i]\right) \equiv \exp\left(i\pi W_f\right) ,
\end{equation}
where $W[f_i]$ is the winding number for $f_{i}$ on $S O(2)$, and we have defined $W_f$ to be the sum of $W[f_i]$'s, which is only well-defined up to addition of an arbitrary even integer.
\end{corollary}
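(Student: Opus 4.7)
The plan is to reduce an arbitrary loop on $\mathrm{SO}(2N)$ to a loop lying in the maximal torus, where the block-diagonal structure is manifest, and then apply Corollary~1 repeatedly to sum the block contributions.

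First, I would show that the inclusion of the maximal torus $T \equiv \mathrm{SO}(2)\times\cdots\times\mathrm{SO}(2) \hookrightarrow \mathrm{SO}(2N)$ ($N$ factors) induces a surjection $\pi_1(T) \twoheadrightarrow \pi_1(\mathrm{SO}(2N))$. The cleanest route is the long exact sequence of homotopy groups associated to the principal fibration $T \to \mathrm{SO}(2N) \to \mathrm{SO}(2N)/T$. The base $\mathrm{SO}(2N)/T$ is a generalized flag manifold, which is simply connected (it admits a Schubert CW decomposition with cells only in even dimensions), so the connecting map forces $\pi_1(T) \to \pi_1(\mathrm{SO}(2N))$ to be surjective. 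Consequently, any loop $f(t) \in \mathrm{SO}(2N)$ is freely homotopic to a loop $\tilde f(t) \in T$, which automatically has the form $\tilde f(t) = f_1(t) \oplus f_2(t) \oplus \cdots \oplus f_N(t)$ with $f_i(t) \in \mathrm{SO}(2)$.

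Second, I would invoke Corollary~1 inductively on this decomposition, yielding $\xi_{\tilde f} = \xi_{f_1} + \xi_{f_2} + \cdots + \xi_{f_N} \pmod 2$. For each $i$, the loop $f_i$ on $\mathrm{SO}(2) \cong S^1$ is classified by its integer winding number $W[f_i] \in \pi_1(\mathrm{SO}(2)) = \mathbbm Z$, and the stabilization map $\pi_1(\mathrm{SO}(2)) \to \pi_1(\mathrm{SO}(2N)) = \mathbbm Z_2$ induced by inclusion sends $W[f_i]$ to its parity, so $\xi_{f_i} = e^{i\pi W[f_i]}$. Combining, $\xi_f = \xi_{\tilde f} = \exp\!\left(i\pi \sum_{i=1}^N W[f_i]\right)$, which is the stated formula with $W_f \equiv \sum_i W[f_i]$. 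Well-definedness modulo $2$ follows because the block-diagonal decomposition is not canonical: different Weyl-conjugate embeddings of $T$ or different homotopies from $f$ to $\tilde f$ can shift individual $W[f_i]$'s, but only the total parity survives as the invariant of $\pi_1(\mathrm{SO}(2N)) = \mathbbm Z_2$, so $W_f$ is ambiguous only by an even integer.

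The main obstacle is Step~1, namely establishing that loops can be deformed into the maximal torus. While this is a classical fact about compact Lie groups, it is the nontrivial topological input; the remaining steps are purely algebraic consequences of the already-proven Lemma~\ref{lemma1} and Corollary~\ref{cor1}. A more concrete alternative would be to construct an explicit continuous block-diagonalization of $f(t)$ via a continuous choice of orthonormal eigenframes of $f(t)+f(t)^T$, but this approach requires careful handling of eigenvalue crossings along the loop and is most efficiently packaged by the abstract fibration argument sketched above.
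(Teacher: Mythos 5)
Your proposal is correct, but it reaches the block-diagonal form by a genuinely different route than the paper. The paper constructs the decomposition explicitly: for each $t$ it diagonalizes $f(t)$ over $\mathbb{C}$, pairs conjugate eigenvalues, passes to the real basis $V_i^{\pm}\propto V_i\pm V_i^*$ to write $f(t)=V(t)\left[f_1(t)\oplus\cdots\oplus f_N(t)\right]V^T(t)$ with $V(t)$ real orthogonal, and then contracts $V(t)$ to the identity (arguing the obstruction to doing so is excluded by time-reversal symmetry). You instead invoke the principal fibration $T\to \mathrm{SO}(2N)\to \mathrm{SO}(2N)/T$ and the simple connectivity of the flag manifold to get surjectivity of $\pi_1(T)\to\pi_1(\mathrm{SO}(2N))$, so that $f$ is homotopic to \emph{some} loop in the maximal torus. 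Your argument is cleaner and sidesteps the delicate points in the paper's construction (continuity of the eigenframe through eigenvalue crossings, and the contractibility of $V(t)$ — which, incidentally, is automatic by Lemma~\ref{lemma1} since $[VDV^{-1}]=[V]+[D]-[V]=[D]$); the cost is that your block-diagonal representative is only known to exist in the homotopy class rather than being read off from $f$ itself, which is all the corollary needs but is less constructive. The second half of your argument — embedding each $\mathrm{SO}(2)$ block, using the stabilization map $\pi_1(\mathrm{SO}(2))=\mathbb{Z}\to\pi_1(\mathrm{SO}(2N))=\mathbb{Z}_2$ to reduce the winding number mod $2$, and summing via Lemma~\ref{lemma1} and Corollary~\ref{cor1} — is essentially identical to the paper's, and your account of why $W_f$ is only defined modulo $2$ matches the intended meaning.
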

\begin{proof}
We first proof that any of such loop $f(t)$ can be continuously deformed to the block-diagonal form. 
For any $t$, $f(t)$ can be diagonalized by a unitary matrix:
\begin{equation}
  f(t) = U(t) \mathrm{diag}(\lambda_1,\lambda_1^*,\cdots,\lambda_N,\lambda_N^*) U^\dagger(t),
\end{equation}
Note that the eigenvalues comes in pairs because of the reality condition.
That is, for a eigenvector
\begin{equation}
  f(t) V_i = \lambda_i V_i \quad \Longrightarrow \quad f(t) V_i^* = \lambda_i^* V_i^*.
\end{equation}
We can make a basis transformation in this two-dimensional subspace:
\begin{equation}
  V_{i}^{\pm} \propto V_i \pm V_i^*.
\end{equation}
Under sub choice of basis, we have
\begin{equation}
  f(t) = V(t) \left[f_{1}(t) \oplus f_{2}(t) \oplus \cdots \oplus f_{N}(t)\right] V^T(t),
\end{equation}
where $V(t)$ is real orthogonal matrix and $f_{1}(t) \oplus f_{2}(t) \oplus \cdots \oplus f_{N}(t)$ is in block-diagonal form.
We can then use the homotopy to prove the homotopic relation:
\begin{equation}
  F(t,s) = V(t,s) \left[f_{1}(t) \oplus f_{2}(t) \oplus \cdots \oplus f_{N}(t)\right] V^T(t,s),
\end{equation}
where $V(t,s)$ is a continuous loop from $V(t)$ to identity.
We remark here that (1) we can use the degree of freedom in defining $V_i^\pm$ to make $V(t)\in \mathrm{GL}^+(L,\mathbbm R)$, and thus is loop connected to the identity, and (2) the contraction from $V(t)$ to identity is always possible, since the obstruction to do so implies the nonzero Chern number of the system, which is forbidden by the time-reversal symmetry.

The computation of the topological invariant follows the same strategy as Corollary \ref{cor1}.
The only subtlety is that $\pi_1(\mathrm{SO}(2))$ is $\mathbbm Z$ instead of $\mathbbm Z_2$.
Now consider the embedding
\begin{equation}
  \tilde{f}_i(t) \equiv \mathbbm{I}_{(2i-2)\times(2i-2)}\oplus f_i(t) \oplus \mathbbm{I}_{(2N-2i)\times(2N-2i)}.
\end{equation}
First consider the case $\xi_{f_i} = \pm 1$, we argue that its $\mathbbm Z_2$ invariant should be $1$.
It follows from the fact that any loops is homotopic to the direct sum of $\mathrm{SO}(2)$ loops, and any $\mathrm{SO}(2)$ loop is equivalent to multiple $\xi=\pm 1$ loops.
If such basic element has zero $\mathbbm Z_2$ invariant, the homotopy class of $\mathrm{SO}(L)$ would be trivial.
On the other hand, if the $\mathbbm Z_2$ invariant $\xi_f = 1$, the invariant of any loop can be computed by the above procedure, and the result is the party of the total winding number.
\end{proof}

The homotopy class of the mass discussed above helps us attach a $\mathbbm Z_2$ number to a state.
To do so, we first decomposed the representation into irreducible representations of $S_3$/$S_4$, each irreducible representation determines a $\mathbbm Z_2$ invariant by its own, the total invariant is the sum.
One subtlety, however, may come from the representation for which the mass field must vanished on the north and south poles as the loop (along with the mass should not be singular) can not be constructed.
We argue that such representations containing singularities gives trivial invariant and can be safely removed.
To bypass the massless point, consider the special loop in Fig.~\ref{SM loop singularity}, where we made an infinitesimal detour avoiding the north/south pole.
\begin{figure}[H]
\centering 
\includegraphics[width=0.35\textwidth]{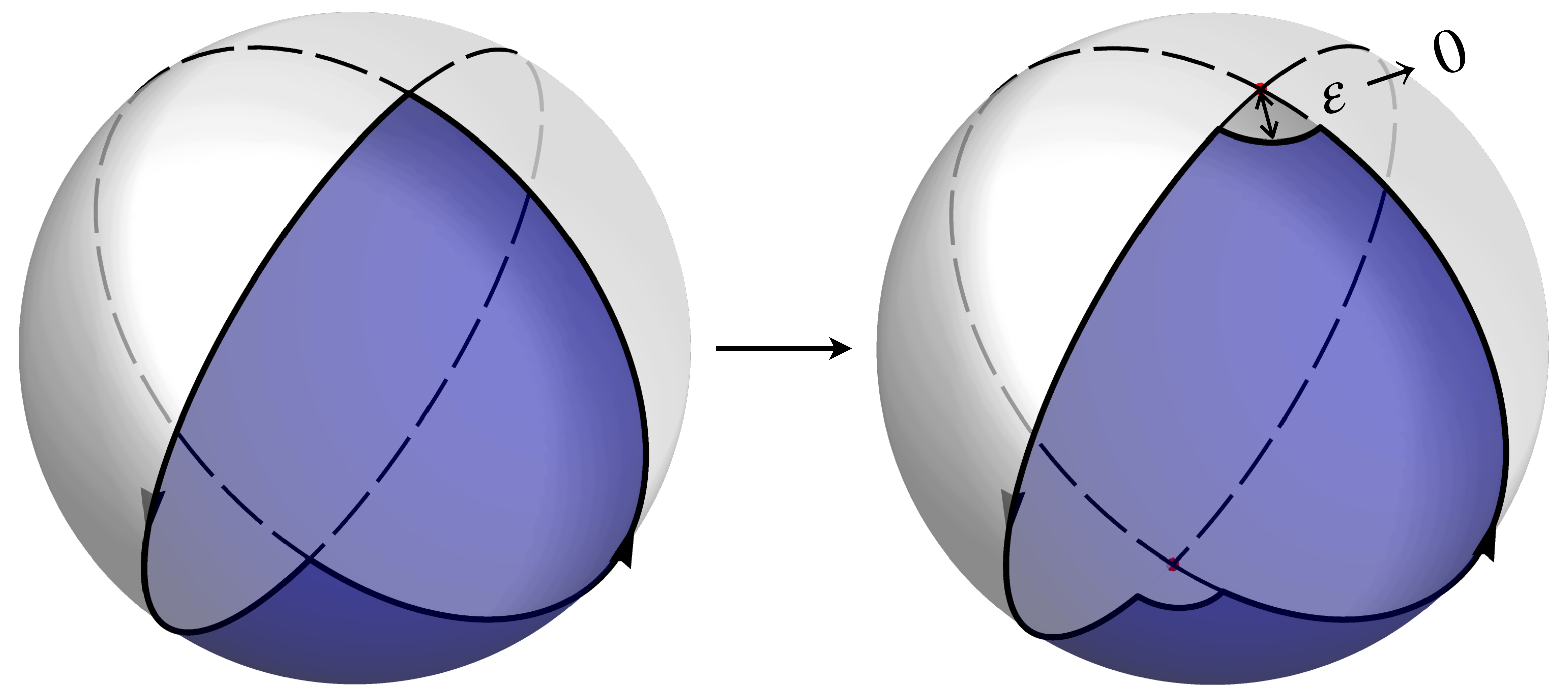} 
\caption{Symmetric loop of $S_4$ for the singular case, where the detour from the north/south pole is a quarter-arc of radius $\varepsilon$. The final result is regarded as the limit $\varepsilon\rightarrow 0$ being taken.}
\label{SM loop singularity}
\end{figure}

Note that the loop is not symmetric any more, and the mass field along two arcs will not be constraint.
The parity of the winding will be arbitrary in this case, and the statement holds in the limit $\varepsilon\rightarrow 0$.
Thus the invariant is regarded to be zero.
On the other hand, for the case where there is not symmetry-enforced singularity, although we can make the same detour, in the limit $\varepsilon\rightarrow 0$ the contribution from two arcs is negligible, and total winding will approach the same result in this limit.

\subsection{Gapless Modes at Fixed Points on the Sphere}\label{SM Gapless Modes at Fixed Points on the Sphere}

Some spatial symmetries in 32 point groups have fixed points $\rr_0$ on the sphere.
For example, the fixed points of $C_n$ are the intersection of the axis of rotation and the sphere.
At the fixed points, the spatial symmetry becomes local symmetry such that there are some Hilbert subspaces that are invariant under a point group symmetry.
Thus, the surface Hamiltonian can be block-diagonalized into subspaces span by orthogonal eigenvectors:
\begin{equation}
[F^\prime(g), H_s(\kk,\rr_0)]=0 \rightarrow  H = H_{e_1}\oplus\cdots\oplus  H_{e_n}
\end{equation}
For each subspace labeled by eigenvalue $e_i$ of $g$, we can classify $ H_{e_i}$.
From the perspective of real-space, the zero-dimensional gapless modes at fixed points are locally identical to gapless edge modes of one-dimensional system in AZ-class (two-dimensional for mirror), according to the presence or absence of $T,P,S$ in each eigenspace \cite{fang2017topological}.

Generally speaking, the non-identity surface flavor transformation matrix $F^\prime(g)$ enforces several pairs of zero-dimensional gapless modes at fixed points.
It can be seen from the fact \cite{PhysRevB.82.115120,PhysRevResearch.2.043300} that the Dirac massive Hamiltonian
\begin{equation}
 H(\kk,\phi)= \sum_i k_{i} \Gamma_{i+2}+m_0[\cos(\mathcal N \phi)\Gamma_1+\sin(\mathcal N \phi)\Gamma_2]
\end{equation}
traps $\mathcal N$ pairs of Majorana zero modes, where $\mathcal N$ is the winding number of mass field $m(\phi)$.
\subsubsection{TCSC with rotation-even pairing potential}

\begin{figure}[H]
\centering 
\includegraphics[width=0.95\textwidth]{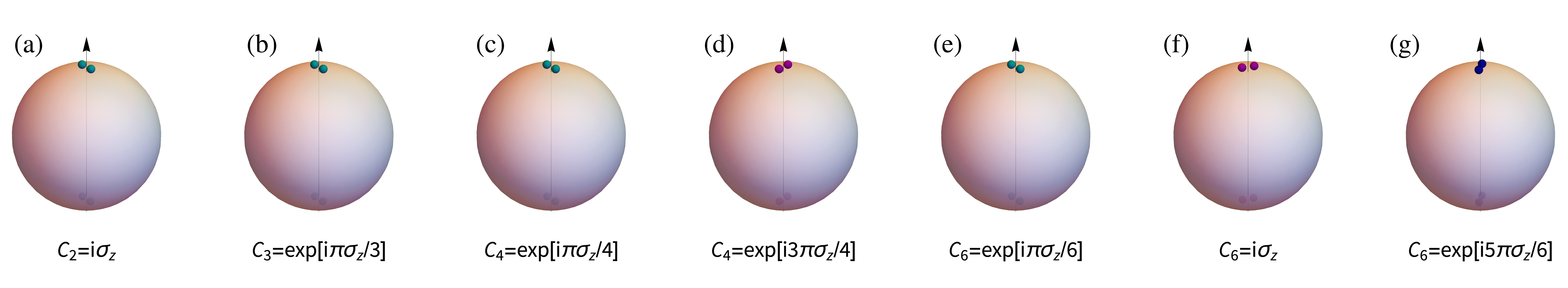} 
\caption{The zero-dimensional gapless modes at fixed points protected by $C_n$ symmetry. $C_n=\exp[im\pi\sigma_z/n]$ is the representation furnished by Majorana zero modes.}
\label{SM Cn even-parity}
\end{figure}

Due to rotation-even pairing potential, the chiral symmetry commute with $n$-fold rotation symmetry
\begin{equation}
[U_s(S),U_s(C_n)]=0
\end{equation}
Thus, there is chiral symmetry in each eigenspace labeled by the eigenvalue of $C_n$.
Each pair of Majorana zero modes can be labeled by eigenvalues of chiral symmetry and rotational symmetry.
Two pairs of Majorana zero modes with the same chiral symmetry eigenvalues cannot annihilate with each other.
The chiral operation representation in the two Majorana basis is $\tau_0$ such that there is no mass term that anticommutes with chiral symmetry to gap out the Majorana zero modes.
It implies that the classification of zero-dimensional gapless modes at fixed points in each eigenspace with complex eigenvalues of spatial symmetry is $\mathbbm Z$.
On the other hand, two pairs of Majorana zero modes with different spatial symmetry eigenvalues also cannot annihilate with each other.
Because the hybridization between two eigenspaces is forbidden by spatial symmetry.
It implies that TSCs whose Majorana zero modes have different angular momentum are topologically inequivalent to each other, and the classification is the direct product of several $\mathbbm Z^{e_i}$:
\begin{equation}
\mathcal C_{\rm{3rd}}(g)=\bigotimes_i \mathbbm Z^{e_i},
\end{equation}
where $e_i$ represents of eigenvalues of $g$.
Because of the presence of time-reversal symmetry and $\{T,S\}=0$, $z^{e_i}$ Majorana zero modes with chiral eigenvalues $+1$ in $e_i$ sub-eigenspace means there are $z^{e_i}$ Majorana zero modes with chiral eigenvalues $-1$ in $e_i$ sub-eigenspace:
\begin{equation}\label{SM complex eigenvalues classification Cn}
z^{e_i}=-z^{e_i^*}.
\end{equation}
Thus, the independent $\mathbbm Z$ reduces to half of the number of complex eigenvalues of $C_n$.

When $e_i\in\mathbbm R$, the invariant $\mathbbm Z^{e_i}$ must be zero due to Eq.(\ref{SM complex eigenvalues classification Cn}).
The case is the eigenspaces of $C_3$ with eigenvalue $e_i=-1$.
But we can define a $\mathbbm Z_2$ invariants because there is time-reversal symmetry in this eigenspace.
As shown in Appendix \ref{SM Bubble Equivalence for Incompatible $n$-fold Rotation and Compatible Three-fold Rotation}, this $\mathbbm Z_2=1$ TCSC is ``spurious'' and can be annihilated under ``bubbling''.
For classification purposes, we also need to determine the value of $\mathbbm Z$ invariants, as we show in Appendix \ref{SM At Fixed Points of $n$-fold Rotation Symmetry}.

Finally, we summarize the classification of gapless modes at fixed points when pairing potential is $n$-fold rotation-even.
\begin{equation}
\begin{array}{l}
\mathcal C_{\rm{3rd}}(C_2)= \mathbbm Z^{\pm\frac{1}{2}}\\
\mathcal C_{\rm{3rd}}(C_3)= \mathbbm Z^{\pm\frac{1}{2}}\\
\mathcal C_{\rm{3rd}}(C_4)= \mathbbm Z^{\pm\frac{1}{2}}\otimes \mathbbm Z^{\pm\frac{3}{2}}\\
\mathcal C_{\rm{3rd}}(C_6)= \mathbbm Z^{\pm\frac{1}{2}}\otimes \mathbbm Z^{\pm\frac{3}{2}}\otimes \mathbbm Z^{\pm\frac{5}{2}}\\
\end{array}
\end{equation}
where $\pm\frac{m}{2}$ represent the time-reversal partner eigenvalues $\exp({i\pm \frac{m}{2} \frac{2\pi}{n}})$ of $C_n$.

\subsubsection{TCSC with rotation-odd pairing potential}\label{SM TCSC with rotation-odd pairing potential}
Let us consider a minimal stacking model of third-order TCSC:
\begin{equation}
 H_s(\kk,\rr)=\gamma_z\tau_0\otimes(\kk\times\nn_\rr)\cdot\ssigma,
\end{equation}
where $\ttau,\ggamma$ and $\ssigma$ are the Pauli matrices on flavor and single pristine-TSC surface degrees of freedom, respectively.
The time-reversal and chiral symmetries are 
\begin{equation}
 T_s=\gamma_0\tau_0\otimes i\sigma_y K, \ \ {S}_s =\gamma_z\tau_0\otimes\nn_\rr\cdot\ssigma.
\end{equation}
There are two pairs of possible $ T_s$-, $ S_s$-symmetric mass terms that commute with each other:
\begin{equation}
\begin{array}{cl}
 M_{1,\rr}=m_{1,\rr}\begin{pmatrix} 0&\tau_0\\\tau_0&0  \end{pmatrix}\otimes \sigma_0=m_{1,\rr}\gamma_x\tau_0\otimes\sigma_0,\ \ \ & M_{2,\rr}=m_{2,\rr}\begin{pmatrix} 0&-i\tau_y\\i\tau_y&0  \end{pmatrix}\otimes \sigma_0=m_{2,\rr}\gamma_y\tau_y\otimes\sigma_0\\
 M_{3,\rr}=m_{3,\rr}\begin{pmatrix} 0&\tau_x\\\tau_x&0  \end{pmatrix}\otimes \sigma_0=m_{3,\rr}\gamma_x\tau_x\otimes\sigma_0,\ \ \ & M_{4,\rr}=m_{4,\rr}\begin{pmatrix} 0&\tau_z\\ \tau_z&0  \end{pmatrix}\otimes \sigma_0=m_{4,\rr}\gamma_x\tau_z\otimes\sigma_0\\
\end{array}
\end{equation}
Because the pairing potential is rotation-odd, the only nontrivial factor of projective representation is
\begin{equation}\label{SM incompatible Cn S commutation relation}
\omega(S,C_n)=-1\rightarrow\{F(S),F(C_n)\}=0.
\end{equation}
We observe that there is at least one mass term $ M_{i,\rr}$ remaining unchanged under $C_n$.
So it implies the classification of incompatible $C_n$ symmetry is $\emptyset$.
For example, a reasonable choice satisfying Eq.(\ref{SM incompatible Cn S commutation relation}) of flavor space representation $F_s(C_n)$ and corresponding transformation matrix $F_s^\prime(C_n)$ is
\begin{equation}
F(C_n) = \begin{pmatrix}0&\tau_0\\\tau_0&0 \end{pmatrix},\ \ F^\prime(C_n) = \begin{pmatrix}0&\tau_0\\\tau_0&0\end{pmatrix}\begin{pmatrix}-\tau_0&0\\0&\det C_n \tau_0 \end{pmatrix}=\begin{pmatrix}0&\tau_0\\-\tau_0&0 \end{pmatrix}.
\end{equation}
The mass field $ M_{2}$ remains unchanged under $n$-fold rotation symmetry $F_s^\prime(C_n)$, and we can add a uniform mass field, e.g.
\begin{equation}
 M_\rr=m_0 \begin{pmatrix} 0&-i\tau_y\\i\tau_y&0  \end{pmatrix}\otimes \sigma_0, m_0\in\mathbbm R,
\end{equation}
to fully gap the sphere.
So the classification of TCSCs protected by incompatible $n$-fold rotational symmetry is
\begin{equation}
\mathcal C_{{\rm 3rd}}(C_n)=\emptyset.
\end{equation}

This can also be obtained by analyzing the presence and absence of $T,P,S$ in each eigenspace.
We choose a natural gauge between $T,P,S$, which is different from that in Ref.\cite{fang2017topological} such that
\begin{equation}
\begin{aligned}
  T:e_m\rightarrow e_m^*\\
  P:e_m\rightarrow -e_m^*\\
  S:e_m\rightarrow -e_m\\
\end{aligned}
\end{equation}
where $e_m=\exp(i m\pi/n),m=1,\cdots,2n-1$ are eigenvalues of $C_n$.

For two-fold rotation symmetry, the surface Hilbert space at fixed points can be block-diagonalized into subspaces labeled by $C_2$ eigenvalues:
\begin{equation}
 H_s(\kk,\rr_0)= H_{+i}\oplus  H_{-i}.
\end{equation}
$ H_{+i}$ and $ H_{-i}$ are mapped to each other under $ T$ and $ S$.
So $ H_{+i}\oplus H_{-i}$ belongs to one-dimensional class DIII, of which invariant is $\mathbbm Z_2$.
As shown in Appendix \ref{SM Bubble Equivalence for Incompatible $n$-fold Rotation and Compatible Three-fold Rotation}, this $\mathbbm Z_2=1$ TCSC is ``spurious'' and can be annihilated under ``bubbling''.
Hence, the classification of TCSC protected by incompatible $C_2$ is $\emptyset$.
We note that even if we use 1-D anomalous Hamiltonian classification $\mathbbm Z_2$, the final classification result remains unchanged. Because in the allowed surface representation, this bubble does not appear and is automatically excluded by our method.
The same situation also occurs in the following bubble example ($C_{4,6}$), so we will not repeat it.
\vspace{0.3cm}

For three-fold rotation symmetry, the pairing potential must be rotation-even otherwise TSC breaks $C_3$.
\vspace{0.3cm}

For four-fold rotation symmetry, the surface Hilbert space at fixed points can be block-diagonalized into subspaces labeled by $C_4$ eigenvalues:
\begin{equation}
 H_s(\kk,\rr_0)= H_{-\frac{3}{2}}\oplus  H_{-\frac{1}{2}}\oplus  H_{\frac{1}{2}}\oplus  H_{\frac{3}{2}}.
\end{equation}
$ H_{\frac{1}{2}}$ and $ H_{\frac{3}{2}}$ are mapped to each other under chiral symmetry $ S$.
So $ H_{\frac{1}{2}}\oplus H_{\frac{3}{2}}$ belongs to one-dimensional class AIII, of which invariant is $\mathbbm Z$.
But this invariant must be zero.
Consider a contradiction: if the invariant $z$ is nonzero, there are $z$ Majorana zero modes at fixed points, and the representation of the chiral symmetry is identity matrix.
The $C_4$ rotation symmetry should anticommute with chiral symmetry, but none of the matrices anticommute with the identity matrix.
So the invariant must be zero, and the classification reduces to $\emptyset$.
Based on similar arguments, the $\mathbbm Z$ invariant of $ H_{-\frac{1}{2}}\oplus H_{-\frac{3}{2}}$ must vanish.
Hence, the classification of TCSC protected by incompatible $C_4$ is $\emptyset$.

\vspace{0.3cm}

For six-fold rotation symmetry, the surface Hilbert space at fixed points can be block-diagonalized into subspaces labeled by $C_6$ eigenvalues:
\begin{equation}
 H_s(\kk,\rr_0)= H_{-\frac{5}{2}}\oplus \ H_{-\frac{3}{2}}\oplus H_{-\frac{1}{2}}\oplus H_{\frac{1}{2}}\oplus  H_{\frac{3}{2}}\oplus  H_{\frac{5}{2}}.
\end{equation}
$ H_{\frac{1}{2}}$ and $ H_{\frac{5}{2}}$ are mapped to each other under $ S$.
So $ H_{\frac{1}{2}}\oplus H_{\frac{5}{2}}$ belongs to one-dimensional class AIII, of which invariant is $\mathbbm Z$.
Based on similar arguments in the $C_4$ case, the invariant must be zero, and the classification reduces to $\emptyset$.
The same is true for $ H_{-\frac{1}{2}}\oplus H_{-\frac{5}{2}}$.
On the other hand, $ H_{\frac{3}{2}}$ and $ H_{-\frac{3}{2}}$ are mapped to each other under $ T$ and $ S$.
So $ H_{\frac{3}{2}}$ and $ H_{-\frac{3}{2}}$ belong to one-dimensional class DIII, of which invariant is $\mathbbm Z_2$.
As shown in Appendix~\ref{SM Bubble Equivalence for Incompatible $n$-fold Rotation and Compatible Three-fold Rotation}, this $\mathbbm Z_2=1$ TCSC is ``spurious'' and can be annihilated under ``bubbling''.
Finally, the classification of TCSCs protected by incompatible $C_6$ is $\emptyset$.

\subsubsection{TCSC with Mirror-even pairing potential}
Because the pairing potential is Mirror-even, the only nontrivial factor of projective representation is
\begin{equation}\label{SM incompatible Cn S commutation relation}
\omega(S,M)=-1\rightarrow\{F(S),F(M)\}=0.
\end{equation}
We observe that there is at least one mass term $ M_{i,\rr}$ remaining unchanged under $M$.
So it implies the classification of incompatible $M$ symmetry is $\emptyset$.
For example, a reasonable choice satisfying Eq.(\ref{SM incompatible Cn S commutation relation}) of flavor space representation $F_s(M)$ and corresponding transformation matrix $F_s^\prime(M)$ is
\begin{equation}
F(M) = \begin{pmatrix}0&\tau_0\\\tau_0&0 \end{pmatrix},\ \ F^\prime(M) = \begin{pmatrix}0&\tau_0\\\tau_0&0\end{pmatrix}\begin{pmatrix}-\tau_0&0\\0&\det M \tau_0 \end{pmatrix}=\begin{pmatrix}0&-\tau_0\\-\tau_0&0 \end{pmatrix}.
\end{equation}
The mass field $ M_{1,3,4}$ remains unchanged under Mirror symmetry $F_s^\prime(M)$, and we can add a uniform mass field, e.g.
\begin{equation}
 M_\rr=\left[m_0 \begin{pmatrix} 0&\tau_z\\\tau_0&z  \end{pmatrix}+m_1 \begin{pmatrix} 0&\tau_x\\\tau_x&0  \end{pmatrix}\right]\otimes \sigma_0, m_0,m_1\in\mathbbm R,
\end{equation}
to fully gap the sphere.
So the classification of TCSCs protected by incompatible Mirror symmetry is
\begin{equation}
\mathcal C_{{\rm 3rd}}(M)=\emptyset.
\end{equation}
This can also be obtained by analyzing the presence and absence of $T,P,S$ in each eigenspace.
We choose a natural gauge between $T,P,S$ such that
\begin{equation}
\begin{aligned}
  T:e_m\rightarrow e_m^*\\
  P:e_m\rightarrow e_m^*\\
  S:e_m\rightarrow e_m\\
\end{aligned}
\end{equation}
where $e_m=\pm i$ are eigenvalues of $M$.
$ H_{+i}$ is invariant under chiral symmetry $ S$.
Therefore, $ H_{+i}$ belongs to the two-dimensional AIII class, which is trivial.
Based on similar arguments, $ H_{-i}$ is trivial.
The question that remains to be answered is can $ H_{+i}\oplus  H_{-i}$, i.e., full Hamiltonian, be $\mathbbm Z_2$ nontrivial?
We argue that it is impossible because we cannot write down an effective surface theory on the basis of Majorana edge modes:
\begin{equation}
 H(k)=k\sigma_z,\ \ T=i\sigma_yK,\ \ P=K,\ \ S=i\sigma_y.
\end{equation}
There is not an allowed representation $U(M)$ of mirror satisfying
\begin{equation}
U(M)  H(k)U(M)=  H(k),\ \ TU(M)T^{-1}=U(M),\ \ PU(M)P^{-1}=U(M),\ \ SU(M)S^{-1}=U(M).
\end{equation}
Finally, the classification of TCSCs protected by incompatible mirror is $\emptyset$.

\subsubsection{TCSC with Mirror-odd pairing potential}
Because the pairing potential is mirror-odd, the factor system of projective representation is
\begin{equation}
\omega(g_i,g_j)=1,\ \ \forall g_i,g_j\in G.
\end{equation}
That is to say, all symmetry representation matrices commute with other symmetry representation matrices.
A reasonable choice of flavor space representation $F(M)$ and corresponding transformation matrix $F^\prime(M)$ is
\begin{equation}
F(M) = \begin{pmatrix}\tau_0&0\\0&\tau_0 \end{pmatrix},\ \ F^\prime(M) = \begin{pmatrix}\tau_0&0\\0&\tau_0 \end{pmatrix}\begin{pmatrix}\chi_M\tau_0&0\\0&\tau_0 \end{pmatrix}=\begin{pmatrix}-\tau_0&0\\0&\tau_0 \end{pmatrix}.
\end{equation}
The mass field $m_{i\rr}$ changes sign under $M$ operation due to
\begin{equation}
\{F^\prime(M), M_{i}\}=0\Rightarrow m_{i,\rr}=-m_{i,M\rr}=-m_{i,\rr}.
\end{equation}
where $M_i$ is the matrix part of $ M_{i,\rr}$ except $m_{i,\rr}$, It requires that the mirror-invariant line is gapless.
Two pairs of mirror-protected helical edge modes cannot annihilate with each other.
Because in each mirror eigenspace, there are particle-hole symmetry $P$ such that $\mathbbm Z$ invariant of D class is well-defined.
Finally, the classification of TCSCs protected by compatible mirror symmetry is $\mathbbm Z$.

\section{Algebraic Framework of Classification}
In this Appendix, we establish the algebraic framework of classification starting from the pairing symmetry and point-group symmetry of TCSC, as shown in Fig.~\ref{SM_flow_chart}.

\subsection{Flow Chart For Classification Algorithm}
\begin{figure}[t]
\centering 
\includegraphics[width=1\textwidth]{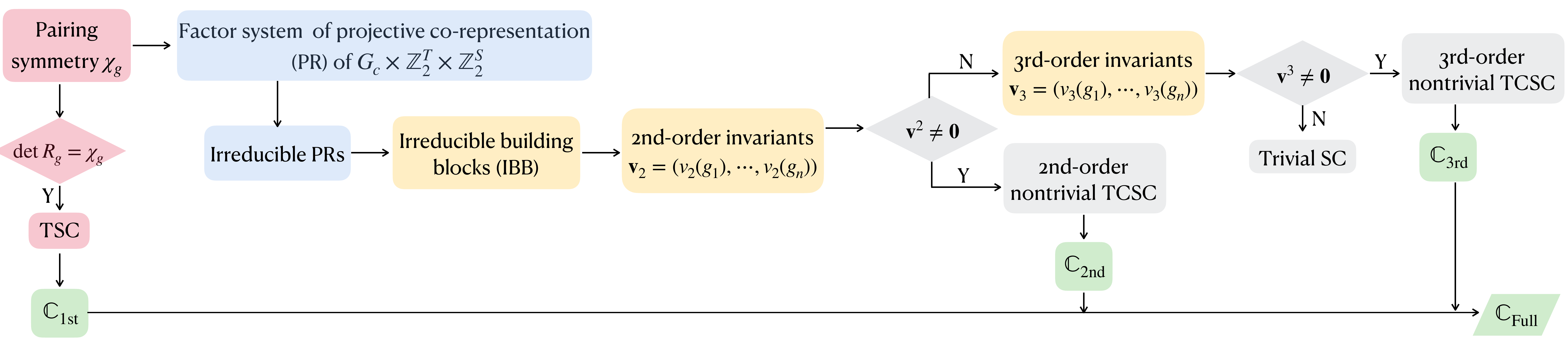}
\caption{Flow chart for our classification algorithm.
For a TCSC with given pairing symmetry $\chi_g$ and point group, we first determine the factor system of projective co-representations of $G_c\times\mathbbm Z_2^T\times\mathbbm Z_2^S$ and then give all inequivalent irreducible projective co-representations by central extension of $G_c\times\mathbbm Z_2^S$ with kernel $\mathbbm Z_2$.
All irreducible building blocks are the direct sum of irreducible projective co-representations.
We calculate the second-order invariants ${\bf{v}}^2$ protected by all point-group symmetries of irreducible building blocks to see whether it is not a zero vector.
If yes, it is a second-order TCSC and constitutes an element in $\mathcal C_{\rm{2nd}}$.
If not, it is either a third-order topological superconductor or a trivial superconductor.
To see whether it is a third-order state, we calculate the third-order invariants ${\bf{v}}^3$ protected by all point-group symmetries of irreducible building blocks to see whether it is not a zero vector.
If yes, it is a third-order TCSC and constitutes an element in $\mathcal C_{\rm{3rd}}$.
If not, it is a trivial superconductor.
Finally, we consider the group-extension problems of classification $\mathcal C_{i\rm{th}}$ for each order to obtain full classification $\mathcal C_{\rm{full}}$.
}
\label{SM_flow_chart}
\end{figure}

\paragraph{Red Region ---}In this part of the $\hyperref[SM_flow_chart]{\mathrm{Flow\ Chart}}$, we determine whether the first order invariant is zero or not by pairing symmetry $\chi_g$.
The point-group symmetries impose constraints on first-order invariants, as shown in Eq.(\ref{SM chiarl constraint}):
\begin{equation}
z_w=\chi_g\cdot \det R_g z_w
\end{equation}
So if $\chi_g=\det R_g,\ \forall g \in G_c$, the first-order invariants can take any integer, so the first-order classification is $\mathbbm Z$.
If there are some point-group symmetries satisfying $\chi_g=-\det R_g$, the first-order topology must be trivial, so the first-order classification reduces to $\emptyset$.

\paragraph{Blue Region ---}For this part of the $\hyperref[SM_flow_chart]{\mathrm{Flow\ Chart}}$, we give a general recipe \cite{bradley1968magnetic,bradley2010mathematical} for obtaining the projective co-representations belonging to a given factor system.
Recall that the factor system of (surface flavor) projective co-representation in the case where the pairing symmetry is $\chi_g$:
\begin{equation}\label{SM factor system in appendix Algebraic Framework of Classification}
\omega(S,g) = \det R_g\chi_g, \ \ \omega(g,S)=\omega(g,T)=\omega(T,g)=\omega(S,T)=\omega(T,S)=1,\ \ \forall g\in G_c,
\end{equation}
and in surface flavor space the representation of time-reversal symmetry is $T=K$.

Here, the factor system $\omega(g_i,g_j),\forall g_i,g_j\in G_c\times\mathbbm Z_2^T\times\mathbbm Z_2^S$ is either $+1$ or $-1$ due to time-reversal symmetry.
We define the $a(g_i,g_j)$ by
\begin{equation}\label{SM projective representation calculate start}
\omega(g_i,g_j)=\exp (i\pi a(g_i,g_j)),\ \ a(g_i,g_j)=0,1.
\end{equation}
Also, since $\omega(g_i,g_I^{-1})=\omega(g_i^{-1},g_I)$ and the constraint imposed by associativity of matrix multiplication:
\begin{equation}
\omega(g_i,g_jg_l)\omega(g_j,g_l)=\omega(g_ig_j,g_l)\omega(g_i,g_j),
\end{equation}
we have
\begin{equation}
a(g_i,g_I^{-1})=a(g_i^{-1},g_I), \ \ a(g_i,g_jg_l)a(g_j,g_l)=a(g_ig_j,g_l)a(g_i,g_j) \mod2 
\end{equation}
Let $Z_2$ be the cyclic group of integers $0,1$ with group product defined as addition modulo g.

One can construct $G_S^d$ consisting of the $4|G_c|$ elements that are pairs of element $[g_i,\alpha]$, one from $G_c\times\mathbbm Z_2^S$ and one from $Z_2$.\
Define a multiplication rule by the equation
\begin{equation}
[g_j,\alpha][g_j,\beta]=[g_ig_j,\alpha+\beta+a(g_i,g_j)\mod 2]
\end{equation}
With respect to this multiplication rule, $G_S^d$ is a group.
From the multiplication rule and Eq.(\ref{SM factor system in appendix Algebraic Framework of Classification}), it follows that
\begin{equation}
[E,\alpha][g_i,\beta]=[g_i,\beta][E,\alpha]=[g_i,\alpha+\beta],\ \ \forall g_i\in G_c\times \mathbbm Z_2^S.
\end{equation}
It means that the subgroup $\{[E,\alpha]\}$ commute with all elements in $G_S^d$ and lies in the centre of $G_S^d$.
From Schur's lemma, it follows that if $F$ is a linear representation of $G_S^d$ then $F([E,\alpha])$ is a scalar multiple of the identity matrix.
Suppose now that there exists a linear representation whose matrix forms of $(E,\alpha)$ are
\begin{equation}
F^d([E,\alpha])=\exp(i\pi\alpha),\ \ \alpha=0,1.
\end{equation}
Then the matrix form of $(g_i,\alpha)$ is
\begin{equation}\label{SM central extension group representation}
F^d([g_i,\alpha])=F^d([g_i,0])F^d([E,\alpha])=F^d([g_i,0])\exp(i\pi\alpha).
\end{equation}
Writing $F^d([g_i,0])=F(g_i),\forall g_i\in G_c\times\mathbbm Z_2^S$, we have
\begin{equation}
\begin{aligned}
F(g_i)F(g_j)&=F^d([g_i,\alpha])F^d([g_j,\beta])\exp(-i\pi(\alpha+\beta))\\
&=F^d([g_ig_j,\alpha+\beta+a(g_i,g_j)])\exp(-i\pi(\alpha+\beta))\\
&=F(g_ig_j)\exp(-i\pi a(g_i,g_j))
\end{aligned}
\end{equation}
It can be checked that $F(g_i)$ is an irreducible projective representation of $G_c\times \mathbbm Z_2^S$.
Conversely, all irreducible projective representations can be found using this procedure:
If $Q$ is an irreducible projective representation of $G_c\times \mathbbm Z_2^S$ and we set
\begin{equation}
F^d([g_i,\alpha])=Q(g_i)\exp(i\pi\alpha),\ \ \forall g_i \in G_c\times \mathbbm Z_2^S,
\end{equation}
$\bm{F}^d$ is a linear representation of $G^d_S$.
In the language of the mathematician, all the irreducible projective representations of $G$ can be lifted into the central extension of $G$ with kernel $M$, which is the number of classes of factor systems.

Finally, the relation between the dimension of projective representation and group order is
\begin{equation}
|G|=\sum_i d_i^2,
\end{equation}
where $d_i$ is $i$-th irreducible projective representation with given factor system and $i$ runs over all irreducible representation.
However, the number of irreducible projective representations is not equal to the number of classes of $G$.

Next, if the group is $G_c\times \mathbbm Z_2^S\times \mathbbm Z_2^T$, the representation $\bm{F}^d$ in Eq.(\ref{SM central extension group representation}) becomes co-representation $\bm{F}^r$:
\begin{equation}
    F^r([u,\alpha])=\begin{pmatrix}
      F^d(u)&0\\
      0&F^d([T^{-1}uT,\alpha])^*
    \end{pmatrix},\ \ 
    F^r([A,\alpha])=\begin{pmatrix}
        0&F^d([AT,\alpha])\\
        F^d([T^{-1}A,\alpha])^*&0\\
    \end{pmatrix},
\end{equation}
where $u$ and $A$ are unitary operations and antiunitary operations in $G_c\times \mathbbm Z_2^S\times \mathbbm Z_2^T$, respectively.
Note that $\bm{F}^r$ may be reducible or irreducible, depending on the value of
\begin{equation}\label{SM projective representation calculate end}
\sum_{A,\alpha}\chi([A,\alpha][A,\alpha]).
\end{equation}

\paragraph{Yellow Region}
In this part of the $\hyperref[SM_flow_chart]{\mathrm{Flow\ Chart}}$, we use the irreducible projective co-representations to construct $n$th-order irreducible building blocks and calculate the corresponding $n$th-order invariants.
A $n$-th order irreducible building block cannot be decomposed into several smaller symmetry-invariant subsectors with zero $(n-1)$-th order invariants.
If there are incompatible point group symmetries, then the representation is a projective representation, and if there are no incompatible point group symmetries, then the factor system of the representation is trivial, that is, a linear representation.

\textit{Compatible Pairing Symmetry --} For the pairing symmetry in which all symmetries are compatible $(\chi_g=\det R_g)$, the representation of chiral symmetry is $\mathbbm{1}_{L}$, where $L$ is the dimension of linear representation.
All possible dimensions of irreducible linear representations are $1,2,3$.
The possible $2$-th order irreducible building blocks $\bb_{2,i}$ can be obtain from $1$-th order irreducible building blocks $\bb_{1,i}$, i.e., the irreducible representations:
\begin{equation}\label{SM possible construction of irreducible building blocks}
\begin{array}{rl}
                        (1d\ \ {\rm{ rep}})&\oplus\ \ \ (1d\ \ {\rm{ rep}})\\
(1d\ \ {\rm{ rep}}\oplus 1d\ \ {\rm{ rep}})&\oplus\ \ \ ( 2d\ \ {\rm{ rep}})\\
                        (2d\ \ {\rm{ rep}})&\oplus\ \ \  (2d\ \ {\rm{ rep}})\\
(1d\ \ {\rm{ rep}}\oplus 1d\ \ {\rm{ rep}}\oplus 1d\ \ {\rm{ rep}})&\oplus\ \ \ ( 3d\ \ {\rm{ rep}})\\
(1d\ \ {\rm{ rep}}\oplus 2d\ \ {\rm{ rep}})&\oplus\ \ \ ( 3d\ \ {\rm{ rep}})\\
                        (3d\ \ {\rm{ rep}})&\oplus \ \ \ (3d\ \ {\rm{ rep}})\\
(1d\ \ {\rm{ rep}}\oplus 3d\ \ {\rm{ rep}})&\oplus\ \ \ ( 2d\ \ {\rm{ rep}}\oplus 2d\ \ {\rm{ rep}})\\
(3d\ \ {\rm{ rep}}\oplus 3d\ \ {\rm{ rep}})&\oplus\ \ \ ( 2d\ \ {\rm{ rep}}\oplus2d\ \ {\rm{ rep}}\oplus 2d\ \ {\rm{ rep}})\\
\end{array}
\end{equation}
The chiral symmetry is implemented as $\mathbbm{1}\oplus-\mathbbm{1}$, where $\mathbbm{1}$ is the identity matrix.
For each irreducible building block with given representation of point-group symmetry $g\in G_c$, we can calculate the topological invariants ${\bf{v}}_{2/3}$ following the procedures in Appendix.~\ref{Classification of Gapless Modes on the Sphere and Homotopy Group Theory}.
Next, we show that all possible constructions of irreducible building blocks are exhausted by the construction described in Eq.(\ref{SM possible construction of irreducible building blocks}):
For an irreducible block, the number $d_i$ appearing in $\mathbbm{1}$ chiral sector must be all different from that in $-\mathbbm{1}$ chiral sector except that in both sectors it is an irreducible representation of the same dimension of the same dimension, i.e., $1\oplus1$, $2\oplus2$, $3\oplus3$.

(i) If the chiral symmetry is $\mathbbm{1}_{3N}\oplus-\mathbbm{1}_{3N}$, then in $-\mathbbm{1}$ chiral sector $d_i$ can only be $1$ or $2$.
If both $1$ and $2$ appear, then $3\oplus 1+2$ already appears in Eq.(\ref{SM possible construction of irreducible building blocks}).
If only $1$ appears, then $3\oplus 1+1+1$ already appears in Eq.(\ref{SM possible construction of irreducible building blocks}).
If only $2$ appears, then $3+3\oplus 2+2+2$ already appears in Eq.(\ref{SM possible construction of irreducible building blocks}).

(ii) If the chiral symmetry is $\mathbbm{1}_{3N+1}\oplus-\mathbbm{1}_{3N+1}$.
If only $1$ appears in $-\mathbbm{1}$ chiral sector, then $1\oplus1$ already appears in Eq.(\ref{SM possible construction of irreducible building blocks}).
If only $2$ appears, then $3+1\oplus2+2$ already appears in Eq.(\ref{SM possible construction of irreducible building blocks}).
If both $1$ and $2$ appear, then there is not possible irreducible building blocks due to the fact that the highest dimension of the representation where the chiral representation is $\mathbbm 1$ is 3.

(iii) If the chiral symmetry is $\mathbbm{1}_{3N+2}\oplus-\mathbbm{1}_{3N+2}$.
If only $1$ appears in $-\mathbbm{1}$ chiral sector, then $2\oplus1+1$ already appears in Eq.(\ref{SM possible construction of irreducible building blocks}).
If only $2$ appears, then $2\oplus2$ already appears in Eq.(\ref{SM possible construction of irreducible building blocks}).

Similar logic can be generalized to get the basis $\bb_{3,i}$ of $V_3$ from basis $\bb_{2,i}$ of $V_2$.

\textit{Incompatible Pairing Symmetry --} For the pairing symmetry in which there exists incompatible symmetries $(\chi_g=-\det R_g)$, the representation of chiral symmetry is automatically $\tau_z$.
So the possible 2nd-order irreducible building blocks $\bb_{2,i}$ are
\begin{equation}
\begin{aligned}
2d\ \mathrm{rep}\\
4d\ \mathrm{rep}\\
6d\ \mathrm{rep}
\end{aligned}
\end{equation}
All possible 3rd-order irreducible building blocks $\bb_{3,i}$ are the part of 
\begin{equation}\label{SM possible construction of irreducible building blocks}
\begin{array}{rl}
2d\ & \mathrm{rep}\\
4d\ & \mathrm{rep}\\
6d\ & \mathrm{rep}\\
(2d\ \ {\rm{ rep}})&\oplus\ \ \ (2d\ \ {\rm{ rep}})\\
(2d\ \ {\rm{ rep}})&\oplus\ \ \ (4d\ \ {\rm{ rep}})\\
(2d\ \ {\rm{ rep}})&\oplus\ \ \ (6d\ \ {\rm{ rep}})\\
(4d\ \ {\rm{ rep}})&\oplus\ \ \ (4d\ \ {\rm{ rep}})\\
(4d\ \ {\rm{ rep}})&\oplus\ \ \ (6d\ \ {\rm{ rep}})\\
(6d\ \ {\rm{ rep}})&\oplus\ \ \ (6d\ \ {\rm{ rep}})\\
\end{array}
\end{equation}
whose second-order invariants are zero.

\paragraph{Green Region}
The classification group is a finitely generated abelian group.
In this part of the flow chart, we get the group structure of classification $\mathcal C_{\rm{2nd/3rd}}$ for each order from topological invariants ${\bf{v}}^{2/3}$.
We first stack the row vectors ${{\bf{v}}^{2/3}}$ into a matrix, of which elements may be defined on different rings $\mathbbm{Z}/\mathbbm{Z}_2$.
Next, we calculate its row echelon form using integer version of Gaussian elimination.
Each nonzero row vectors of the reduced matrix is a generator of the classification group, as shown in Appendix.~\ref{Finding a finite set of generators for an Abelian group}.

\paragraph{Example} Let us use a TCSCs with $C_4$-rotational odd pairing potential for an explicit example.
Due to the pairing potential being odd under action $C_4$, the first-order invariant $z_w$ must be zero:
\begin{equation}
z_w=\det C_4\chi_{C_4} z_w=(+1) \times(-1)z_w=-z_w\rightarrow z_w=0\rightarrow \mathcal C_{\rm{1st}}\cong\emptyset.
\end{equation}
Next, the factor system of the surface projective representation is 
\begin{equation}
\omega(S,C_4)=-1,\ \ \omega(C_4,S)=\omega(C_4,T)=\omega(T,C_4)=\omega(S,T)=\omega(T,S)=\omega(g,T)=\omega(T,g)=1.
\end{equation}
According to Eqs.(\ref{SM projective representation calculate start}-\ref{SM projective representation calculate end}), we get all inequivalent irreducible co-representations of $C_4\times Z_2^T\times Z_2^S$:
\begin{equation}
\begin{aligned}
F^{(1)}(C_4)=\begin{pmatrix}0&1\\1&0 \end{pmatrix},\ 
F^{(1)}(C_2)=\begin{pmatrix}1&0\\0&1 \end{pmatrix},\ 
F^{(1)}(S)=\begin{pmatrix}1&0\\0&-1 \end{pmatrix},\ 
F^{(1)}(T)=K.\ \\
F^{(2)}(C_4)=\begin{pmatrix}0&1\\-1&0 \end{pmatrix},\ 
F^{(2)}(C_2)=\begin{pmatrix}-1&0\\0&-1 \end{pmatrix},\ 
F^{(2)}(S)=\begin{pmatrix}1&0\\0&-1 \end{pmatrix},\ 
F^{(2)}(T)=K.\ \\
\end{aligned}
\end{equation}
As shown in Eq.(\ref{SM the transformation matrix of point-group symmetry}), the symmetry constraints acting on mass field are:
\begin{equation}
\begin{aligned}
F^{\prime(1)}(C_4)=\begin{pmatrix}0&1\\-1&0 \end{pmatrix},\ 
F^{\prime(1)}(C_2)=\begin{pmatrix}1&0\\0&1 \end{pmatrix},\ 
F^{\prime(1)}(S)=\begin{pmatrix}1&0\\0&-1 \end{pmatrix},\ 
F^{\prime(1)}(T)=K.\ \\
F^{\prime(2)}(C_4)=\begin{pmatrix}0&1\\1&0 \end{pmatrix},\ 
F^{\prime(2)}(C_2)=\begin{pmatrix}-1&0\\0&-1 \end{pmatrix},\ 
F^{\prime(2)}(S)=\begin{pmatrix}1&0\\0&-1 \end{pmatrix},\ 
F^{\prime(2)}(T)=K.\ \\
\end{aligned}
\end{equation}
The $2$nd-order irreducible building block of which Hamiltonian is $H=h_+\oplus h_-$ with representation $F^{(1)}(C_4)$ has nonzero 2nd-order invariants:
\begin{equation}
{\bf{u}}=(v_2({C_4}),v_2({C_2}),v_2({C_4^3}))=(1,0,1)\mod2.
\end{equation}
The $2$nd-order irreducible building block of which Hamiltonian is $H=h_+\oplus h_-$ with representation $F^{(2)}(C_4)$ has zero 2nd-order invariants:
\begin{equation}
{\bf{v}}=(v_2({C_4}),v_2({C_2}),v_2({C_4^3}))=(0,0,0)\mod  2.
\end{equation}
${\bf{u}}$ and ${\bf{v}}$ generate an abelian group $\mathcal C_{\rm 2nd}$.
Thus, the classification of second-order TCSCs with $B$-pairing symmetry protected by $C_4$ is $\mathbbm Z_2$.

All $3$rd-order irreducible building blocks of which Hamiltonian is $H=h_+\oplus h_-\oplus h_+\oplus h_-$ with representations
\begin{equation}
F^{(1)}\oplus F^{(1)},\ \ F^{(2)}\oplus F^{(2)}
\end{equation}
have zero third-order invariants.
So, the third-order classifications $\mathcal C_{\rm{3rd}}=\emptyset$.
Finally, the classification of TCSCs protected by incompatible $C_4$ is $\mathcal C_{\rm{full}}=C_{\rm{2nd}}=\mathbbm Z_2$.

\subsection{The Classification Table of TCSCs Protected by Crystallographic Point groups}
% Please add the following required pack$A_g$es to your document preamble:
% \usepack$A_g$e{multirow}
\begin{table}[H]
\caption{The classification table of TCSCs for each point group and all one-dimensional pairing symmetries.
Here, the Irrep is the pairing symmetry.
$\mathcal{C}_{\mathrm{1st}}$, $\mathcal{C}_{\mathrm{2nd}}$, and $\mathcal{C}_{\mathrm{3rd}}$ are the classification group of first, second, third order TCSC, respectively.
$\mathcal{C}_{\mathrm{full}}$ is the full classification of TCSC after considering group-extension problems.
2nd/3rd Generators are the point-group elements under which the mass field of corresponding classification group's generators (root state) change sign.
For example, in the $C_{2h}$ and $A_u$ irrep case, one of the 2nd Generators
is $[M_z,I]$, which means one of the generators of $\mathcal C_{\rm{2nd}}$ is a TCSC (root state) whose mass field changes sign under $M_z$ and inversion symmetries.
}
\centering
\vspace{0.6cm}
\begin{tabular}{|p{1.5cm}<{\centering}|p{1.5cm}<{\centering}|p{1.5cm}<{\centering}|p{1.5cm}<{\centering}|p{1.5cm}<{\centering}|p{1.5cm}<{\centering}|p{1.5cm}<{\centering}|p{1.5cm}<{\centering}|}
\hline
$\hyperref[Group C1]{C_{1}}$&
$\hyperref[Group Ci]{C_{i}}$&
$\hyperref[Group C2]{C_{2}}$&
$\hyperref[Group Cs]{C_{s}}$&
$\hyperref[Group C2h]{C_{2h}}$&
$\hyperref[Group D2]{D_{2}}$&
$\hyperref[Group C2v]{C_{2v}}$&
$\hyperref[Group D2h]{D_{2h}}$\\\hline
$\hyperref[Group C4]{C_{4}}$&
$\hyperref[Group S4]{S_{4}}$&
$\hyperref[Group C4h]{C_{4h}}$&
$\hyperref[Group D4]{D_{4}}$&
$\hyperref[Group C4v]{C_{4v}}$&
$\hyperref[Group D2d]{D_{2d}}$&
$\hyperref[Group D4h]{D_{4h}}$&
$\hyperref[Group C3]{C_{3}}$\\\hline
$\hyperref[Group C3i]{C_{3i}}$&
$\hyperref[Group D3]{D_{3}}$&
$\hyperref[Group C3v]{C_{3v}}$&
$\hyperref[Group D3d]{D_{3d}}$&
$\hyperref[Group C6]{C_{6}}$&
$\hyperref[Group C3h]{C_{3h}}$&
$\hyperref[Group C6h]{C_{6h}}$&
$\hyperref[Group D6]{D_{6}}$\\\hline
$\hyperref[Group C6v]{C_{6v}}$&
$\hyperref[Group D3h]{D_{3h}}$&
$\hyperref[Group D6h]{D_{6h}}$&
$\hyperref[Group T]{T_{}}$&
$\hyperref[Group Th]{T_{h}}$&
$\hyperref[Group O]{O_{}}$&
$\hyperref[Group Td]{T_{d}}$&
$\hyperref[Group Oh]{O_{h}}$\\\hline
\end{tabular}
\end{table}

%%%%%%%%%%%%%%%%%%%%%%%%%%%%%%%%%%%%%%%%%%%%%%%%%%%%%%%%%%%%%%%%%%%%%%%%%%%%%%%%%%%%%%%%%%%%%%%%%%%%%%%%%%%%%%%%%%%%%%%%%%%%%%%%%%%%%%%%%%%%%%%%%%%%%%%%%%%%%%%%%%%%%%%%%%%%%%%%%%%%%%%%%%%%%%%%%%%%%%%%%%%%%%%%%%%%%%%%%%%%%%%%%%%%%%%%%%%%%%%%%%%%%%%%%%%%%%%%%%%%%%%%%%%%%%%%%%%%%%%%%%%%%%%%%%%%%%%%%%%%%%%%%%%%%%%%%%%%%%%%%%%%%%%%%%

\begin{table}[H]
\caption{Group $C_1$}
\label{Group C1}
\centering
\begin{tabular}{c|cccc|c|c}
\hline \hline
 Irrep & $\mathcal{C}_{\mathrm{1st}}$ & $\mathcal{C}_{\mathrm{2nd}}$ & $\mathcal{C}_{\mathrm{3rd}}$ & $\mathcal{C}_{\mathrm{full}}$ & 2nd Generators & 3rd Generators\\    
\hline
$A_{}$ & $\emptyset$ & $\emptyset$ & $\emptyset$ & $\emptyset$ & $ \begin{aligned} \emptyset\end{aligned}$ & $ \begin{aligned} \emptyset\end{aligned}$ \\
\hline
\hline
\end{tabular}
\end{table}
%%%%%%%%%%%%%%%%%%%%%%%%%%%%%%%%%%%%%%%%%%%%%%%%%%%%%%%%%%%%%%%%%%%%%%%%%%%%%%%%%%%%%%%%%%%%%%%%%%%%%%%%%%%%%%%%%%%%%%%%%%%%%%%%%%%%%%%%%%%%%%%%%
\begin{table}[H]
\caption{Group $C_{i}$}
\label{Group Ci}
\centering
\begin{tabular}{c|cccc|c|c}
\hline \hline
 Irrep & $\mathcal{C}_{\mathrm{1st}}$ & $\mathcal{C}_{\mathrm{2nd}}$ & $\mathcal{C}_{\mathrm{3rd}}$ & $\mathcal{C}_{\mathrm{full}}$ & 2nd Generators & 3rd Generators\\    
\hline
$A_{u}$ & $\mathbbm{Z}$ & $\mathbbm{Z}_2$ & $\mathbbm{Z}_2$ & $\mathbbm{Z}\otimes\mathbbm{Z}_4$ & $ \begin{aligned} &[I] \\\end{aligned}$ & $ \begin{aligned} &[I] \\\end{aligned}$ \\
\hline
$A_{g}$ & $\emptyset$ & $\emptyset$ & $\emptyset$ & $\emptyset$ & $ \begin{aligned} \emptyset\end{aligned}$ & $ \begin{aligned} \emptyset\end{aligned}$ \\
\hline
\hline
\end{tabular}
\end{table}
%%%%%%%%%%%%%%%%%%%%%%%%%%%%%%%%%%%%%%%%%%%%%%%%%%%%%%%%%%%%%%%%%%%%%%%%%%%%%%%%%%%%%%%%%%%%%%%%%%%%%%%%%%%%%%%%%%%%%%%%%%%%%%%%%%%%%%%%%%%%%%%%%
\begin{table}[H]
\caption{Group $C_{2}$}
\label{Group C2}
\centering
\begin{tabular}{c|cccc|c|c}
\hline \hline
 Irrep & $\mathcal{C}_{\mathrm{1st}}$ & $\mathcal{C}_{\mathrm{2nd}}$ & $\mathcal{C}_{\mathrm{3rd}}$ & $\mathcal{C}_{\mathrm{full}}$ & 2nd Generators & 3rd Generators\\    
\hline
$A_{}$ & $\mathbbm{Z}$ & $\mathbbm{Z}_2$ & $\mathbbm{Z}$ & $\mathbbm{Z}^2$ & $ \begin{aligned} &[C_{2z}] \\\end{aligned}$ & $ \begin{aligned} &[C_{2y}] \\\end{aligned}$ \\
\hline
$B_{}$ & $\emptyset$ & $\mathbbm{Z}_2$ & $\emptyset$ & $\mathbbm{Z}_2$ & $ \begin{aligned} &[C_{2z}] \\\end{aligned}$ & $ \begin{aligned} \emptyset\end{aligned}$ \\
\hline
\hline
\end{tabular}
\end{table}
%%%%%%%%%%%%%%%%%%%%%%%%%%%%%%%%%%%%%%%%%%%%%%%%%%%%%%%%%%%%%%%%%%%%%%%%%%%%%%%%%%%%%%%%%%%%%%%%%%%%%%%%%%%%%%%%%%%%%%%%%%%%%%%%%%%%%%%%%%%%%%%%%
\begin{table}[H]
\caption{Group $C_{s}$}
\label{Group Cs}
\centering
\begin{tabular}{c|cccc|c|c}
\hline \hline
 Irrep & $\mathcal{C}_{\mathrm{1st}}$ & $\mathcal{C}_{\mathrm{2nd}}$ & $\mathcal{C}_{\mathrm{3rd}}$ & $\mathcal{C}_{\mathrm{full}}$ & 2nd Generators & 3rd Generators\\    
\hline
$A_{''}$ & $\mathbbm{Z}$ & $\mathbbm{Z}$ & $\emptyset$ & $\mathbbm{Z}^2$ & $ \begin{aligned} &[M_{z} ] \\\end{aligned}$ & $ \begin{aligned} \emptyset\end{aligned}$ \\
\hline
$A_{'}$ & $\emptyset$ & $\emptyset$ & $\emptyset$ & $\emptyset$ & $ \begin{aligned} \emptyset\end{aligned}$ & $ \begin{aligned} \emptyset\end{aligned}$ \\
\hline
\hline
\end{tabular}
\end{table}
%%%%%%%%%%%%%%%%%%%%%%%%%%%%%%%%%%%%%%%%%%%%%%%%%%%%%%%%%%%%%%%%%%%%%%%%%%%%%%%%%%%%%%%%%%%%%%%%%%%%%%%%%%%%%%%%%%%%%%%%%%%%%%%%%%%%%%%%%%%%%%%%%
\begin{table}[H]
\caption{Group $C_{2h}$}
\label{Group C2h}
\centering
\begin{tabular}{c|cccc|c|c}
\hline \hline
 Irrep & $\mathcal{C}_{\mathrm{1st}}$ & $\mathcal{C}_{\mathrm{2nd}}$ & $\mathcal{C}_{\mathrm{3rd}}$ & $\mathcal{C}_{\mathrm{full}}$ & 2nd Generators & 3rd Generators\\    
\hline
$A_{u}$ & $\mathbbm{Z}$ & $\mathbbm{Z}\otimes\mathbbm{Z}_2$ & $\mathbbm{Z}\otimes\mathbbm{Z}_2$ & $\mathbbm{Z}^3\otimes\mathbbm{Z}_2$ & $ \begin{aligned} &[M_{z} ,I] \\&[C_{2z},I] \\\end{aligned}$ & $ \begin{aligned} &[C_{2y}] \\&[I] \\\end{aligned}$ \\
\hline
$A_{g}$ & $\emptyset$ & $\emptyset$ & $\emptyset$ & $\emptyset$ & $ \begin{aligned} \emptyset\end{aligned}$ & $ \begin{aligned} \emptyset\end{aligned}$ \\
\hline
$B_{u}$ & $\emptyset$ & $\mathbbm{Z}_2$ & $\mathbbm{Z}_2$ & $\mathbbm{Z}_4$ & $ \begin{aligned} &[C_{2z},I] \\\end{aligned}$ & $ \begin{aligned} &[I] \\\end{aligned}$ \\
\hline
$B_{g}$ & $\emptyset$ & $\mathbbm{Z}$ & $\emptyset$ & $\mathbbm{Z}$ & $ \begin{aligned} &[C_{2z},M_{z} ] \\\end{aligned}$ & $ \begin{aligned} \emptyset\end{aligned}$ \\
\hline
\hline
\end{tabular}
\end{table}
%%%%%%%%%%%%%%%%%%%%%%%%%%%%%%%%%%%%%%%%%%%%%%%%%%%%%%%%%%%%%%%%%%%%%%%%%%%%%%%%%%%%%%%%%%%%%%%%%%%%%%%%%%%%%%%%%%%%%%%%%%%%%%%%%%%%%%%%%%%%%%%%%
\begin{table}[H]
\caption{Group $D_{2}$}
\label{Group D2}
\centering
\begin{tabular}{c|cccc|c|c}
\hline \hline
 Irrep & $\mathcal{C}_{\mathrm{1st}}$ & $\mathcal{C}_{\mathrm{2nd}}$ & $\mathcal{C}_{\mathrm{3rd}}$ & $\mathcal{C}_{\mathrm{full}}$ & 2nd Generators & 3rd Generators\\    
\hline
$A_{1}$ & $\mathbbm{Z}$ & $\mathbbm{Z}_2^2$ & $\mathbbm{Z}^3$ & $\mathbbm{Z}^4$ & $ \begin{aligned} &[C_{2z},C_{2y}] \\&[C_{2x},C_{2y}] \\\end{aligned}$ & $ \begin{aligned} &[C_{2z}] \\&[C_{2y}] \\&[C_{2x}] \\\end{aligned}$ \\
\hline
$B_{1}$ & $\emptyset$ & $\mathbbm{Z}_2$ & $\emptyset$ & $\mathbbm{Z}_2$ & $ \begin{aligned} &[C_{2x},C_{2y}] \\\end{aligned}$ & $ \begin{aligned} \emptyset\end{aligned}$ \\
\hline
$B_{2}$ & $\emptyset$ & $\mathbbm{Z}_2$ & $\emptyset$ & $\mathbbm{Z}_2$ & $ \begin{aligned} &[C_{2z},C_{2x}] \\\end{aligned}$ & $ \begin{aligned} \emptyset\end{aligned}$ \\
\hline
$B_{3}$ & $\emptyset$ & $\mathbbm{Z}_2$ & $\emptyset$ & $\mathbbm{Z}_2$ & $ \begin{aligned} &[C_{2z},C_{2y}] \\\end{aligned}$ & $ \begin{aligned} \emptyset\end{aligned}$ \\
\hline
\hline
\end{tabular}
\end{table}
%%%%%%%%%%%%%%%%%%%%%%%%%%%%%%%%%%%%%%%%%%%%%%%%%%%%%%%%%%%%%%%%%%%%%%%%%%%%%%%%%%%%%%%%%%%%%%%%%%%%%%%%%%%%%%%%%%%%%%%%%%%%%%%%%%%%%%%%%%%%%%%%%
\begin{table}[H]
\caption{Group $C_{2v}$}
\label{Group C2v}
\centering
\begin{tabular}{c|cccc|c|c}
\hline \hline
 Irrep & $\mathcal{C}_{\mathrm{1st}}$ & $\mathcal{C}_{\mathrm{2nd}}$ & $\mathcal{C}_{\mathrm{3rd}}$ & $\mathcal{C}_{\mathrm{full}}$ & 2nd Generators & 3rd Generators\\    
\hline
$A_{2}$ & $\mathbbm{Z}$ & $\mathbbm{Z}^2$ & $\mathbbm{Z}$ & $\mathbbm{Z}^4$ & $ \begin{aligned} &[ M_{y},C_{2z}] \\&[C_{2z},M_{x} ] \\\end{aligned}$ & $ \begin{aligned} &[C_{2z}] \\\end{aligned}$ \\
\hline
$A_{1}$ & $\emptyset$ & $\emptyset$ & $\emptyset$ & $\emptyset$ & $ \begin{aligned} \emptyset\end{aligned}$ & $ \begin{aligned} \emptyset\end{aligned}$ \\
\hline
$B_{2}$ & $\emptyset$ & $\mathbbm{Z}$ & $\emptyset$ & $\mathbbm{Z}$ & $ \begin{aligned} &[ M_{y},C_{2z}] \\\end{aligned}$ & $ \begin{aligned} \emptyset\end{aligned}$ \\
\hline
$B_{1}$ & $\emptyset$ & $\mathbbm{Z}$ & $\emptyset$ & $\mathbbm{Z}$ & $ \begin{aligned} &[C_{2z},M_{x} ] \\\end{aligned}$ & $ \begin{aligned} \emptyset\end{aligned}$ \\
\hline
\hline
\end{tabular}
\end{table}
%%%%%%%%%%%%%%%%%%%%%%%%%%%%%%%%%%%%%%%%%%%%%%%%%%%%%%%%%%%%%%%%%%%%%%%%%%%%%%%%%%%%%%%%%%%%%%%%%%%%%%%%%%%%%%%%%%%%%%%%%%%%%%%%%%%%%%%%%%%%%%%%%
\begin{table}[H]
\caption{Group $D_{2h}$}
\label{Group D2h}
\centering
\begin{tabular}{c|cccc|c|c}
\hline \hline
 Irrep & $\mathcal{C}_{\mathrm{1st}}$ & $\mathcal{C}_{\mathrm{2nd}}$ & $\mathcal{C}_{\mathrm{3rd}}$ & $\mathcal{C}_{\mathrm{full}}$ & 2nd Generators & 3rd Generators\\    
\hline
$A_{u}$ & $\mathbbm{Z}$ & $\mathbbm{Z}^3$ & $\mathbbm{Z}^3$ & $\mathbbm{Z}^7$ & $ \begin{aligned} &[ M_{y},C_{2z},C_{2x},I] \\&[C_{2z},M_{x} ,I,C_{2y}] \\&[M_{z} ,C_{2x},I,C_{2y}] \\\end{aligned}$ & $ \begin{aligned} &[C_{2z},I] \\&[C_{2y},I] \\&[C_{2x},I] \\\end{aligned}$ \\
\hline
$A_{g}$ & $\emptyset$ & $\emptyset$ & $\emptyset$ & $\emptyset$ & $ \begin{aligned} \emptyset\end{aligned}$ & $ \begin{aligned} \emptyset\end{aligned}$ \\
\hline
$B_{1u}$ & $\emptyset$ & $\mathbbm{Z}$ & $\mathbbm{Z}_2$ & $\mathbbm{Z}\otimes\mathbbm{Z}_2$ & $ \begin{aligned} &[M_{z} ,C_{2x},I,C_{2y}] \\\end{aligned}$ & $ \begin{aligned} &[I] \\\end{aligned}$ \\
\hline
$B_{1g}$ & $\emptyset$ & $\mathbbm{Z}^2$ & $\emptyset$ & $\mathbbm{Z}^2$ & $ \begin{aligned} &[ M_{y},M_{x} ,C_{2x},C_{2y}] \\&[M_{x} ] \\\end{aligned}$ & $ \begin{aligned} \emptyset\end{aligned}$ \\
\hline
$B_{2u}$ & $\emptyset$ & $\mathbbm{Z}$ & $\mathbbm{Z}_2$ & $\mathbbm{Z}\otimes\mathbbm{Z}_2$ & $ \begin{aligned} &[ M_{y},C_{2z},C_{2x},I] \\\end{aligned}$ & $ \begin{aligned} &[I] \\\end{aligned}$ \\
\hline
$B_{2g}$ & $\emptyset$ & $\mathbbm{Z}^2$ & $\emptyset$ & $\mathbbm{Z}^2$ & $ \begin{aligned} &[C_{2z},M_{x} ,M_{z} ,C_{2x}] \\&[M_{z} ] \\\end{aligned}$ & $ \begin{aligned} \emptyset\end{aligned}$ \\
\hline
$B_{3u}$ & $\emptyset$ & $\mathbbm{Z}$ & $\mathbbm{Z}_2$ & $\mathbbm{Z}\otimes\mathbbm{Z}_2$ & $ \begin{aligned} &[C_{2z},M_{x} ,I,C_{2y}] \\\end{aligned}$ & $ \begin{aligned} &[I] \\\end{aligned}$ \\
\hline
$B_{3g}$ & $\emptyset$ & $\mathbbm{Z}^2$ & $\emptyset$ & $\mathbbm{Z}^2$ & $ \begin{aligned} &[ M_{y},C_{2z},M_{z} ,C_{2y}] \\&[M_{z} ] \\\end{aligned}$ & $ \begin{aligned} \emptyset\end{aligned}$ \\
\hline
\hline
\end{tabular}
\end{table}
%%%%%%%%%%%%%%%%%%%%%%%%%%%%%%%%%%%%%%%%%%%%%%%%%%%%%%%%%%%%%%%%%%%%%%%%%%%%%%%%%%%%%%%%%%%%%%%%%%%%%%%%%%%%%%%%%%%%%%%%%%%%%%%%%%%%%%%%%%%%%%%%%
\begin{table}[H]
\caption{Group $C_{4}$}
\label{Group C4}
\centering
\begin{tabular}{c|cccc|c|c}
\hline \hline
 Irrep & $\mathcal{C}_{\mathrm{1st}}$ & $\mathcal{C}_{\mathrm{2nd}}$ & $\mathcal{C}_{\mathrm{3rd}}$ & $\mathcal{C}_{\mathrm{full}}$ & 2nd Generators & 3rd Generators\\    
\hline
$A_{}$ & $\mathbbm{Z}$ & $\mathbbm{Z}_2$ & $\mathbbm{Z}^2$ & $\mathbbm{Z}^3$ & $ \begin{aligned} &[C_{4z},C_{4z}^3] \\\end{aligned}$ & $ \begin{aligned} &[C_{2z}] \\&[C_{4z},C_{4z},C_{4z}^3,C_{4z}^3] \\\end{aligned}$ \\
\hline
$B_{}$ & $\emptyset$ & $\mathbbm{Z}_2$ & $\emptyset$ & $\mathbbm{Z}_2$ & $ \begin{aligned} &[C_{4z},C_{4z}^3] \\\end{aligned}$ & $ \begin{aligned} \emptyset\end{aligned}$ \\
\hline
\hline
\end{tabular}
\end{table}
%%%%%%%%%%%%%%%%%%%%%%%%%%%%%%%%%%%%%%%%%%%%%%%%%%%%%%%%%%%%%%%%%%%%%%%%%%%%%%%%%%%%%%%%%%%%%%%%%%%%%%%%%%%%%%%%%%%%%%%%%%%%%%%%%%%%%%%%%%%%%%%%%
\begin{table}[H]
\caption{Group $S_{4}$}
\label{Group S4}
\centering
\begin{tabular}{c|cccc|c|c}
\hline \hline
 Irrep & $\mathcal{C}_{\mathrm{1st}}$ & $\mathcal{C}_{\mathrm{2nd}}$ & $\mathcal{C}_{\mathrm{3rd}}$ & $\mathcal{C}_{\mathrm{full}}$ & 2nd Generators & 3rd Generators\\    
\hline
$B_{}$ & $\mathbbm{Z}$ & $\mathbbm{Z}_2$ & $\mathbbm{Z}$ & $\mathbbm{Z}^2\otimes\mathbbm{Z}_2$ & $ \begin{aligned} &[S_{4z}^3,S_{4z}] \\\end{aligned}$ & $ \begin{aligned} &[C_{2z}] \\\end{aligned}$ \\
\hline
$A_{}$ & $\emptyset$ & $\mathbbm{Z}_2$ & $\emptyset$ & $\mathbbm{Z}_2$ & $ \begin{aligned} &[S_{4z}^3,S_{4z}] \\\end{aligned}$ & $ \begin{aligned} \emptyset\end{aligned}$ \\
\hline
\hline
\end{tabular}
\end{table}
%%%%%%%%%%%%%%%%%%%%%%%%%%%%%%%%%%%%%%%%%%%%%%%%%%%%%%%%%%%%%%%%%%%%%%%%%%%%%%%%%%%%%%%%%%%%%%%%%%%%%%%%%%%%%%%%%%%%%%%%%%%%%%%%%%%%%%%%%%%%%%%%%
\begin{table}[H]
\caption{Group $C_{4h}$}
\label{Group C4h}
\centering
\begin{tabular}{c|cccc|c|c}
\hline \hline
 Irrep & $\mathcal{C}_{\mathrm{1st}}$ & $\mathcal{C}_{\mathrm{2nd}}$ & $\mathcal{C}_{\mathrm{3rd}}$ & $\mathcal{C}_{\mathrm{full}}$ & 2nd Generators & 3rd Generators\\    
\hline
$A_{u}$ & $\mathbbm{Z}$ & $\mathbbm{Z}\otimes\mathbbm{Z}_2$ & $\mathbbm{Z}^2$ & $\mathbbm{Z}^4$ & $ \begin{aligned} &[S_{4z}^3,S_{4z},M_{z} ,I] \\&[C_{4z},C_{4z}^3,S_{4z}^3,S_{4z}] \\\end{aligned}$ & $ \begin{aligned} &[C_{2z},I] \\&[C_{4z},C_{4z},C_{4z}^3,C_{4z}^3] \\\end{aligned}$ \\
\hline
$A_{g}$ & $\emptyset$ & $\emptyset$ & $\emptyset$ & $\emptyset$ & $ \begin{aligned} \emptyset\end{aligned}$ & $ \begin{aligned} \emptyset\end{aligned}$ \\
\hline
$B_{u}$ & $\emptyset$ & $\mathbbm{Z}\otimes\mathbbm{Z}_2$ & $\emptyset$ & $\mathbbm{Z}\otimes\mathbbm{Z}_2$ & $ \begin{aligned} &[S_{4z}^3,S_{4z},M_{z} ,I] \\&[C_{4z},C_{4z}^3,S_{4z}^3,S_{4z}] \\\end{aligned}$ & $ \begin{aligned} \emptyset\end{aligned}$ \\
\hline
$B_{g}$ & $\emptyset$ & $\mathbbm{Z}_2$ & $\emptyset$ & $\mathbbm{Z}_2$ & $ \begin{aligned} &[C_{4z},C_{4z}^3,S_{4z}^3,S_{4z}] \\\end{aligned}$ & $ \begin{aligned} \emptyset\end{aligned}$ \\
\hline
\hline
\end{tabular}
\end{table}
%%%%%%%%%%%%%%%%%%%%%%%%%%%%%%%%%%%%%%%%%%%%%%%%%%%%%%%%%%%%%%%%%%%%%%%%%%%%%%%%%%%%%%%%%%%%%%%%%%%%%%%%%%%%%%%%%%%%%%%%%%%%%%%%%%%%%%%%%%%%%%%%%
\begin{table}[H]
\caption{Group $D_{4}$}
\label{Group D4}
\centering
\begin{tabular}{c|cccc|c|c}
\hline \hline
 Irrep & $\mathcal{C}_{\mathrm{1st}}$ & $\mathcal{C}_{\mathrm{2nd}}$ & $\mathcal{C}_{\mathrm{3rd}}$ & $\mathcal{C}_{\mathrm{full}}$ & 2nd Generators & 3rd Generators\\    
\hline
$A_{1}$ & $\mathbbm{Z}$ & $\mathbbm{Z}_2^2$ & $\mathbbm{Z}^4$ & $\mathbbm{Z}^5$ & $ \begin{aligned} &[C_{4z},C_{2x}] \\&[C_{2\bar{x}y} ,C_{2x}] \\\end{aligned}$ & $ \begin{aligned} &[C_{2z}] \\&[C_{4z},C_{4z},C_{4z}^3,C_{4z}^3] \\&[C_{2y},C_{2x}] \\&[C_{2xy} ,C_{2\bar{x}y} ] \\\end{aligned}$ \\
\hline
$A_{2}$ & $\emptyset$ & $\mathbbm{Z}_2$ & $\emptyset$ & $\mathbbm{Z}_2$ & $ \begin{aligned} &[C_{2\bar{x}y} ,C_{2x}] \\\end{aligned}$ & $ \begin{aligned} \emptyset\end{aligned}$ \\
\hline
$B_{1}$ & $\emptyset$ & $\mathbbm{Z}_2^2$ & $\mathbbm{Z}$ & $\mathbbm{Z}\otimes\mathbbm{Z}_2$ & $ \begin{aligned} &[C_{2\bar{x}y} ,C_{2x}] \\&[C_{4z},C_{2\bar{x}y} ] \\\end{aligned}$ & $ \begin{aligned} &[C_{2y}] \\\end{aligned}$ \\
\hline
$B_{2}$ & $\emptyset$ & $\mathbbm{Z}_2^2$ & $\mathbbm{Z}$ & $\mathbbm{Z}\otimes\mathbbm{Z}_2$ & $ \begin{aligned} &[C_{2\bar{x}y} ,C_{2x}] \\&[C_{4z},C_{2x}] \\\end{aligned}$ & $ \begin{aligned} &[C_{2xy} ] \\\end{aligned}$ \\
\hline
\hline
\end{tabular}
\end{table}
%%%%%%%%%%%%%%%%%%%%%%%%%%%%%%%%%%%%%%%%%%%%%%%%%%%%%%%%%%%%%%%%%%%%%%%%%%%%%%%%%%%%%%%%%%%%%%%%%%%%%%%%%%%%%%%%%%%%%%%%%%%%%%%%%%%%%%%%%%%%%%%%%
\begin{table}[H]
\caption{Group $C_{4v}$}
\label{Group C4v}
\centering
\begin{tabular}{c|cccc|c|c}
\hline \hline
 Irrep & $\mathcal{C}_{\mathrm{1st}}$ & $\mathcal{C}_{\mathrm{2nd}}$ & $\mathcal{C}_{\mathrm{3rd}}$ & $\mathcal{C}_{\mathrm{full}}$ & 2nd Generators & 3rd Generators\\    
\hline
$A_{2}$ & $\mathbbm{Z}$ & $\mathbbm{Z}^2$ & $\mathbbm{Z}^2$ & $\mathbbm{Z}^5$ & $ \begin{aligned} &[ M_{\bar{x}y},C_{4z}] \\&[C_{4z}, M_{y}] \\\end{aligned}$ & $ \begin{aligned} &[C_{2z}] \\&[C_{4z},C_{4z},C_{4z}^3,C_{4z}^3] \\\end{aligned}$ \\
\hline
$A_{1}$ & $\emptyset$ & $\emptyset$ & $\emptyset$ & $\emptyset$ & $ \begin{aligned} \emptyset\end{aligned}$ & $ \begin{aligned} \emptyset\end{aligned}$ \\
\hline
$B_{2}$ & $\emptyset$ & $\mathbbm{Z}$ & $\emptyset$ & $\mathbbm{Z}$ & $ \begin{aligned} &[C_{4z}, M_{y}] \\\end{aligned}$ & $ \begin{aligned} \emptyset\end{aligned}$ \\
\hline
$B_{1}$ & $\emptyset$ & $\mathbbm{Z}$ & $\emptyset$ & $\mathbbm{Z}$ & $ \begin{aligned} &[ M_{\bar{x}y},C_{4z}] \\\end{aligned}$ & $ \begin{aligned} \emptyset\end{aligned}$ \\
\hline
\hline
\end{tabular}
\end{table}
%%%%%%%%%%%%%%%%%%%%%%%%%%%%%%%%%%%%%%%%%%%%%%%%%%%%%%%%%%%%%%%%%%%%%%%%%%%%%%%%%%%%%%%%%%%%%%%%%%%%%%%%%%%%%%%%%%%%%%%%%%%%%%%%%%%%%%%%%%%%%%%%%
\begin{table}[H]
\caption{Group $D_{2d}$}
\label{Group D2d}
\centering
\begin{tabular}{c|cccc|c|c}
\hline \hline
 Irrep & $\mathcal{C}_{\mathrm{1st}}$ & $\mathcal{C}_{\mathrm{2nd}}$ & $\mathcal{C}_{\mathrm{3rd}}$ & $\mathcal{C}_{\mathrm{full}}$ & 2nd Generators & 3rd Generators\\    
\hline
$B_{1}$ & $\mathbbm{Z}$ & $\mathbbm{Z}\otimes\mathbbm{Z}_2$ & $\mathbbm{Z}^2$ & $\mathbbm{Z}^4$ & $ \begin{aligned} &[ M_{\bar{x}y},S_{4z}^3] \\&[S_{4z}^3,C_{2x}] \\\end{aligned}$ & $ \begin{aligned} &[C_{2z}] \\&[C_{2y},C_{2x}] \\\end{aligned}$ \\
\hline
$B_{2}$ & $\emptyset$ & $\mathbbm{Z}_2$ & $\emptyset$ & $\mathbbm{Z}_2$ & $ \begin{aligned} &[S_{4z}^3,C_{2x}] \\\end{aligned}$ & $ \begin{aligned} \emptyset\end{aligned}$ \\
\hline
$A_{1}$ & $\emptyset$ & $\mathbbm{Z}_2$ & $\mathbbm{Z}$ & $\mathbbm{Z}$ & $ \begin{aligned} &[S_{4z}^3,C_{2x}] \\\end{aligned}$ & $ \begin{aligned} &[C_{2y}] \\\end{aligned}$ \\
\hline
$A_{2}$ & $\emptyset$ & $\mathbbm{Z}\otimes\mathbbm{Z}_2$ & $\emptyset$ & $\mathbbm{Z}\otimes\mathbbm{Z}_2$ & $ \begin{aligned} &[ M_{\bar{x}y},S_{4z}^3] \\&[S_{4z}^3,C_{2x}] \\\end{aligned}$ & $ \begin{aligned} \emptyset\end{aligned}$ \\
\hline
\hline
\end{tabular}
\end{table}
%%%%%%%%%%%%%%%%%%%%%%%%%%%%%%%%%%%%%%%%%%%%%%%%%%%%%%%%%%%%%%%%%%%%%%%%%%%%%%%%%%%%%%%%%%%%%%%%%%%%%%%%%%%%%%%%%%%%%%%%%%%%%%%%%%%%%%%%%%%%%%%%%
\begin{table}[H]
\caption{Group $D_{4h}$}
\label{Group D4h}
\centering
\begin{tabular}{c|cccc|c|c}
\hline \hline
 Irrep & $\mathcal{C}_{\mathrm{1st}}$ & $\mathcal{C}_{\mathrm{2nd}}$ & $\mathcal{C}_{\mathrm{3rd}}$ & $\mathcal{C}_{\mathrm{full}}$ & 2nd Generators & 3rd Generators\\    
\hline
$A_{1u}$ & $\mathbbm{Z}$ & $\mathbbm{Z}^3$ & $\mathbbm{Z}^4$ & $\mathbbm{Z}^8$ & $ \begin{aligned} &[ M_{\bar{x}y},C_{4z},S_{4z}^3,C_{2\bar{x}y} ] \\&[C_{4z},S_{4z}^3, M_{y},C_{2x}] \\&[S_{4z}^3,M_{z} ,C_{2\bar{x}y} ,C_{2x},I] \\\end{aligned}$ & $ \begin{aligned} &[C_{2z},I] \\&[C_{4z},C_{4z},C_{4z}^3,C_{4z}^3] \\&[C_{2y},C_{2x}] \\&[C_{2xy} ,C_{2\bar{x}y} ] \\\end{aligned}$ \\
\hline
$A_{1g}$ & $\emptyset$ & $\emptyset$ & $\emptyset$ & $\emptyset$ & $ \begin{aligned} \emptyset\end{aligned}$ & $ \begin{aligned} \emptyset\end{aligned}$ \\
\hline
$A_{2u}$ & $\emptyset$ & $\mathbbm{Z}$ & $\emptyset$ & $\mathbbm{Z}$ & $ \begin{aligned} &[S_{4z}^3,M_{z} ,C_{2\bar{x}y} ,C_{2x},I] \\\end{aligned}$ & $ \begin{aligned} \emptyset\end{aligned}$ \\
\hline
$A_{2g}$ & $\emptyset$ & $\mathbbm{Z}^2$ & $\emptyset$ & $\mathbbm{Z}^2$ & $ \begin{aligned} &[ M_{\bar{x}y}, M_{y},C_{2\bar{x}y} ,C_{2x}] \\&[ M_{y}] \\\end{aligned}$ & $ \begin{aligned} \emptyset\end{aligned}$ \\
\hline
$B_{1u}$ & $\emptyset$ & $\mathbbm{Z}^2$ & $\mathbbm{Z}$ & $\mathbbm{Z}^3$ & $ \begin{aligned} &[C_{4z},S_{4z}^3, M_{y},C_{2x}] \\&[S_{4z}^3,M_{z} ,C_{2\bar{x}y} ,C_{2x},I] \\\end{aligned}$ & $ \begin{aligned} &[C_{2y}] \\\end{aligned}$ \\
\hline
$B_{1g}$ & $\emptyset$ & $\mathbbm{Z}$ & $\emptyset$ & $\mathbbm{Z}$ & $ \begin{aligned} &[ M_{\bar{x}y},C_{4z},S_{4z}^3,C_{2\bar{x}y} ] \\\end{aligned}$ & $ \begin{aligned} \emptyset\end{aligned}$ \\
\hline
$B_{2u}$ & $\emptyset$ & $\mathbbm{Z}^2$ & $\mathbbm{Z}$ & $\mathbbm{Z}^3$ & $ \begin{aligned} &[ M_{\bar{x}y},C_{4z},S_{4z}^3,C_{2\bar{x}y} ] \\&[S_{4z}^3,M_{z} ,C_{2\bar{x}y} ,C_{2x},I] \\\end{aligned}$ & $ \begin{aligned} &[C_{2xy} ] \\\end{aligned}$ \\
\hline
$B_{2g}$ & $\emptyset$ & $\mathbbm{Z}$ & $\emptyset$ & $\mathbbm{Z}$ & $ \begin{aligned} &[C_{4z},S_{4z}^3, M_{y},C_{2x}] \\\end{aligned}$ & $ \begin{aligned} \emptyset\end{aligned}$ \\
\hline
\hline
\end{tabular}
\end{table}
%%%%%%%%%%%%%%%%%%%%%%%%%%%%%%%%%%%%%%%%%%%%%%%%%%%%%%%%%%%%%%%%%%%%%%%%%%%%%%%%%%%%%%%%%%%%%%%%%%%%%%%%%%%%%%%%%%%%%%%%%%%%%%%%%%%%%%%%%%%%%%%%%
\begin{table}[H]
\caption{Group $C_{3}$}
\label{Group C3}
\centering
\begin{tabular}{c|cccc|c|c}
\hline \hline
 Irrep & $\mathcal{C}_{\mathrm{1st}}$ & $\mathcal{C}_{\mathrm{2nd}}$ & $\mathcal{C}_{\mathrm{3rd}}$ & $\mathcal{C}_{\mathrm{full}}$ & 2nd Generators & 3rd Generators\\    
\hline
$A_{}$ & $\mathbbm{Z}$ & $\emptyset$ & $\mathbbm{Z}$ & $\mathbbm{Z}^2$ & $ \begin{aligned} \emptyset\end{aligned}$ & $ \begin{aligned} &[C_{3z} ,C_{3z}^2 ] \\\end{aligned}$ \\
\hline
\hline
\end{tabular}
\end{table}
%%%%%%%%%%%%%%%%%%%%%%%%%%%%%%%%%%%%%%%%%%%%%%%%%%%%%%%%%%%%%%%%%%%%%%%%%%%%%%%%%%%%%%%%%%%%%%%%%%%%%%%%%%%%%%%%%%%%%%%%%%%%%%%%%%%%%%%%%%%%%%%%%
\begin{table}[H]
\caption{Group $C_{3i}$}
\label{Group C3i}
\centering
\begin{tabular}{c|cccc|c|c}
\hline \hline
 Irrep & $\mathcal{C}_{\mathrm{1st}}$ & $\mathcal{C}_{\mathrm{2nd}}$ & $\mathcal{C}_{\mathrm{3rd}}$ & $\mathcal{C}_{\mathrm{full}}$ & 2nd Generators & 3rd Generators\\    
\hline
$A_{u}$ & $\mathbbm{Z}$ & $\mathbbm{Z}_2$ & $\mathbbm{Z}\otimes\mathbbm{Z}_2$ & $\mathbbm{Z}^2\otimes\mathbbm{Z}_4$ & $ \begin{aligned} &[S_{3z}^5 ,S_{3z} ,I] \\\end{aligned}$ & $ \begin{aligned} &[C_{3z} ,C_{3z}^2 ] \\&[I] \\\end{aligned}$ \\
\hline
$A_{g}$ & $\emptyset$ & $\emptyset$ & $\emptyset$ & $\emptyset$ & $ \begin{aligned} \emptyset\end{aligned}$ & $ \begin{aligned} \emptyset\end{aligned}$ \\
\hline
\hline
\end{tabular}
\end{table}
%%%%%%%%%%%%%%%%%%%%%%%%%%%%%%%%%%%%%%%%%%%%%%%%%%%%%%%%%%%%%%%%%%%%%%%%%%%%%%%%%%%%%%%%%%%%%%%%%%%%%%%%%%%%%%%%%%%%%%%%%%%%%%%%%%%%%%%%%%%%%%%%%
\begin{table}[H]
\caption{Group $D_{3}$}
\label{Group D3}
\centering
\begin{tabular}{c|cccc|c|c}
\hline \hline
 Irrep & $\mathcal{C}_{\mathrm{1st}}$ & $\mathcal{C}_{\mathrm{2nd}}$ & $\mathcal{C}_{\mathrm{3rd}}$ & $\mathcal{C}_{\mathrm{full}}$ & 2nd Generators & 3rd Generators\\    
\hline
$A_{1}$ & $\mathbbm{Z}$ & $\mathbbm{Z}_2$ & $\mathbbm{Z}^2$ & $\mathbbm{Z}^3$ & $ \begin{aligned} &[C_{2x}] \\\end{aligned}$ & $ \begin{aligned} &[C_{3z} ,C_{3z}^2 ] \\&[C_{2\bar{x}y} ,C_{2,x2y},C_{2,2xy} ] \\\end{aligned}$ \\
\hline
$A_{2}$ & $\emptyset$ & $\mathbbm{Z}_2$ & $\emptyset$ & $\mathbbm{Z}_2$ & $ \begin{aligned} &[C_{2x}] \\\end{aligned}$ & $ \begin{aligned} \emptyset\end{aligned}$ \\
\hline
\hline
\end{tabular}
\end{table}
%%%%%%%%%%%%%%%%%%%%%%%%%%%%%%%%%%%%%%%%%%%%%%%%%%%%%%%%%%%%%%%%%%%%%%%%%%%%%%%%%%%%%%%%%%%%%%%%%%%%%%%%%%%%%%%%%%%%%%%%%%%%%%%%%%%%%%%%%%%%%%%%%
\begin{table}[H]
\caption{Group $C_{3v}$}
\label{Group C3v}
\centering
\begin{tabular}{c|cccc|c|c}
\hline \hline
 Irrep & $\mathcal{C}_{\mathrm{1st}}$ & $\mathcal{C}_{\mathrm{2nd}}$ & $\mathcal{C}_{\mathrm{3rd}}$ & $\mathcal{C}_{\mathrm{full}}$ & 2nd Generators & 3rd Generators\\    
\hline
$A_{2}$ & $\mathbbm{Z}$ & $\mathbbm{Z}$ & $\mathbbm{Z}$ & $\mathbbm{Z}^3$ & $ \begin{aligned} &[M_{x2y} ] \\\end{aligned}$ & $ \begin{aligned} &[C_{3z} ,C_{3z}^2 ] \\\end{aligned}$ \\
\hline
$A_{1}$ & $\emptyset$ & $\emptyset$ & $\emptyset$ & $\emptyset$ & $ \begin{aligned} \emptyset\end{aligned}$ & $ \begin{aligned} \emptyset\end{aligned}$ \\
\hline
\hline
\end{tabular}
\end{table}
%%%%%%%%%%%%%%%%%%%%%%%%%%%%%%%%%%%%%%%%%%%%%%%%%%%%%%%%%%%%%%%%%%%%%%%%%%%%%%%%%%%%%%%%%%%%%%%%%%%%%%%%%%%%%%%%%%%%%%%%%%%%%%%%%%%%%%%%%%%%%%%%%
\begin{table}[H]
\caption{Group $D_{3d}$}
\label{Group D3d}
\centering
\begin{tabular}{c|cccc|c|c}
\hline \hline
 Irrep & $\mathcal{C}_{\mathrm{1st}}$ & $\mathcal{C}_{\mathrm{2nd}}$ & $\mathcal{C}_{\mathrm{3rd}}$ & $\mathcal{C}_{\mathrm{full}}$ & 2nd Generators & 3rd Generators\\    
\hline
$A_{1u}$ & $\mathbbm{Z}$ & $\mathbbm{Z}\otimes\mathbbm{Z}_2$ & $\mathbbm{Z}^2\otimes\mathbbm{Z}_2$ & $\mathbbm{Z}^4\otimes\mathbbm{Z}_2$ & $ \begin{aligned} &[S_{3z}^5 , M_{y},I] \\&[S_{3z}^5 ,C_{2x},I] \\\end{aligned}$ & $ \begin{aligned} &[C_{3z} ,C_{3z}^2 ] \\&[C_{2\bar{x}y} ,C_{2,x2y},C_{2,2xy} ] \\&[I] \\\end{aligned}$ \\
\hline
$A_{1g}$ & $\emptyset$ & $\emptyset$ & $\emptyset$ & $\emptyset$ & $ \begin{aligned} \emptyset\end{aligned}$ & $ \begin{aligned} \emptyset\end{aligned}$ \\
\hline
$A_{2u}$ & $\emptyset$ & $\mathbbm{Z}_2$ & $\mathbbm{Z}_2$ & $\mathbbm{Z}_4$ & $ \begin{aligned} &[S_{3z}^5 ,C_{2x},I] \\\end{aligned}$ & $ \begin{aligned} &[I] \\\end{aligned}$ \\
\hline
$A_{2g}$ & $\emptyset$ & $\mathbbm{Z}$ & $\emptyset$ & $\mathbbm{Z}$ & $ \begin{aligned} &[ M_{y},C_{2x}] \\\end{aligned}$ & $ \begin{aligned} \emptyset\end{aligned}$ \\
\hline
\hline
\end{tabular}
\end{table}
%%%%%%%%%%%%%%%%%%%%%%%%%%%%%%%%%%%%%%%%%%%%%%%%%%%%%%%%%%%%%%%%%%%%%%%%%%%%%%%%%%%%%%%%%%%%%%%%%%%%%%%%%%%%%%%%%%%%%%%%%%%%%%%%%%%%%%%%%%%%%%%%%
\begin{table}[H]
\caption{Group $C_{6}$}
\label{Group C6}
\centering
\begin{tabular}{c|cccc|c|c}
\hline \hline
 Irrep & $\mathcal{C}_{\mathrm{1st}}$ & $\mathcal{C}_{\mathrm{2nd}}$ & $\mathcal{C}_{\mathrm{3rd}}$ & $\mathcal{C}_{\mathrm{full}}$ & 2nd Generators & 3rd Generators\\    
\hline
$A_{}$ & $\mathbbm{Z}$ & $\mathbbm{Z}_2$ & $\mathbbm{Z}^3$ & $\mathbbm{Z}^4$ & $ \begin{aligned} &[C_{6z} ,C_{6z}^5 ,C_{2z}] \\\end{aligned}$ & $ \begin{aligned} &[C_{3z} ,C_{3z}^2 ] \\&[C_{2z}] \\&[C_{6z}^5 ,C_{6z}^5 ,C_{6z}^5 ,C_{6z} ,C_{6z} ,C_{6z} ] \\\end{aligned}$ \\
\hline
$B_{}$ & $\emptyset$ & $\mathbbm{Z}_2$ & $\emptyset$ & $\mathbbm{Z}_2$ & $ \begin{aligned} &[C_{6z} ,C_{6z}^5 ,C_{2z}] \\\end{aligned}$ & $ \begin{aligned} \emptyset\end{aligned}$ \\
\hline
\hline
\end{tabular}
\end{table}
%%%%%%%%%%%%%%%%%%%%%%%%%%%%%%%%%%%%%%%%%%%%%%%%%%%%%%%%%%%%%%%%%%%%%%%%%%%%%%%%%%%%%%%%%%%%%%%%%%%%%%%%%%%%%%%%%%%%%%%%%%%%%%%%%%%%%%%%%%%%%%%%%
\begin{table}[H]
\caption{Group $C_{3h}$}
\label{Group C3h}
\centering
\begin{tabular}{c|cccc|c|c}
\hline \hline
 Irrep & $\mathcal{C}_{\mathrm{1st}}$ & $\mathcal{C}_{\mathrm{2nd}}$ & $\mathcal{C}_{\mathrm{3rd}}$ & $\mathcal{C}_{\mathrm{full}}$ & 2nd Generators & 3rd Generators\\    
\hline
$A_{''}$ & $\mathbbm{Z}$ & $\mathbbm{Z}$ & $\mathbbm{Z}$ & $\mathbbm{Z}^3$ & $ \begin{aligned} &[S_{6z}^5 , S_{6z},M_{z} ] \\\end{aligned}$ & $ \begin{aligned} &[C_{3z} ,C_{3z}^2 ] \\\end{aligned}$ \\
\hline
$A_{'}$ & $\emptyset$ & $\emptyset$ & $\emptyset$ & $\emptyset$ & $ \begin{aligned} \emptyset\end{aligned}$ & $ \begin{aligned} \emptyset\end{aligned}$ \\
\hline
\hline
\end{tabular}
\end{table}
%%%%%%%%%%%%%%%%%%%%%%%%%%%%%%%%%%%%%%%%%%%%%%%%%%%%%%%%%%%%%%%%%%%%%%%%%%%%%%%%%%%%%%%%%%%%%%%%%%%%%%%%%%%%%%%%%%%%%%%%%%%%%%%%%%%%%%%%%%%%%%%%%
\begin{table}[H]
\caption{Group $C_{6h}$}
\label{Group C6h}
\centering
\begin{tabular}{c|cccc|c|c}
\hline \hline
 Irrep & $\mathcal{C}_{\mathrm{1st}}$ & $\mathcal{C}_{\mathrm{2nd}}$ & $\mathcal{C}_{\mathrm{3rd}}$ & $\mathcal{C}_{\mathrm{full}}$ & 2nd Generators & 3rd Generators\\    
\hline
$A_{u}$ & $\mathbbm{Z}$ & $\mathbbm{Z}\otimes\mathbbm{Z}_2$ & $\mathbbm{Z}^3\otimes\mathbbm{Z}_2$ & $\mathbbm{Z}^5\otimes\mathbbm{Z}_2$ & $ \begin{aligned} &[S_{3z}^5 ,S_{6z}^5 , S_{6z},S_{3z} ,M_{z} ,I] \\&[C_{6z} ,C_{6z}^5 ,S_{3z}^5 ,S_{3z} ,C_{2z},I] \\\end{aligned}$ & $ \begin{aligned} &[C_{3z} ,C_{3z}^2 ] \\&[C_{2z}] \\&[C_{6z}^5 ,C_{6z}^5 ,C_{6z}^5 ,C_{6z} ,C_{6z} ,C_{6z} ] \\&[I] \\\end{aligned}$ \\
\hline
$A_{g}$ & $\emptyset$ & $\emptyset$ & $\emptyset$ & $\emptyset$ & $ \begin{aligned} \emptyset\end{aligned}$ & $ \begin{aligned} \emptyset\end{aligned}$ \\
\hline
$B_{u}$ & $\emptyset$ & $\mathbbm{Z}_2$ & $\mathbbm{Z}_2$ & $\mathbbm{Z}_4$ & $ \begin{aligned} &[C_{6z} ,C_{6z}^5 ,S_{3z}^5 ,S_{3z} ,C_{2z},I] \\\end{aligned}$ & $ \begin{aligned} &[I] \\\end{aligned}$ \\
\hline
$B_{g}$ & $\emptyset$ & $\mathbbm{Z}$ & $\emptyset$ & $\mathbbm{Z}$ & $ \begin{aligned} &[C_{6z} ,C_{6z}^5 ,S_{6z}^5 , S_{6z},C_{2z},M_{z} ] \\\end{aligned}$ & $ \begin{aligned} \emptyset\end{aligned}$ \\
\hline
\hline
\end{tabular}
\end{table}
%%%%%%%%%%%%%%%%%%%%%%%%%%%%%%%%%%%%%%%%%%%%%%%%%%%%%%%%%%%%%%%%%%%%%%%%%%%%%%%%%%%%%%%%%%%%%%%%%%%%%%%%%%%%%%%%%%%%%%%%%%%%%%%%%%%%%%%%%%%%%%%%%
\begin{table}[H]
\caption{Group $D_{6}$}
\label{Group D6}
\centering
\begin{tabular}{c|cccc|c|c}
\hline \hline
 Irrep & $\mathcal{C}_{\mathrm{1st}}$ & $\mathcal{C}_{\mathrm{2nd}}$ & $\mathcal{C}_{\mathrm{3rd}}$ & $\mathcal{C}_{\mathrm{full}}$ & 2nd Generators & 3rd Generators\\    
\hline
$A_{1}$ & $\mathbbm{Z}$ & $\mathbbm{Z}_2^2$ & $\mathbbm{Z}^5$ & $\mathbbm{Z}^6$ & $ \begin{aligned} &[C_{6z} ,C_{2z},C_{2x}] \\&[C_{2\bar{x}y} ,C_{2x}] \\\end{aligned}$ & $ \begin{aligned} &[C_{3z} ,C_{3z}^2 ] \\&[C_{2z}] \\&[C_{6z}^5 ,C_{6z}^5 ,C_{6z}^5 ,C_{6z} ,C_{6z} ,C_{6z} ] \\&[C_{2xy} ,C_{2x},C_{2y}] \\&[C_{2\bar{x}y} ,C_{2,x2y},C_{2,2xy} ] \\\end{aligned}$ \\
\hline
$A_{2}$ & $\emptyset$ & $\mathbbm{Z}_2$ & $\emptyset$ & $\mathbbm{Z}_2$ & $ \begin{aligned} &[C_{2\bar{x}y} ,C_{2x}] \\\end{aligned}$ & $ \begin{aligned} \emptyset\end{aligned}$ \\
\hline
$B_{2}$ & $\emptyset$ & $\mathbbm{Z}_2$ & $\emptyset$ & $\mathbbm{Z}_2$ & $ \begin{aligned} &[C_{6z} ,C_{2z},C_{2x}] \\\end{aligned}$ & $ \begin{aligned} \emptyset\end{aligned}$ \\
\hline
$B_{1}$ & $\emptyset$ & $\mathbbm{Z}_2$ & $\emptyset$ & $\mathbbm{Z}_2$ & $ \begin{aligned} &[C_{6z} ,C_{2z},C_{2\bar{x}y} ] \\\end{aligned}$ & $ \begin{aligned} \emptyset\end{aligned}$ \\
\hline
\hline
\end{tabular}
\end{table}
%%%%%%%%%%%%%%%%%%%%%%%%%%%%%%%%%%%%%%%%%%%%%%%%%%%%%%%%%%%%%%%%%%%%%%%%%%%%%%%%%%%%%%%%%%%%%%%%%%%%%%%%%%%%%%%%%%%%%%%%%%%%%%%%%%%%%%%%%%%%%%%%%
\begin{table}[H]
\caption{Group $C_{6v}$}
\label{Group C6v}
\centering
\begin{tabular}{c|cccc|c|c}
\hline \hline
 Irrep & $\mathcal{C}_{\mathrm{1st}}$ & $\mathcal{C}_{\mathrm{2nd}}$ & $\mathcal{C}_{\mathrm{3rd}}$ & $\mathcal{C}_{\mathrm{full}}$ & 2nd Generators & 3rd Generators\\    
\hline
$A_{2}$ & $\mathbbm{Z}$ & $\mathbbm{Z}^2$ & $\mathbbm{Z}^3$ & $\mathbbm{Z}^6$ & $ \begin{aligned} &[C_{6z} , M_{y},C_{2z}] \\&[C_{6z} ,M_{x2y} ,C_{2z}] \\\end{aligned}$ & $ \begin{aligned} &[C_{3z} ,C_{3z}^2 ] \\&[C_{2z}] \\&[C_{6z}^5 ,C_{6z}^5 ,C_{6z}^5 ,C_{6z} ,C_{6z} ,C_{6z} ] \\\end{aligned}$ \\
\hline
$A_{1}$ & $\emptyset$ & $\emptyset$ & $\emptyset$ & $\emptyset$ & $ \begin{aligned} \emptyset\end{aligned}$ & $ \begin{aligned} \emptyset\end{aligned}$ \\
\hline
$B_{2}$ & $\emptyset$ & $\mathbbm{Z}$ & $\emptyset$ & $\mathbbm{Z}$ & $ \begin{aligned} &[C_{6z} ,M_{x2y} ,C_{2z}] \\\end{aligned}$ & $ \begin{aligned} \emptyset\end{aligned}$ \\
\hline
$B_{1}$ & $\emptyset$ & $\mathbbm{Z}$ & $\emptyset$ & $\mathbbm{Z}$ & $ \begin{aligned} &[C_{6z} , M_{y},C_{2z}] \\\end{aligned}$ & $ \begin{aligned} \emptyset\end{aligned}$ \\
\hline
\hline
\end{tabular}
\end{table}
%%%%%%%%%%%%%%%%%%%%%%%%%%%%%%%%%%%%%%%%%%%%%%%%%%%%%%%%%%%%%%%%%%%%%%%%%%%%%%%%%%%%%%%%%%%%%%%%%%%%%%%%%%%%%%%%%%%%%%%%%%%%%%%%%%%%%%%%%%%%%%%%%
\begin{table}[H]
\caption{Group $D_{3h}$}
\label{Group D3h}
\centering
\begin{tabular}{c|cccc|c|c}
\hline \hline
 Irrep & $\mathcal{C}_{\mathrm{1st}}$ & $\mathcal{C}_{\mathrm{2nd}}$ & $\mathcal{C}_{\mathrm{3rd}}$ & $\mathcal{C}_{\mathrm{full}}$ & 2nd Generators & 3rd Generators\\    
\hline
$A_{1''}$ & $\mathbbm{Z}$ & $\mathbbm{Z}^2$ & $\mathbbm{Z}^2$ & $\mathbbm{Z}^5$ & $ \begin{aligned} &[M_{x2y} ,C_{2x}] \\&[S_{6z}^5 ,M_{z} ,C_{2x}] \\\end{aligned}$ & $ \begin{aligned} &[C_{3z} ,C_{3z}^2 ] \\&[C_{2xy} ,C_{2x},C_{2y}] \\\end{aligned}$ \\
\hline
$A_{2''}$ & $\emptyset$ & $\mathbbm{Z}$ & $\emptyset$ & $\mathbbm{Z}$ & $ \begin{aligned} &[S_{6z}^5 ,M_{z} ,C_{2x}] \\\end{aligned}$ & $ \begin{aligned} \emptyset\end{aligned}$ \\
\hline
$A_{1'}$ & $\emptyset$ & $\emptyset$ & $\emptyset$ & $\emptyset$ & $ \begin{aligned} \emptyset\end{aligned}$ & $ \begin{aligned} \emptyset\end{aligned}$ \\
\hline
$A_{2'}$ & $\emptyset$ & $\mathbbm{Z}$ & $\emptyset$ & $\mathbbm{Z}$ & $ \begin{aligned} &[M_{x2y} ,C_{2x}] \\\end{aligned}$ & $ \begin{aligned} \emptyset\end{aligned}$ \\
\hline
\hline
\end{tabular}
\end{table}
%%%%%%%%%%%%%%%%%%%%%%%%%%%%%%%%%%%%%%%%%%%%%%%%%%%%%%%%%%%%%%%%%%%%%%%%%%%%%%%%%%%%%%%%%%%%%%%%%%%%%%%%%%%%%%%%%%%%%%%%%%%%%%%%%%%%%%%%%%%%%%%%%
\begin{table}[H]
\caption{Group $D_{6h}$}
\label{Group D6h}
\centering
\begin{tabular}{c|cccc|c|c}
\hline \hline
 Irrep & $\mathcal{C}_{\mathrm{1st}}$ & $\mathcal{C}_{\mathrm{2nd}}$ & $\mathcal{C}_{\mathrm{3rd}}$ & $\mathcal{C}_{\mathrm{full}}$ & 2nd Generators & 3rd Generators\\    
\hline
$A_{1u}$ & $\mathbbm{Z}$ & $\mathbbm{Z}^3$ & $\mathbbm{Z}^5$ & $\mathbbm{Z}^9$ & $ \begin{aligned} &[C_{6z} ,S_{3z}^5 , M_{y},C_{2z},C_{2\bar{x}y} ,I] \\&[C_{6z} ,S_{3z}^5 ,M_{x2y} ,C_{2z},C_{2x},I] \\&[S_{3z}^5 ,S_{6z}^5 ,M_{z} ,C_{2\bar{x}y} ,C_{2x},I] \\\end{aligned}$ & $ \begin{aligned} &[C_{3z} ,C_{3z}^2 ] \\&[C_{2z},I] \\&[C_{6z}^5 ,C_{6z}^5 ,C_{6z}^5 ,C_{6z} ,C_{6z} ,C_{6z} ] \\&[C_{2xy} ,C_{2x},C_{2y},I] \\&[C_{2\bar{x}y} ,C_{2,x2y},C_{2,2xy} ,I] \\\end{aligned}$ \\
\hline
$A_{1g}$ & $\emptyset$ & $\emptyset$ & $\emptyset$ & $\emptyset$ & $ \begin{aligned} \emptyset\end{aligned}$ & $ \begin{aligned} \emptyset\end{aligned}$ \\
\hline
$A_{2u}$ & $\emptyset$ & $\mathbbm{Z}$ & $\mathbbm{Z}_2$ & $\mathbbm{Z}\otimes\mathbbm{Z}_2$ & $ \begin{aligned} &[S_{3z}^5 ,S_{6z}^5 ,M_{z} ,C_{2\bar{x}y} ,C_{2x},I] \\\end{aligned}$ & $ \begin{aligned} &[I] \\\end{aligned}$ \\
\hline
$A_{2g}$ & $\emptyset$ & $\mathbbm{Z}^2$ & $\emptyset$ & $\mathbbm{Z}^2$ & $ \begin{aligned} &[ M_{y},M_{x2y} ,C_{2\bar{x}y} ,C_{2x}] \\&[M_{x2y} ] \\\end{aligned}$ & $ \begin{aligned} \emptyset\end{aligned}$ \\
\hline
$B_{2u}$ & $\emptyset$ & $\mathbbm{Z}$ & $\mathbbm{Z}_2$ & $\mathbbm{Z}\otimes\mathbbm{Z}_2$ & $ \begin{aligned} &[C_{6z} ,S_{3z}^5 ,M_{x2y} ,C_{2z},C_{2x},I] \\\end{aligned}$ & $ \begin{aligned} &[I] \\\end{aligned}$ \\
\hline
$B_{2g}$ & $\emptyset$ & $\mathbbm{Z}^2$ & $\emptyset$ & $\mathbbm{Z}^2$ & $ \begin{aligned} &[C_{6z} ,S_{6z}^5 , M_{y},C_{2z},M_{z} ,C_{2x}] \\&[M_{z} ] \\\end{aligned}$ & $ \begin{aligned} \emptyset\end{aligned}$ \\
\hline
$B_{1u}$ & $\emptyset$ & $\mathbbm{Z}$ & $\mathbbm{Z}_2$ & $\mathbbm{Z}\otimes\mathbbm{Z}_2$ & $ \begin{aligned} &[C_{6z} ,S_{3z}^5 , M_{y},C_{2z},C_{2\bar{x}y} ,I] \\\end{aligned}$ & $ \begin{aligned} &[I] \\\end{aligned}$ \\
\hline
$B_{1g}$ & $\emptyset$ & $\mathbbm{Z}^2$ & $\emptyset$ & $\mathbbm{Z}^2$ & $ \begin{aligned} &[C_{6z} ,S_{6z}^5 ,M_{x2y} ,C_{2z},M_{z} ,C_{2\bar{x}y} ] \\&[M_{z} ] \\\end{aligned}$ & $ \begin{aligned} \emptyset\end{aligned}$ \\
\hline
\hline
\end{tabular}
\end{table}
%%%%%%%%%%%%%%%%%%%%%%%%%%%%%%%%%%%%%%%%%%%%%%%%%%%%%%%%%%%%%%%%%%%%%%%%%%%%%%%%%%%%%%%%%%%%%%%%%%%%%%%%%%%%%%%%%%%%%%%%%%%%%%%%%%%%%%%%%%%%%%%%%
\begin{table}[H]
\caption{Group $T_{}$}
\label{Group T}
\centering
\begin{tabular}{c|cccc|c|c}
\hline \hline
 Irrep & $\mathcal{C}_{\mathrm{1st}}$ & $\mathcal{C}_{\mathrm{2nd}}$ & $\mathcal{C}_{\mathrm{3rd}}$ & $\mathcal{C}_{\mathrm{full}}$ & 2nd Generators & 3rd Generators\\    
\hline
$A_{}$ & $\mathbbm{Z}$ & $\emptyset$ & $\mathbbm{Z}^2$ & $\mathbbm{Z}^3$ & $ \begin{aligned} \emptyset\end{aligned}$ & $ \begin{aligned} &[C_{2z},C_{2y},C_{2x}] \\&[C_{3xyz},C_{3\bar{x}y\bar{z}} ,C_{3\bar{x}yz}^2 ,C_{3\bar{x}\bar{y}z} ,C_{3xyz}^2 , C_{3\bar{x}yz},C_{3\bar{x}\bar{y}z}^2 , C_{3\bar{x}y\bar{z}}^2] \\\end{aligned}$ \\
\hline
\hline
\end{tabular}
\end{table}
%%%%%%%%%%%%%%%%%%%%%%%%%%%%%%%%%%%%%%%%%%%%%%%%%%%%%%%%%%%%%%%%%%%%%%%%%%%%%%%%%%%%%%%%%%%%%%%%%%%%%%%%%%%%%%%%%%%%%%%%%%%%%%%%%%%%%%%%%%%%%%%%%
\begin{table}[H]
\caption{Group $T_{h}$}
\label{Group Th}
\centering
\begin{tabular}{c|cccc|c|c}
\hline \hline
 Irrep & $\mathcal{C}_{\mathrm{1st}}$ & $\mathcal{C}_{\mathrm{2nd}}$ & $\mathcal{C}_{\mathrm{3rd}}$ & $\mathcal{C}_{\mathrm{full}}$ & 2nd Generators & 3rd Generators\\    
\hline
$A_{u}$ & $\mathbbm{Z}$ & $\mathbbm{Z}$ & $\mathbbm{Z}^2$ & $\mathbbm{Z}^4$ & $ \begin{aligned} &[S_{3\bar{x}y\bar{z}} ,S_{3\bar{x}y\bar{z}}^5 , M_{y},I] \\\end{aligned}$ & $ \begin{aligned} &[C_{2z},C_{2y},C_{2x},I] \\&[C_{3xyz},C_{3\bar{x}y\bar{z}} ,C_{3\bar{x}yz}^2 ,C_{3\bar{x}\bar{y}z} ,C_{3xyz}^2 , C_{3\bar{x}yz},C_{3\bar{x}\bar{y}z}^2 , C_{3\bar{x}y\bar{z}}^2] \\\end{aligned}$ \\
\hline
$A_{g}$ & $\emptyset$ & $\emptyset$ & $\emptyset$ & $\emptyset$ & $ \begin{aligned} \emptyset\end{aligned}$ & $ \begin{aligned} \emptyset\end{aligned}$ \\
\hline
\hline
\end{tabular}
\end{table}
%%%%%%%%%%%%%%%%%%%%%%%%%%%%%%%%%%%%%%%%%%%%%%%%%%%%%%%%%%%%%%%%%%%%%%%%%%%%%%%%%%%%%%%%%%%%%%%%%%%%%%%%%%%%%%%%%%%%%%%%%%%%%%%%%%%%%%%%%%%%%%%%%
\begin{table}[H]
\caption{Group $O_{}$}
\label{Group O}
\centering
\begin{tabular}{c|cccc|c|c}
\hline \hline
 Irrep & $\mathcal{C}_{\mathrm{1st}}$ & $\mathcal{C}_{\mathrm{2nd}}$ & $\mathcal{C}_{\mathrm{3rd}}$ & $\mathcal{C}_{\mathrm{full}}$ & 2nd Generators & 3rd Generators\\    
\hline
$A_{1}$ & $\mathbbm{Z}$ & $\mathbbm{Z}_2$ & $\mathbbm{Z}^4$ & $\mathbbm{Z}^5$ & $ \begin{aligned} &[C_{4z},C_{2xz} ] \\\end{aligned}$ & $ \begin{aligned} &[C_{2z},C_{2y},C_{2x}] \\&[C_{3xyz},C_{3\bar{x}y\bar{z}} ,C_{3\bar{x}yz}^2 ,C_{3\bar{x}\bar{y}z} ,C_{3xyz}^2 , C_{3\bar{x}yz},C_{3\bar{x}\bar{y}z}^2 , C_{3\bar{x}y\bar{z}}^2] \\&[C_{2xy} ,C_{2\bar{x}y} ,C_{2yz} ,C_{2\bar{y}z} ,C_{2xz} ,C_{2\bar{x}z} ] \\&[C_{4z}^3,C_{4z}^3,C_{4z},C_{4z},C_{4x}^3,C_{4x}^3,C_{4x},C_{4x}, C_{4y}, C_{4y},C_{4y}^3,C_{4y}^3] \\\end{aligned}$ \\
\hline
$A_{2}$ & $\emptyset$ & $\mathbbm{Z}_2$ & $\emptyset$ & $\mathbbm{Z}_2$ & $ \begin{aligned} &[C_{4z},C_{2xz} ] \\\end{aligned}$ & $ \begin{aligned} \emptyset\end{aligned}$ \\
\hline
\hline
\end{tabular}
\end{table}
%%%%%%%%%%%%%%%%%%%%%%%%%%%%%%%%%%%%%%%%%%%%%%%%%%%%%%%%%%%%%%%%%%%%%%%%%%%%%%%%%%%%%%%%%%%%%%%%%%%%%%%%%%%%%%%%%%%%%%%%%%%%%%%%%%%%%%%%%%%%%%%%%
\begin{table}[H]
\caption{Group $T_{d}$}
\label{Group Td}
\centering
\begin{tabular}{c|cccc|c|c}
\hline \hline
 Irrep & $\mathcal{C}_{\mathrm{1st}}$ & $\mathcal{C}_{\mathrm{2nd}}$ & $\mathcal{C}_{\mathrm{3rd}}$ & $\mathcal{C}_{\mathrm{full}}$ & 2nd Generators & 3rd Generators\\    
\hline
$A_{2}$ & $\mathbbm{Z}$ & $\mathbbm{Z}$ & $\mathbbm{Z}^2$ & $\mathbbm{Z}^4$ & $ \begin{aligned} &[ M_{\bar{x}y},S_{4x}] \\\end{aligned}$ & $ \begin{aligned} &[C_{2z},C_{2y},C_{2x}] \\&[C_{3xyz},C_{3\bar{x}y\bar{z}} ,C_{3\bar{x}yz}^2 ,C_{3\bar{x}\bar{y}z} ,C_{3xyz}^2 , C_{3\bar{x}yz},C_{3\bar{x}\bar{y}z}^2 , C_{3\bar{x}y\bar{z}}^2] \\\end{aligned}$ \\
\hline
$A_{1}$ & $\emptyset$ & $\emptyset$ & $\emptyset$ & $\emptyset$ & $ \begin{aligned} \emptyset\end{aligned}$ & $ \begin{aligned} \emptyset\end{aligned}$ \\
\hline
\hline
\end{tabular}
\end{table}
%%%%%%%%%%%%%%%%%%%%%%%%%%%%%%%%%%%%%%%%%%%%%%%%%%%%%%%%%%%%%%%%%%%%%%%%%%%%%%%%%%%%%%%%%%%%%%%%%%%%%%%%%%%%%%%%%%%%%%%%%%%%%%%%%%%%%%%%%%%%%%%%%
\begin{table}[H]
\caption{Group $O_{h}$}
\label{Group Oh}
\centering
\begin{tabular}{c|cccc|c|c}
\hline \hline
 Irrep & $\mathcal{C}_{\mathrm{1st}}$ & $\mathcal{C}_{\mathrm{2nd}}$ & $\mathcal{C}_{\mathrm{3rd}}$ & $\mathcal{C}_{\mathrm{full}}$ & 2nd Generators & 3rd Generators\\    
\hline
$A_{1u}$ & $\mathbbm{Z}$ & $\mathbbm{Z}^2$ & $\mathbbm{Z}^4$ & $\mathbbm{Z}^7$ & $ \begin{aligned} &[ M_{\bar{x}y},C_{4z},S_{4x},C_{2xz} ] \\&[S_{3\bar{x}y\bar{z}} ,C_{4z}, M_{y},C_{2xz} ,I] \\\end{aligned}$ & $ \begin{aligned} &[C_{2z},C_{2y},C_{2x},I] \\&[C_{3xyz},C_{3\bar{x}y\bar{z}} ,C_{3\bar{x}yz}^2 ,C_{3\bar{x}\bar{y}z} ,C_{3xyz}^2 , C_{3\bar{x}yz},C_{3\bar{x}\bar{y}z}^2 , C_{3\bar{x}y\bar{z}}^2] \\&[C_{2xy} ,C_{2\bar{x}y} ,C_{2yz} ,C_{2\bar{y}z} ,C_{2xz} ,C_{2\bar{x}z} ] \\&[C_{4z}^3,C_{4z}^3,C_{4z},C_{4z},C_{4x}^3,C_{4x}^3,C_{4x},C_{4x}, C_{4y}, C_{4y},C_{4y}^3,C_{4y}^3] \\\end{aligned}$ \\
\hline
$A_{1g}$ & $\emptyset$ & $\emptyset$ & $\emptyset$ & $\emptyset$ & $ \begin{aligned} \emptyset\end{aligned}$ & $ \begin{aligned} \emptyset\end{aligned}$ \\
\hline
$A_{2u}$ & $\emptyset$ & $\mathbbm{Z}$ & $\emptyset$ & $\mathbbm{Z}$ & $ \begin{aligned} &[S_{3\bar{x}y\bar{z}} ,C_{4z}, M_{y},C_{2xz} ,I] \\\end{aligned}$ & $ \begin{aligned} \emptyset\end{aligned}$ \\
\hline
$A_{2g}$ & $\emptyset$ & $\mathbbm{Z}$ & $\emptyset$ & $\mathbbm{Z}$ & $ \begin{aligned} &[ M_{\bar{x}y},C_{4z},S_{4x},C_{2xz} ] \\\end{aligned}$ & $ \begin{aligned} \emptyset\end{aligned}$ \\
\hline
\end{tabular}
\end{table}

%%%%%%%%%%%%%%%%%%%%%%%%%%%%%%%%%%%%%%%%%%%%%%%%%%%%%%%%%%%%%%%%%%%%%%%%%%%%%%%%%%%%%%%%%%%%%%%%%%%%%%%%%%%%%%%%%%%%%%%%%%%%%%%%%%%%%%%%%%%%%%%%%%%%%%%%%%%%%%%%%%%%%%%%%%%%%%%%%%%%%%%%%%%%%%%%%%%%%%%%%%%%%%%%%%%%%%%%%%%%%%%%%%%%%%%%%%%%%%%%%%%%%%%%%%%%%%%%%%%%%%%%%%%%%%%%%%%%%%%%%%%%%%%%%%%%%%%%%%%%%%%%%%%%%%%%%%%%%%%%%%%%%%%%%%

\section{The Wavefunctions of Majorana Zero Modes}
\subsection{At Fixed Points of $n$-fold Rotation Symmetry}\label{SM At Fixed Points of $n$-fold Rotation Symmetry}

In this Appendix, we show how to calculate the value of $\mathbbm Z^{e_i}$ invariants protected by compatible $C_n$ symmetry.
By virtue of bulk-edge correspondence, we can use the $C_n$-eigenvalues and the number of Majorana Kramers pairs located at the fixed points to calculate third-order topological invariants.

Let us use a TCSC with $C_4$-rotational even pairing potential for an explicit example.
Assuming the symmetry representation of $C_{4z}$ is 
\begin{equation}\label{C4 example1}
  U_s(C_4)=\begin{pmatrix}
  \tau_0&0\\
  0&-\tau_0\\
\end{pmatrix}\otimes \exp( \frac{i\pi\sigma_z}{4}).
\end{equation}
The Hamiltonian is conventionally chosen as
\begin{equation}\label{C4 example2}
H=h_+\oplus h_+\oplus h_-\oplus h_-=\gamma_z\tau_0 (\kk\times\nn_\rr)\cdot\ssigma,
\end{equation}
and a symmetric mass field $M_\rr=m_\rr[\sin(2\theta)\gamma_x\tau_0\sigma_0+\cos(2\theta)\gamma_y\tau_y\sigma_0]$ under $U_s(C_4)$ can be added.
Here, $m_\rr$ is a function that is a constant in the angular direction:
\begin{equation}\label{C4 example3}
m_\rr=\left\{
\begin{aligned}
0&\ \ \ \ \ |\rr|<R_0\\
m_0     &\ \ \ \ \ |\rr|>R_0\\
\end{aligned}
\right.
\end{equation}
Since the mass does not have translational symmetry, $k_x$ and $k_y$ are no longer good quantum numbers. Let's rewrite them as operators in polar coordinates:
\begin{equation}
i(k_x\pm ik_y)\rightarrow\exp(\pm i\theta)(\partial_r\pm \frac{i}{r} \partial_\theta)
\end{equation}
Thus, the surface Hamiltonian of TCSCs becomes
\begin{equation*}
   H_{r,\theta}=\resizebox{.9\hsize}{!}{$
   \begin{pmatrix}
    0&e^{-i\theta}(\partial_r-\frac{i}{r}\partial_\theta)& 0&0 & m_0 \cos(2\theta)&0 &-m_r \sin(2\theta)&0 \\
    -e^{i\theta}(\partial_r+\frac{i}{r}\partial_\theta)& 0&0 & 0&0 & m_0 \cos(2\theta)& 0&m_r \sin(2\theta) \\
    0&0 &0 &e^{-i\theta}(\partial_r-\frac{i}{r}\partial_\theta) &  m_0 \sin(2\theta)& 0&m_r \cos(2\theta)&0 \\
    0& 0& -e^{i\theta}(\partial_r+\frac{i}{r}\partial_\theta)& 0&0 &m_0 \sin(2\theta) & 0&m_r \cos(2\theta) \\
    m_r \cos(2\theta)& 0&m_0 \sin(2\theta) & 0&0 &e^{-i\theta}(\partial_r-\frac{i}{r}\partial_\theta) & 0&0 \\
    0& m_r \cos(2\theta)&0 &m_0 \sin(2\theta) &-{e}^{i\theta}(\partial r+\frac{i}{r}\partial_\theta) &0 &0 &0 \\
    -m_r \sin(2\theta)& 0& m_0 \cos(2\theta)& 0&0 &0 &0 & e^{-i\theta}(\partial_r-\frac{i}{r}\partial_\theta)\\
    0&-m_r \sin(2\theta) & 0&m_0 \cos(2\theta) & 0&0 &-e^{i\theta}(\partial_r+\frac{i}{r}\partial_\theta) &0  \\
   \end{pmatrix}.$}
\end{equation*}
For future convenience, we rewrite Hamiltonian on the basis of eigenvectors of $\gamma_z\tau_y\sigma_0$:
\begin{equation*}
   H_{r,\theta} =\resizebox{.9\hsize}{!}{$
   \begin{pmatrix}
 0 & \mathrm{e}^{i\theta}(\partial_r+\frac{i}{r}\partial_\theta) & 0 & 0 & 0 & 0 &  m_r \mathrm{e}^{-2i\theta} & 0 \\
 -\mathrm{e}^{-i\theta}(\partial_r-\frac{i}{r}\partial_\theta)  & 0 & 0 & 0 & 0 & 0 & 0 &  m_r \mathrm{e}^{-2i\theta} \\
 0 & 0 & 0 & -\mathrm{e}^{i\theta}(\partial_r+\frac{i}{r}\partial_\theta) &   m_r \mathrm{e}^{-2i\theta} & 0 & 0 & 0 \\
 0 & 0 & \mathrm{e}^{-i\theta}(\partial_r-\frac{i}{r}\partial_\theta) & 0 & 0 &  m_r \mathrm{e}^{-2i\theta} & 0 & 0 \\
 0 & 0 &  m_r \mathrm{e}^{2i\theta} & 0 & 0 & \mathrm{e}^{i\theta}(\partial_r+\frac{i}{r}\partial_\theta) & 0 & 0 \\
 0 & 0 & 0 &  m_r \mathrm{e}^{2i\theta} &-\mathrm{e}^{-i\theta}(\partial_r-\frac{i}{r}\partial_\theta) & 0 & 0 & 0 \\
  m_r \mathrm{e}^{2i\theta} & 0 & 0 & 0 & 0 & 0 & 0 & -\mathrm{e}^{i\theta}(\partial_r+\frac{i}{r}\partial_\theta) \\
 0 &  m_r \mathrm{e}^{2i\theta} & 0 & 0 & 0 & 0 & \mathrm{e}^{-i\theta}(\partial_r-\frac{i}{r}\partial_\theta) & 0 \\
\end{pmatrix}$}.
\end{equation*}
The eigenvectors of $H_{r,\theta}$ has the form
\begin{equation}
\begin{aligned}
&|\Psi_1(r,\theta)\rangle=e^{iJ_1\theta}\phi_1(r)[e^{i\theta},0\, ,0\,,0\,,0\,,0\,,0\,,e^{2i\theta}]^T\ \ {\rm{or}}\ \ |\Psi_2(r,\theta)\rangle=e^{iJ_2\theta}\phi_2(r)[0\, ,e^{i\theta} ,0\,,0\,,0\,,0\,,e^{4i\theta},0]^T\\
{\rm{or}}\ \ &|\Psi_3(r,\theta)\rangle=e^{iJ_3\theta}\phi_3(r)[0\,,0\,,e^{i\theta},0\,,0\,,e^{2i\theta},0\,,0\,]^T\ \ {\rm{or}}\ \ |\Psi_4(r,\theta)\rangle=e^{iJ_4\theta}\phi_4(r)[0\,,0\,,0\,,e^{i\theta},e^{4i\theta},0\,,0\,,0\,]^T.\\
\end{aligned}
\end{equation}
The form of $\phi_i(r)$ can be obtained by solving the equation:
\begin{equation}
\begin{aligned}
&H_{r,\theta}|\Psi_1(r,\theta)\rangle=0\rightarrow
\left\{\begin{aligned} -e^{-i\theta}(\partial_r-\frac{i}{r}\partial_\theta)e^{i(J_1+1)\theta}\phi_1(r)+m_r e^{-2i\theta}e^{i(J_1+2)\theta}\phi_1(r)&=0\\
m_r e^{2i\theta}e^{i(J_1+1)\theta}\phi_1(r)-e^{i\theta}(\partial_r+\frac{i}{r}\partial_\theta)e^{i(J_1+2)\theta}\phi_1(r)&=0\\
\end{aligned}\right.
\\
&H_{r,\theta}|\Psi_2(r,\theta)\rangle=0\rightarrow
\left\{\begin{aligned} e^{i\theta}(\partial_r+\frac{i}{r}\partial_\theta)e^{i(J_2+1)\theta}\phi_2(r)+m_r e^{-2i\theta}e^{i(J_2+4)\theta}\phi_2(r)&=0\\
 m_re^{2i\theta}e^{i(J_2+1)\theta}\phi_2(r)+e^{-i\theta}(\partial_r-\frac{i}{r}\partial_\theta)e^{i(J_2+4)\theta}\phi_2(r)&=0
\end{aligned}\right.
\\
&H_{r,\theta}|\Psi_3(r,\theta)\rangle=0\rightarrow
\left\{\begin{aligned}
 e^{-i\theta}(\partial_r-\frac{i}{r}\partial_\theta)e^{i(J_3+1)\theta}\phi_3(r)+m_r e^{-2i\theta}e^{i(J_3+2)\theta}\phi_3(r)&=0\\
m_re^{2i\theta}e^{i(J_3+1)\theta}\phi_3(r)+e^{i\theta}(\partial_r+\frac{i}{r}\partial_\theta)e^{i(J_3+2)\theta}\phi_3(r)&=0\\
 \end{aligned}\right.\\
&H_{r,\theta}|\Psi_4(r,\theta)\rangle=0\rightarrow
\left\{ \begin{aligned}
 -e^{i\theta}(\partial_r+\frac{i}{r}\partial_\theta)e^{i(J_4+1)\theta}\phi_4(r)+m_r e^{-2i\theta}e^{i(J_4+4)\theta}\phi_4(r)=0\\
m_r e^{2i\theta}e^{i(J_4+1)\theta}\phi_4(r)-e^{-i\theta}(\partial_r-\frac{i}{r}\partial_\theta)e^{i(J_4+4)\theta}\phi_4(r)=0\\
 \end{aligned}\right.\\
\end{aligned}
\end{equation}
In general, the differential equation
\begin{equation}
(\partial_r+\frac{\alpha}{r})\phi(r)=0
\end{equation}
has the solutions
\begin{equation}
\phi(r)=r^{-\alpha },\ \ \alpha\in \mathbbm Z.
\end{equation}
Moreover, the probability density function should be normalized that requires:
\begin{equation}\label{SM wavefunctions is not diverge}
\int_{{\rm space}}{{\rm d}}\rr\ {{\rm Pr}}(\rr)=\int_{{\rm space}}{{\rm d}}\rr\ |\Psi_i(r,\theta)|^2=\int_{{\rm space}}r{{\rm d}}r{{\rm d}}\theta\,\phi_i(r)^2\rightarrow {{\rm finite}}.
\end{equation}
It implies that the $\alpha$ in wavefunctions $\phi(r)$ must be a non-positive integer such that the wave function does not diverge at the $(\bm{r}=0)$ point.
So in the region $|\rr|<R_0$, the wavefunctions $|\Psi(r,\theta)\rangle$ and $|\Psi^T(r,\theta)\rangle$ of Majorana Kramers pairs satisfying Eq.(\ref{SM wavefunctions is not diverge}) are
\begin{equation}
\left\{\begin{aligned}|\Psi_1(r,\theta)\rangle= r[1,0\, ,0\,,0\,,0\,,0\,,0\,,e^{i\theta}]^T\\ |\Psi_1^T(r,\theta)\rangle=r[0\,,0\,,1,0\,,0\,,e^{i\theta},0\,,0\,]^T\end{aligned}\right.,\ \ \ \ \ \ 
\left\{\begin{aligned}|\Psi_2(r,\theta)\rangle= r^2[e^{-i\theta},0\, ,0\,,0\,,0\,,0\,,0\,,1]^T\\ |\Psi_2^T(r,\theta)\rangle=r^2[0\,,0\,,e^{-i\theta},0\,,0\,,1,0\,,0\,]^T\end{aligned}\right..
\end{equation}
The angular momentums $e_i$ of Majorana Kramers pairs are
\begin{equation}
e_i=\langle\Psi_i(r,\theta)|U_s(C_4)\otimes \hat{U}_{\theta\rightarrow \theta-\pi/2}|\Psi_i(r,\theta)\rangle.
\end{equation}
The chiral eigenvalues of Majorana Kramers pairs are
\begin{equation}
s_i=\langle\Psi_i(r,\theta)|\gamma_z\tau_0\sigma_0 |\Psi_i(r,\theta)\rangle.
\end{equation}
Thus one can write down effective theory in Hilbert space span by Majorana Kramers pairs:
\begin{equation}
  \begin{split}
    H=\left(
\begin{array}{cccc}
 0&0&0&0\\
 0&0&0&0\\
  0&0&0&0\\
 0&0&0&0\\
\end{array}
    \right)
  \end{split},T=-i\sigma_0\otimes\sigma_yK,S=\sigma_0\otimes\sigma_z,C_4=\left(
\begin{array}{cccc}
 e^{\frac{3 i \pi }{4}} & 0 & 0 & 0 \\
 0 & e^{-\frac{3 i \pi}{4} } & 0 & 0 \\
 0 & 0 & e^{\frac{i \pi }{4}} & 0 \\
 0 & 0 & 0 & e^{-\frac{ i \pi }{4}} \\
\end{array}
\right).
\end{equation}
It means the value of topological invariants $(\mathbbm Z^{\pm\frac{1}{2}},\mathbbm Z^{\pm\frac{3}{2}})$ protected by compatible $C_4$ is $(1,1)$.

If the symmetry representation of $C_{4z}$ is
\begin{equation} 
  C_4=\begin{pmatrix}
  \tau_0&0\\
  0&i\tau_y\\
\end{pmatrix}\otimes\mathrm{e}^{\frac{i\pi\sigma_z}{4}}.
\end{equation}
A symmetric mass field under $U_s(C_4)$ is $M_\rr=m_\rr[\sin(\theta)\gamma_x\tau_0\sigma_0+\cos(\theta)\gamma_y\tau_y\sigma_0]$.
The rest of the process is the same as above, i.e., convert $H$ into the Schrödinger equation, solve the eigenvectors of zero energy and determine its angular momentum.

Furthermore, we show how to construct this $C_4$ representation from a tight-binding model.
On the surface, we consider a tight-binding model whose base are the edge modes of one-dimensional AIII class systems with $z^{\exp[i\pi/4]}=1$.
In order to maintain time-reversal symmetry, we need to introduce AIII class systems with $z^{\exp[-i\pi/4]}=-1$ at the same time.
We introduce the hopping terms between bound Majorana zero modes at the endpoints of 1D TSC while maintaining $C_4$ symmetry.
The onsite term is set to be zero.
\begin{equation}
    H^{\rm{sur}}=i\sum_{r}(\gamma_{\rr,s}\gamma_{\rr+\bm{a_x},s^\prime}(\sigma_x)^{ss^\prime}+\gamma_{\rr,s}\gamma_{\rr+\bm{a_y},s^\prime}(\sigma_y)^{ss^\prime}).
\end{equation}
The resulting two-band Hamiltonian in $\kk$-space is
\begin{equation}
    H^{\rm{sur}}(\kk)=\sin k_x\sigma_x+\sin k_y \sigma_y.
\end{equation}
There are four Dirac-cone at four TRIMs of the surface Brillouin zone.
We expand around four TRIMs, and four effective Hamiltonians are
\begin{equation}
H^{\Gamma}=\kk\cdot\ssigma,H^{M}=-\kk\cdot\ssigma,H^{X}=\kk\cdot\ssigma^*,H^{Y}=-\kk\cdot\ssigma^*.
\end{equation}
The flavor degrees of freedom refer to the “ID number” of pristine-TSCs. On the surface, it corresponds to the ID number of the Majorana cones. In the absence of translational symmetry, all Majorana cones at high symmetry points can be folded back to the Gamma point. That is to say, Majorana cones at different points can be viewed as different flavor degrees of freedom.
Thus, we can get "four-copy" Hamiltonian and corresponding symmetry representations:
\begin{equation}\resizebox{.95\hsize}{!}{$
H=
\begin{pmatrix}
\kk\cdot\ssigma&0&0&0\\
0&-\kk\cdot\ssigma &0&0\\
0&0&\kk\cdot\ssigma^*&0\\
0&0&0&-\kk\cdot\ssigma^*\\
\end{pmatrix},
T=i\begin{pmatrix}
\sigma_y&0&0&0\\
0&\sigma_y&0&0\\
0&0&\sigma_y&0\\
0&0&0&\sigma_y\\
\end{pmatrix}K,
S=\begin{pmatrix}
\sigma_z&0&0&0\\
0&\sigma_z&0&0\\
0&0&\sigma_z&0\\
0&0&0&\sigma_z\\
\end{pmatrix},
C_{4z}=\begin{pmatrix}
e^{i\frac{\pi}{4}\sigma_z}&0&0&0\\
0&e^{i\frac{\pi}{4}\sigma_z}&0&0\\
0&0&0&e^{i\frac{\pi}{4}\sigma_z}\\
0&0&e^{i\frac{\pi}{4}\sigma_z}&0\\
\end{pmatrix}.$}
\end{equation}
The off-diagonal $C_{4z}$ representation exchanges the Majorana cones at $X$ point and $Y$ point. Its surface state is a Majorana zero mode with $C_{4z}$ eigenvalues $\exp[i\pi/4\sigma_z]$.

\subsection{Bubble Equivalence for Incompatible $n$-fold Rotation and Compatible Three-fold Rotation}\label{SM Bubble Equivalence for Incompatible $n$-fold Rotation and Compatible Three-fold Rotation}
In this Appendix, we show that some states with non-zero one-dimensional topological invariants can be trivialized in the absence of translational symmetry.
Those ``spurious'' states that can be annihilated under ``bubbling''\cite{song2019topological}, easily misidentified for topological ones in some other methods \cite{fang2017topological,ahn2020unconventional}.

We consider a two-dimensional class DIII superconductor and drill a hole in the middle.
The system has rotational, time-reversal, and chiral symmetries which are implemented as
\begin{equation}\label{eq:sym}
T=-is_yK,\ \ R_{n,m}=\exp[\frac{2iL_z\pi}{n}+\frac{mis_z\pi}{n}],(m\,{{\rm{mod}}}\ n)\in {{\rm odd}}.
\end{equation}
If $n\in{\rm{even}}$ and rotational symmetry commutes with chiral symmetry, then $S$ is implemented as $s_z$.
Thus, the effective Hamiltonian along the edge of the hole is
\begin{equation}\label{eq:ham}
H(\theta)=\{L_z,\cos(m\theta)s_x+\sin(m\theta)s_y\}/2,\ \ L_z=-i\partial_\theta.
\end{equation}
We make an ansatz for this effective Hamiltonian in the form of
\begin{equation}
|\Phi_l\rangle=[a e^{il\theta},\ be^{(l+m)\theta}]^T.
\end{equation}
The Schrodinger equation $H(\theta)|\Phi\rangle=E_l|\Phi\rangle$ becomes
\begin{equation}
\left[\begin{array}{c}\frac{2l+m}{2}b\\\\\frac{2l+m}{2}a\end{array}\right]=E_l\left[\begin{array}{c}a\\\\b\end{array}\right]
\end{equation}
From above, the spectrum $E_l$ and corresponding rotation eigenvalues $e_i$ of eigenstates are
\begin{equation}
E_{l,\pm}=\pm\frac{2l+m}{2}, e_{\pm}=\exp(\mp i\frac{m\pi}{n}).
\end{equation}
Because of the fact $m\in{{\rm odd}}$, the spectrum $E_l$ is gapped.

Interestingly, we notice that in Eq.~(\ref{eq:ham}), if we replace $m$ with $m-n$, the Hamiltonian $H^\prime(\theta)$ remains a valid, symmetric Hamiltonian under $R_{n,m}$.
Let us take $m\in {\rm{odd}}$ and $n\in{\rm{odd}}$, then the spectrum $E^\prime_k$ of $H^\prime$ is
\begin{equation}
E^\prime_{k,\pm}=\pm\frac{2k+m-n}{2},
\end{equation}
which becomes zero for $k=(n-m)/2$, and the wave functions are
\begin{equation}
|\Phi^\prime_{(n-m)/2,\pm}\rangle=[e^{i(n-m)\theta/2},\pm{e}^{i(m-n)\theta/2}]^T.
\end{equation}
From Eq.~(\ref{eq:sym}), we find that the rotation eigenvalues of the two states are $e_+=e_-=1$ corresponding to $z^{\pm\frac{3}{2}}=1$ state protected by the compatible $C_3$.
It means the Majorana Kramers pairs protected by compatible $C_3$ with eigenvalues $\pm1$ is equivalent to a ``tube'', whose radius can be expanded to infinity in the absence of translational symmetry and thus be trivial.

If $S$ anticommutes with $R_{n,m}$, which is only possible when $m=n/2$ and $n\in{\rm{even}}$, the chiral symmetry is implemented as $s_x$.
The effective Hamiltonian along the edge of the hole becomes
\begin{equation}
H(\theta)=L_zs_z,
\end{equation}
with a gapless spectrum $E_0=0$ and $E_{n,\pm}=\pm{n}$.
It means the Majorana Kramers pairs protected by incompatible $C_n$ is equivalent to a ``tube'', whose radius can be expanded to infinity in the absence of translational symmetry and thus be trivial.
This is consistent with the analysis of the surface mass field in the Appendix.~\ref{SM TCSC with rotation-odd pairing potential}.

\subsection{The patchwork for $S_4$ protected state}
In this appendix, we show that the $S_4$-protected $\mathbbm{Z}_4=2$ TCSC with $B$-pairing symmetry does not host the gapless modes, from the perspective of mass field and patchwork, respectively.

\begin{figure}[h]
\centering 
\includegraphics[width=0.35\textwidth]{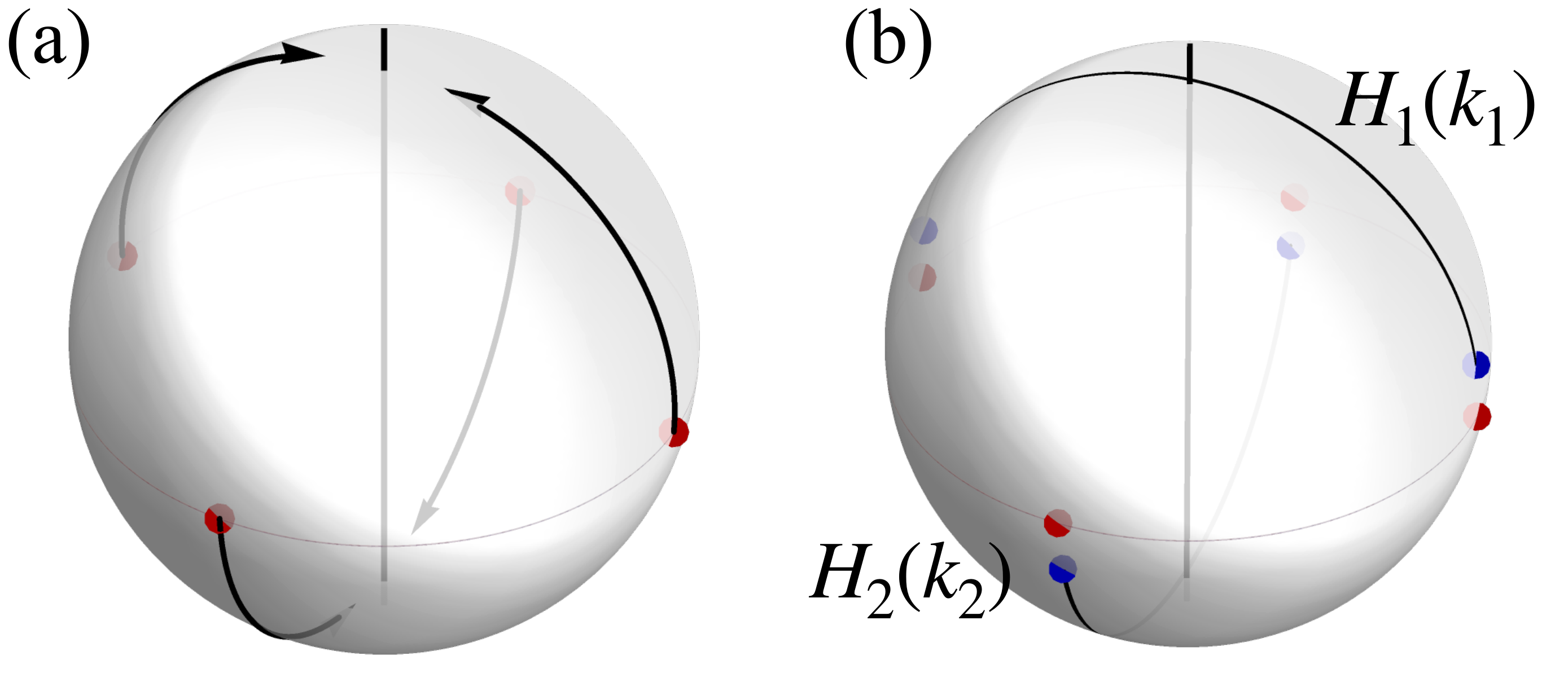}
\caption{\label{S4patchwork}(a) shows how to move the zero point to the north and south poles while maintaining the $S_4$ symmetry.
(b) shows how to use patchwork to annihilate the surface Majorana zero modes.}
\end{figure}

Firstly, from perspective of the mass field, the mass filed of this $\mathbbm{Z}_4=2$ state has four zeros, that is protected by nontrivial fundamental group $\pi_1$ and can locate at the equator of the sphere.
Each zero is surrounded by a mass field with the winding number of $1$.
We can move two of the zeros to the north pole and the other two to the south poles, as shown in Fig.~\ref{S4patchwork} (a).
Then, there is a double mass vortex with the winding number of $2$, which can be described by $M_\rr=m_\rr[\sin(2\theta)\gamma_x\tau_0\sigma_0+\cos(2\theta)\gamma_y\tau_y\sigma_0]$.
It is similar to the TCSC with $C_4$-even pairing symmetry, in Eqs.~(\ref{C4 example1}-\ref{C4 example3}).
After the same procedure, we can see that this double mass vortex host two pairs of Majorana zero modes, with the $S_4$ eigenvalues $e^{\pm i\pi/4}$ and $e^{\pm 3i\pi/4}$.
However, the north pole transforms to south pole under $S_4$.
Thus, there is only the $C_2$ symmetry at the north and south poles, and hence the two pairs of Majorana zero modes can hybrid with each other and open the gap.

Secondly, from perspective of the patchwork, we can symmetrically place two one-dimensional TSCs to gap out the four pairs of Majorana zero modes, as shown in Fig.~\ref{S4patchwork} (b).
The Hamiltonians of the one-dimensional TSC and corresponding symmetry representations are
\begin{equation}
\begin{aligned}
&H_1(k_1)=(t\cos(k_1)-\mu)\sigma_0t_z+\Delta\sin(k_1)\sigma_xt_x,\\
&H_2(k_2)=(t\cos(k_2)-\mu)\sigma_0t_z+\Delta\sin(k_2)\sigma_yt_x,\\
&T=i\sigma_yt_0K,\  P=\sigma_yt_yK,\  S=\sigma_0t_yK,\  S_4=e^{i\sigma_z\pi/4},\  C_2=i\sigma_z,\\
\end{aligned}
\end{equation}
where $k_1$ and $k_2$ are the momentums defined at two one-dimensional TSC.
Because there are only time-reversal and particle-hole symmetries on the equator, the four pairs of Majorana zero modes have a $\mathbbm{Z}_2$ classification and thus can be annihilated.

\section{"Wigner D"-method for constructing random mass field on the sphere}
In this appendix, we introduce a method to generate the random mass field on the sphere.

First, we can get a reducible representation $D^{l_{\text{max}}}(g)$ of point-group under Spherical harmonics $\{Y_l^m(\theta,\phi)\},l=0,..,{{l_\text{max}}},m=-l,...,l$ basis.
\begin{equation}
\hat{g} \left( \begin{array}{c}Y_0^0(\theta,\phi)\\Y_{-1}^1(\theta,\phi)\\\vdots\\Y_{l_{\text{max}}}^{l_{\text{max}}}(\theta,\phi) \end{array} \right)
\equiv
 \left( \begin{array}{c}Y_0^0(R_g(\theta,\phi))\\Y_{-1}^1(R_g(\theta,\phi))\\\vdots\\Y_{l_{\text{max}}}^{l_{\text{max}}}(R_g(\theta,\phi)) \end{array} \right)
 =
 D^{l_{\text{max}}}(g)\left( \begin{array}{c}Y_0^0(\theta,\phi)\\Y_{-1}^1(\theta,\phi)\\\vdots\\Y_{l_{\text{max}}}^{l_{\text{max}}}(\theta,\phi) \end{array} \right)
\end{equation}
The Wigner $D$-matrix $D^l$ is a unitary matrix in an irreducible representation of the groups SU(2) and SO(3).
Now, we can get a set of basis for constructing irreducible representations from arbitrary functions.

Assuming we have a $N*N$ sufrace stacking representation $\bm{F}$, we can get the transformation matrix element $O_{ij}(g)$ under a complete set of mass term $\Lambda_k=\{\tau_x\otimes \delta_{ij}\},i(j)=1,..,N/2,k=1,...,N^2/4$, by
\begin{equation}
  O_{ij}^g = \frac{1}{2}{\rm{tr}}[{\Lambda_iF(g)}\Lambda_jF^{\dagger}(g)].
\end{equation}
$O_{ij}(g)$ is a linear representation of point-group $G$ such that it can be reduced to irreducible representations of point groups:
\begin{equation}
O_{ij}(g)=V [\bigoplus_p M^p(g)] V^\dagger,i(j)=1,...,N^2/4.
\end{equation}
Without loss of generality, we only discuss one of irreducible representations $M^{1}(g)$ in $O_{ij}$ whose mass basis are $V^1_{m,n}\Lambda_m,n=1,..,{\text{dim}\,M^1,m=1,...,N^2/4}$.
The $D^{l_{\text{max}}}(g)$ is also a linear representation of point-group so it can be reduced to irreducible representations of point groups:

\begin{equation}
D_{ij}(g)=U [\bigoplus_q M^q(g)] U^\dagger,i(j)=1,...,\sum_{l}^{l_{\text{max}}}(2l+1).
\end{equation}
In the direct sum decomposition of $D_{ij}(g)$, there are $K$ irreducible representations $M^{1\dagger}(g)$, respectively denoted as $M^{1,k}(g),k=1,...,K$, whose eigenfunctions denoted as $\Psi^{1,k}_i,i=1,...,{\text{dim}\,M^1}$:
\begin{equation}
\hat{g}[(\Psi^{1,k}_i)_r  Y_{l_r}^{m_r}(\theta,\phi)]=[M^{1\dagger}(g)]_{ji}[(\Psi^{1,k}_j)_r  Y_{l_r}^{m_r}(\theta,\phi)].
\end{equation}

We randomly generate $K$ real numbers $\{R^1_k\},k=1,...,K$ between $-1$ and $+1$.
For each irreducible basis $\Psi^{1,k}_i$, we multiply by a random number $R^1_k$ and add them up: $\sum_k R_k^1\Psi^{1,k}_i$.
They constitute a set of (random) basis for the representation of $M^1$:
\begin{equation}
\hat{g}[(\sum_{k,r} R_k^1\Psi^{1,k}_i)_r  Y_{l_r}^{m_r}(\theta,\phi)]=\sum_{k,r} R_k^1\left(\hat{g}[(\Psi^{1,k}_i)_r  Y_{l_r}^{m_r}(\theta,\phi)]\right)=[M^{1\dagger}(g)]_{ji}[(\sum_{k,r} R_k^1\Psi^{1,k}_j)_r  Y_{l_r}^{m_r}(\theta,\phi)],
\end{equation}
The corresponding mass field is 
\begin{equation}
\left((\sum_{k,r} R_k^1\Psi^{1,k}_i)_r  Y_{l_r}^{m_r}(\theta,\phi)\right)V_{w,i}^1\Lambda_w.
\end{equation}
We can easily prove that the above mass field is invariant on the sphere:
\begin{equation}
\begin{aligned}
\hat{g}\left(\sum_i(\sum_{k,r} R_k^1\Psi^{1,k}_i)_r  Y_{l_r}^{m_r}(\theta,\phi)\right)\sum_wV_{w,i}^1\Lambda_w&\equiv
\left(\sum_i(\sum_{k,r} R_k^1\Psi^{1,k}_i)_r  Y_{l_r}^{m_r}(R_g(\theta,\phi))\right)U^{\text{surf}}(g)\sum_wV_{w,i}^1\Lambda_wU^{\text{surf}\dagger}(g)\\
&=\sum_i\sum_j[M^{1\dagger}(g)]_{ji}[(\sum_{k,r} R_k^1\Psi^{1,k}_j)_r  Y_{l_r}^{m_r}(\theta,\phi)]\sum_l[M^{1}(g)]_{il}\sum_wV_{w,l}^1\Lambda_w\\
&=\sum_j[M^{1\dagger}(g)M^{1}(g)]_{jl}[(\sum_{k,r} R_k^1\Psi^{1,k}_j)_r  Y_{l_r}^{m_r}(\theta,\phi)]\sum_l\sum_wV_{w,l}^1\Lambda_w\\
&=\sum_j\delta_{jl}[(\sum_{k,r} R_k^1\Psi^{1,k}_j)_r  Y_{l_r}^{m_r}(\theta,\phi)]\sum_wV_{w,l}^1\Lambda_w\\
&=[(\sum_{k,r} R_k^1\Psi^{1,k}_l)_r  Y_{l_r}^{m_r}(\theta,\phi)]\sum_wV_{w,l}^1\Lambda_w.
\end{aligned}
\end{equation}

Doing the above process for all irreducible representations $M^{p}(g)$ in $O_{ij}(g)$, we get a set of random eigenfunctions formed $M^{p}(g)$ representations.
Finally, a random symmetrical mass-field (eigenfunctions) formed $O_{ij}$ mass representation is
\begin{equation}
{\textbf{M}}(\theta,\phi)=\sum_t\left((\sum_{k,r} R_k^t\Psi^{t,k}_i)_r  Y_{l_r}^{m_r}(\theta,\phi)\right)V_{w,i}^t\Lambda_w,
\end{equation}
where $t$ runs over all the inequivalent irreducible representations in mass representation $O_{ij}$, $k$ runs over all irreducible appearing in Wigner-$D$ representation, $r$ runs over dimension of Wigner-$D$ representation, $i$ runs over dimension of irreducible representations $t$ and $w$ runs over all possible mass terms.

Following this step, we can obtain a random mass field that satisfies any point group symmetry and pairing symmetry on the sphere.
Fig.~2 in the main text is an example of two random mass fields protected by inversion and $S_4$, respectively.

\begin{figure}[h]
\centering 
\includegraphics[width=0.45\textwidth]{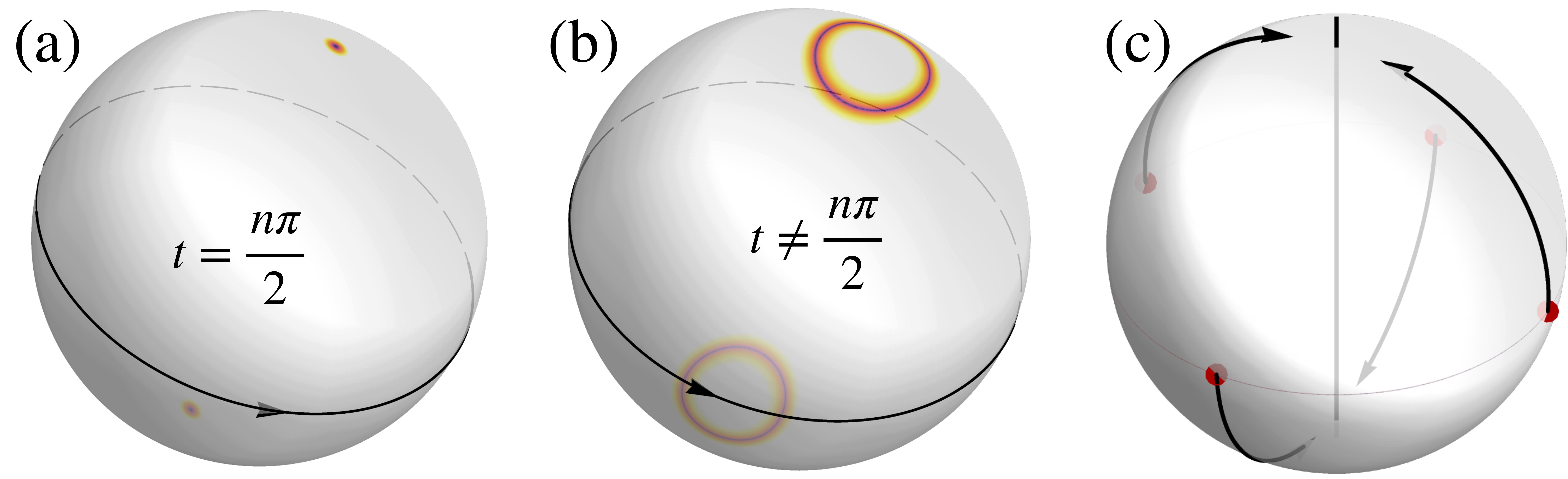}
\end{figure}

\section{Rings, Modules and Semimodules}
In this appendix, we briefly introduce the definition and some fundamental theorems of module, that are frequently used in the main text.

\subsection{Definitions of Modules and Semimodules }
First, let us recall the familiar concept: field $F$, which is a set, equipped with two operations: addition $+$ and multiplication $\times$:
\begin{equation}
F+F\rightarrow F,\quad F\times F\rightarrow F.
\end{equation}
For both operations, the inverse elements are in the set, i.e.,
\begin{equation}
\begin{split}
\forall a \in F,\quad \exists b\in F \mathrm{\ such\  that\ } a+ b=0\\
\forall a\ne 0\in F,\quad \exists c\in F\mathrm{\ such\  that\ } a\times c=1\\
\end{split}
\end{equation}
The two most common examples of field are the real number field $\mathbbm{R}$ and the complex number field $\mathbbm{C}$.
The linear space is defined over a field $F$, which is also a set equipped with two operations: addition and multiplication by an element of $F$.
The elements of linear space satisfy the following axioms:
\begin{equation}
\begin{aligned}
&(1)\ \ {\bf{u}}+{\bf{v}}={\bf{v}}+{\bf{u}}.\\
&(2)\ \ {\bf{u}}+({\bf{v}}+{\bf{w}})= ({\bf{u}}+{\bf{v}})+{\bf{w}}.\\
&(3)\ \ \mathrm{There \ exists \ a\ zero\ vector\ {\bf{0}}\ such\ that\ {\bf{0}}+{\bf{u}}={\bf{u}}}.\\
&(4)\ \ \mathrm{For\ any \ {\bf{u}},\ there\ exists\ -{\bf{u}}\ such\ that\ {\bf{u}}+(-{\bf{u}})={\bf{0}}}.\\
&(5)\ \ (a+b)({\bf{u}}+{\bf{v}})=a{\bf{u}}+a{\bf{v}}+b{\bf{u}}+b{\bf{v}}.\\
&(6)\ \ (cd){\bf{u}}=c(d{\bf{u}}).\\
&(7)\ \ 1{\bf{u}}={\bf{u}}.
\end{aligned}
\end{equation}

The concepts of \textit{rings} and \textit{modules} are similar to the field and linear space, except for there are no inverse elements for multiplication.
In other words, the rings $R$ and modules are not closed under division.
The integer $\mathbbm{Z}$ is the most common example of ring.
The analog for a ring of a linear space is the module.
A module over a ring $R$ is called $R$-module.

Furthermore, the concepts of \textit{semirings} and \textit{semimodules} are similar to the rings and modules, except for there are no inverse elements for addition.
In other words, the semirings $R$ and semimodules are not closed under division and subtraction.
The non-negative integer $\mathbbm{N}$ is the most common example of semiring.
In our classification theory, all irreducible projective co-representations as basis span a semimodule, of which only addition (direct sum) is well-defined.
However, if we lift this condition of non-negativity and recover semimodule to module, all results in our main text remains unchanged.

\subsection{The relationship between Abelian Group and Module}
\subsubsection{Finitely generated/Free Abelian groups}
The Abelian group is a group in which all group elements are commutative under group operation $+$.
The unit element is denoted as $0$.
Take $l$ elements $\ee_1,...,\ee_l$ of $G$.
The elements of $G$ which have form
\begin{equation}
H=\{\sum_{i=1}^{l}n_i\ee_i|n_i\in \mathbbm{Z}\}
\end{equation}
form a sub-Abelian group $H$.
$H$ is called a Abelian group finitely generated by the generators $\ee_1,...,\ee_l$, if $l$ is a finite number.
If $\ee_1,...,\ee_l$ are linearly independent to each other, $H$ is called a free Abelian group of rank $l$.
There is a fundamental theorem for the group structure of Abelian group:
\begin{theorem}
{\bf{Structure Theorem for Abelian group}}
Let $G$ be a finitely generated Abelian group. $G$ is isomorphic to the direct sum of cyclic groups,
\begin{equation}\label{group structure of Abelian group}
G\cong \overbrace{\zz\oplus\cdots\oplus\zz}^q\oplus\zz_{k_1}\oplus\cdots\oplus\zz_{k_p},
\end{equation}
where $\zz$ and $\zz_{k_i}$ are free Abelian group of rank 1 and cyclic group, respectively.
\end{theorem}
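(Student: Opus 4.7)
The plan is to follow the classical route via presentation by generators and relations, combined with Smith normal form over $\mathbbm Z$. First I would choose a finite generating set $e_1,\ldots,e_n$ of $G$ (which exists by hypothesis) and construct the surjective homomorphism $\pi:\mathbbm Z^n\to G$ sending the standard basis to the generators. This reduces the problem to understanding $\mathrm{Ker}\,\pi$ as a subgroup of the free Abelian group $\mathbbm Z^n$, because the first isomorphism theorem then gives $G\cong\mathbbm Z^n/\mathrm{Ker}\,\pi$.

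The next step is the lemma that every subgroup $K$ of $\mathbbm Z^n$ is itself free Abelian of rank $m\le n$. I would prove this by induction on $n$: project $K$ onto the last coordinate, obtaining a subgroup of $\mathbbm Z$ which is either trivial or cyclic generated by some $d$; lift a generator and split $K$ as a direct sum of a rank-one piece with the kernel of the projection, which lies in $\mathbbm Z^{n-1}$ and is free by induction. Choosing a basis for $K$ then expresses the inclusion $K\hookrightarrow\mathbbm Z^n$ as an integer $n\times m$ matrix $A$.

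The central technical step is the Smith normal form theorem: there exist invertible integer matrices $P\in\mathrm{GL}(n,\mathbbm Z)$ and $Q\in\mathrm{GL}(m,\mathbbm Z)$ such that $PAQ$ is diagonal with entries $d_1\mid d_2\mid\cdots\mid d_m$ (possibly zero). I would establish this by the standard algorithm: repeatedly perform elementary row and column operations over $\mathbbm Z$ to minimize the absolute value of the $(1,1)$ entry using the Euclidean algorithm, clear its row and column, and recurse on the $(n-1)\times(m-1)$ minor. The divisibility $d_i\mid d_{i+1}$ is enforced by an additional reduction step whenever some later entry is not divisible by the current pivot. This is the step I expect to be the most laborious, since verifying termination and divisibility requires a careful induction on the minimum absolute value of nonzero entries.

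Finally, the change of basis $P,Q$ amounts to replacing the generators of $\mathbbm Z^n$ and $K$ by new bases in which the inclusion is diagonal. Then
\begin{equation}
G\cong \mathbbm Z^n/K \cong \bigoplus_{i=1}^{m}\mathbbm Z/d_i\mathbbm Z \;\oplus\; \mathbbm Z^{n-m},
\end{equation}
and discarding factors with $d_i=1$ (which are trivial) and relabeling the remaining $d_i>1$ as $k_1,\ldots,k_p$ while setting $q=n-m+\#\{i:d_i=0\}$ yields the claimed decomposition \eqref{group structure of Abelian group}. The existence part of the theorem is thereby complete; uniqueness of $q$ and the $k_i$ (not asserted in the statement but natural to remark) would follow from tensoring with $\mathbbm Q$ to pin down $q$, and from examining $p$-torsion subgroups to pin down the $k_i$.
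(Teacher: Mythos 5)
Your proof is correct and is the standard textbook argument. Note, however, that the paper does not actually prove this theorem: it is quoted in the supplemental appendix on modules as a known ``fundamental theorem'' and used as a black box, so there is no in-paper proof to compare against step by step. That said, your route via a presentation $\mathbbm Z^n\twoheadrightarrow G$, freeness of subgroups of $\mathbbm Z^n$, and Smith normal form is exactly the machinery the paper implicitly relies on in its appendix on finding generators of the classification group, where the allowed operations (adding an integer multiple of one row to another, swapping rows, multiplying a row by $-1$) and the Euclidean-style reduction of the pivot are precisely the elementary integer operations realizing your matrices $P$ and $Q$. One minor remark: since your relations matrix $A$ encodes an injective inclusion $K\hookrightarrow\mathbbm Z^n$ of a free group of rank $m$, it has full column rank and none of the $d_i$ can vanish, so the term $\#\{i:d_i=0\}$ in your formula for $q$ is automatically zero; allowing for it is harmless but unnecessary in this formulation. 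Your closing remark on uniqueness (rank via tensoring with $\mathbbm Q$, torsion coefficients via $p$-torsion) is also correct, though, as you note, not required by the statement.
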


Our classification group $\mathcal C$ is an Abelian group, so it isomorphic to a direct sum of cyclic groups and a free Abelian group.

\subsubsection{Finitely generated/Free modules}
An Abelian group can be made into a module over integer $\zz$ in a canonical way:
\begin{equation}
n\times \ee\equiv\ee+\cdots+\ee,\quad (-n)\times\ee\equiv -(n\ee),\  n\in \zz^+.
\end{equation}
Conversely, any $\zz$-module is an Abelian group if one forgets its multiplication.
Based on this canonical way, we treat the $\zz$-module and Abelian group as the equivalent concepts.
So, the module has the same structure theorem as Abelian group.

\subsection{Homomorphism of $R$-modules}
The definition of a homomorphism $\phi:\ V\rightarrow W$ of $R$-module is same with that of linear space, i.e., if $\phi$ preserves the algebraic structure (addition and multiplication):
\begin{equation}
\phi(v+v^\prime)=\phi(v)+\phi(v^\prime),\ \phi(rv)=r\phi(v),
\end{equation}
then $\phi$ is called a homomorphism.
Moreover, if $\phi$ is bijective, $\phi$ is called an isomorphism.
The kernel of a homomorphism $\phi$, $\{\phi(v)=0|v\in V\}$, is a submodule of the domain $V$.
The image of a homomorphism $\phi$, $\{\phi(v)|\forall v\in V\}$, is a submodule of the range $W$.

Due to the Abelian group and module are the equivalent concepts, we can extend the quotient construction for groups to modules.
Let $W$ is submodule of $V$, the quotient module $V/W$ is the group of additive cosets $\{[v+W]\}$.
$W$ defines the equivalence relation such that the $V$ can be divided into several disjoint subsets.
The element $v$ (or any elements in $[v+W]$) is called the \textit{representative} of a class $[v+W]$.

\subsection{Finding a finite set of generators for an Abelian group}\label{Finding a finite set of generators for an Abelian group}
The $n$-th order invariants $v_n(g)$ defines a homomorphism, as shown in main text.
The image of $v_n$ is a subgroup $\mathcal C$ of group of direct sum of $\zz/\zz_2$ protected by different group actions.
The image of $n$-th order irreducible building blocks $v(\bb_{n,i})$ gives a set of overcomplete basis for $\mathcal C$.
We need to find a set of "basis" to determine the group structure of $\mathcal C$, i.e., $q$ and $k_i$ in Eq.~(\ref{group structure of Abelian group}).
However, we can not use the row reduction operations used in solving the linear equations.
These operations are restricted to the $\zz$.
We can only add an integer multiple of one row to another, interchange two rows, and multiply a row by $-1$.

\paragraph{Step 1:} By exchanging the rows, we move a row vector with first smallest absolute value to the first row.
We multiply $-1$ when $v_{11}$ is negative, such that the first value of first row is positive.

\paragraph{Step 2:} Then, we try to clear out the first columns, i.e., subtract the first row if $v_{i,1}\ge v_{1,1}$.
Whenever the remaining number is smaller than $v_{11}$, we go back to the Step 1.

After a finite number of repetitions of Step1 and Step2, we get its row echelon form.
Each nonzero row vectors of the reduced matrix is a "basis" of the classification group $\mathcal C$.

\end{document}